\documentclass{lmcs}

\keywords{computable analysis, certified exact real computation, hyperspaces, program extraction, proof assistants}

\usepackage{url}
\usepackage{hyperref}
\AtBeginDocument{}%

\usepackage{caption}
\usepackage{subcaption}
\usepackage{amssymb}
\usepackage{xcolor}
\usepackage{listings}
\lstset{
  frame=none,
  xleftmargin=2pt,
  belowcaptionskip=\bigskipamount,
  captionpos=b,
  escapeinside={*'}{'*},
  language=haskell,
  tabsize=2,
  emphstyle={\bf},
  commentstyle=\it,
  stringstyle=\mdseries\rmfamily,
  showspaces=false,
  keywordstyle=\bfseries\rmfamily,
  columns=flexible,
  basicstyle=\small\sffamily,
  showstringspaces=false,
  morecomment=[l]\%,
}

\definecolor{dkgreen}{rgb}{0,0.6,0}
\definecolor{ltblue}{rgb}{0,0.4,0.4}
\definecolor{dkviolet}{rgb}{0.3,0,0.5}

\newcommand{\coqcode}[1]{{\textsf{\textup{#1}}}}

\newcommand{\myexists}[2]{
{\color{dkgreen}\mathtt{exists}}
\ #1 ,\ 
#2
}
\newcommand{\myforall}[2]{
{\color{dkgreen}\mathtt{forall}}
\ #1 ,\ 
#2
}

\usepackage{pgfplots}
\pgfplotsset{compat=1.18}
\usepgfplotslibrary{groupplots}

\newcommand{\IN}{\mathbb{N}}
\newcommand{\IR}{\mathbb{R}}
\newcommand{\bR}{\mathbf{R}}

\newcommand{\IQ}{\mathbb{Q}}
\newcommand{\Casm}{\mathsf{Asm}(\mathcal{K}_2)}

\newcommand{\Baire}{\IN^\IN}

\newcommand{\cff}{\mathit{ff}}
\newcommand{\ctt}{\mathit{tt}}
\newcommand{\abs}[1]{\left|#1\right|}

\newcommand{\pto}{\rightharpoonup}
\newcommand{\dom}{\mathrm{dom}}


\newcommand{\bN}{\mathbf{N}}

\newcommand{\bK}{\mathbf{K}}

\newcommand{\dR}{\mathsf{R}}

\newcommand{\dK}{\mathsf{K}}
\newcommand{\dS}{\mathsf{S}}
\newcommand{\dN}{\mathsf{N}}
\newcommand{\dZ}{\mathsf{Z}}
\newcommand{\dD}{\mathsf{D}}
\newcommand{\dB}{\mathsf{B}}
\newcommand{\dX}{\mathsf{X}}

\newcommand{\kleene}{\mathsf{K}}
\newcommand{\Type}{{\color{ltblue}\mathtt{Type}}}
\newcommand{\Prop}{{\color{ltblue}\mathtt{Prop}}}
\newcommand{\ttrue}{\mathtt{True}}
\newcommand{\tfalse}{\mathtt{False}}

\newcommand{\all}[1]{\mathrm{\Pi}( #1).\ }
\newcommand{\lall}[1]{{\forall}( #1).\ }
\newcommand{\allx}{\mathrm{\Pi}}
\newcommand{\some}[1]{\mathrm{\Sigma}( #1).\ }
\newcommand{\somex}{\mathrm{\Sigma}}
\newcommand{\csome}[1]{\exists(#1 ).\ }
\newcommand{\usome}[1]{\exists!(#1 ).\ }

\newcommand{\lam}[1]{\lambda ( #1 ).\ }


\newcommand{\ctrue}{\textsf{true}}
\newcommand{\cfalse}{\mathsf{false}}
\newcommand{\cundef}{\mathsf{bot}}

\newcommand{\Csubset}{\mathrm{subset}}

\newcommand{\sopen}{\mathrm{open}}
\newcommand{\sclosed}{\mathrm{closed}}
\newcommand{\scompact}{\mathrm{compact}}
\newcommand{\sovert}{\mathrm{overt}}

\newcommand{\isopen}{\mathrm{is\_open}}
\newcommand{\isclosed}{\mathrm{is\_closed}}
\newcommand{\iscompact}{\mathrm{is\_compact}}
\newcommand{\isovert}{\mathrm{is\_overt}}
\newcommand{\islocated}{\mathrm{is\_located}}
\newcommand{\istotallybounded}{\mathrm{is\_totally\_bounded}}
\newcommand{\iscomplete}{\mathrm{is\_complete}}

\newcommand{\ball}{\mathrm{B}}
\newcommand{\mval}{\mathsf{M}}

\newcommand{\upc}[1]{\lceil #1 \rceil}
\newcommand{\downc}[1]{\lfloor #1 \rfloor}
\newcommand{\dist}[2]{\| #1 - #2 \|}

\newcommand{\distxop}{\mathrm{d}_X}
\newcommand{\distx}[2]{\distxop \  #1\ #2}
\newcommand{\defineds}[1]{#1 \!\downarrow}
\newcommand{\definedx}{\downarrow}
\newcommand{\undefinedx}{\uparrow}
\newcommand{\undefineds}[1]{#1 \!\uparrow}
\newcommand{\skinv}{\mathsf{KtoS}}
\newcommand{\fproj}{\mathsf{proj}_1}
\newcommand{\creal}{\mathtt{CReal}}
\newcommand{\ckleenean}{\mathtt{CKleenean}}
\newcommand{\softwarename}[1]{\textrm{#1}}

\newcommand{\elim}{\underline{\lim}}
\newcommand{\semidec}{\mathrm{semidec}}
\newcommand{\restr}[2]{#1\restriction_{#2}}
\newcommand{\listof}[1]{[#1]}
\newcommand{\equivto}{ \mathrel{\coloneqq} }
\newcommand{\definedto}[2]{#1 \searrow #2 }

\newcommand{\lext}[1]{\overline{#1}}

\newcommand{\fname}[2]{\href{https://github.com/holgerthies/coq-aern/blob/bc11353f450cf866b47c3985eee6150a5f99cf00#2}{\textsf{#1}}}
\newcommand{\fnamefull}[2]{\href{#2}{\textsf{#1}}}

\newcommand{\coqref}[3]{\href{https://github.com/holgerthies/coq-aern/blob/bc11353f450cf866b47c3985eee6150a5f99cf00#2\#L#3}{\textsf{#1}}}

\newcommand{\inlinedef}[1]{\textbf{\emph{#1}}}

\newcommand{\liff}{\mathrel{\leftrightarrow}}


\theoremstyle{plain}
 \newtheorem{axiom}[thm]{Axiom}


\begin{document}
\title[Hyperspaces over Abstract Exact Real Numbers]
{Formalizing Hyperspaces and Operations on Subsets of Polish Spaces over Abstract Exact Real Numbers}

\thanks{Sewon Park was supported by JSPS KAKENHI (Grant-in-Aid for JSPS Fellows) JP22F22071 and by JST, CREST Grant Number JPMJCR21M3.
Holger Thies was supported by JSPS KAKENHI Grant Numbers JP20K19744, JP23K28036, and JP24K20735.
This project has received funding from the  European Union's  Horizon  2020  research  and innovation programme under the Marie Sk{\l}odowska-Curie grant  agreement No 731143 \hbox{\includegraphics[scale=0.04]{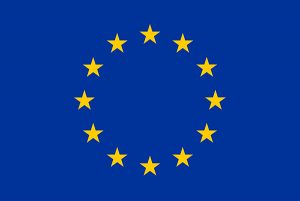}}.
}

\author[M.~Konečný]{Michal Konečný\lmcsorcid{0000-0003-2374-9017}}[a]
\author[S.~Park]{Sewon Park\lmcsorcid{0000-0002-6443-2617}}[b,c]
\author[H.~Thies]{Holger Thies\lmcsorcid{0000-0003-3959-0741}}[c]

\address{Aston University, Birmingham, UK}	
\email{m.konecny@aston.ac.uk}  

\address{Institute of Mathematics, Physics and Mechanics, Ljubljana, Slovenia}	
\email{sewon@sewonpark.com}
           
\address{Kyoto University, Kyoto, Japan}	
\email{thies.holger.5c@kyoto-u.ac.jp}  

\begin{abstract}
Building on our prior work on axiomatization of exact real computation by formalizing nondeterministic first-order partial computations over real and complex numbers  in a constructive dependent type theory, we present a framework for certified computation on hyperspaces of subsets by formalizing various higher-order data types and operations.

We first define open, closed, compact and overt subsets for generic spaces in an abstract topological way that allows short and elegant proofs with computational content coinciding with standard definitions in computable analysis and constructive mathematics.
From these proofs we can extract programs for testing inclusion, overlapping of sets, et cetera. 

To enhance the efficiency of the extracted programs, we then focus on Polish spaces, where we give more efficient encodings based on metric properties of the space.
As various computational properties depend on the continuity of the encoding functions, we introduce a nondeterministic version of a continuity principle which is natural in our formalization and valid under the standard type-2 realizability interpretation.
Using this principle we further derive the computational equivalence between the generic and the metric encodings.

We  prove that many operations on subsets preserve an efficient encoding inspired by standard definitions from constructive mathematics.
This leads to the development of a small calculus to build new subsets by operating on given ones, including computing limits of sequences that converge with respect to the Hausdorff metric.

Our theory is fully implemented in the Coq proof assistant.
From proofs in this Coq formalization, we can extract certified programs for error-free operations on subsets. 
As an application,  we provide a function that constructs fractals in Euclidean space, such as the Sierpinski triangle, from iterated function systems using the limit operation.
The resulting programs can be used to draw such fractals up to any desired resolution.

\end{abstract}
\maketitle   
\tableofcontents
\section{Introduction}

In exact real computation, based on the theoretical framework of computable analysis \cite{kreitz1985theory,w00}, real numbers are expressed by infinite representations that are manipulated exactly without rounding errors. 
In most software implementations of exact real computation \cite{Ariadne,irram,konecny2008AERN,clerical_ocaml}, real numbers are presented as an abstract data type, hiding tedious computations related to the infinite representation from users. 
Hence, to users, real numbers closely resemble familiar abstract real numbers \cite{DBLP:journals/mlq/Hertling99,escardoSimpson2001}.
This also makes the approach particularly well-suited for formal verification \cite{park2016foundation,clerical}.

In recent work \cite{konevcny2022extracting}, the authors have been working on an axiomatic formalization of exact real computation in constructive type theory and its implementation as the Coq library \softwarename{cAERN}. 
\softwarename{cAERN} aims to model real numbers similarly to how they are observed by users of exact real computation software. That is, instead of being constructed explicitly, e.g., as a setoid of sequences of approximations, real numbers are axiomatized such that two reals are propositionally equal, not just equivalent, when they represent the same numbers; cf. \cite{BLM16,cruz2004c,steinberg2019quantitative,DBLP:journals/apal/BergerT21,lmcs:11550}. 
This can be regarded as a modification of classical real numbers by replacing uncomputable operations, e.g., rounding to the nearest integer, and computably invalid axioms, e.g., the law of trichotomy, with computable variants. 
As the axiomatization is representation-irrelevant, reals appear similar to classical real numbers, allowing many classical results to be transported into this setting \cite[Section~6]{konevcny2022extracting}. At the same time, by allowing only computably valid axioms, it admits a program extraction mechanism built on top of an existing exact real computation framework.

The extraction is achieved by mapping the axiomatic real type $\dR$ to the abstract data type of real numbers $\creal$ in \softwarename{AERN}, a Haskell library for exact real number computation developed by one of the authors \cite{konecny2008AERN}. Thus, we do not need to focus on the concrete representation of real numbers and its efficient implementation, which is a challenging task on its own (see, e.g., \cite{MR2137733}). 
More concretely, in our system, a proof of a theorem of the form
\[
\all{x : \dR}\mval\some{y: \dR}P\ x\ y 
\]
yields a nondeterministic \softwarename{AERN} function of type $\creal \to \creal$. Here, $\mval :\Type \to \Type$ formalizes nondeterminism, which is an inevitable effect of identifying real numbers extensionally \cite{LUCKHARDT1977321}.\footnote{Nondeterminism in this context is also called multivaluedness \cite{w00} or non-extensionality \cite{10.1007/978-3-031-14788-3_5}.} 
Although many exact-real computation libraries are available, \softwarename{AERN} stands out as a suitable target library for this project.
This is supported by the fact that Haskell is one of the languages supported natively by Coq's program extraction mechanism, and \softwarename{AERN} has scored well in a benchmark of Haskell exact-real computation libraries \cite{haskell-reals-comparison-2022}.
Additionally, the abstract types and operations provided by \softwarename{AERN} are generic and flexible enough to be represented well by the axioms in the \softwarename{cAERN} library. 

In \cite{konevcny2022extracting} we provide various examples of how to build interesting first-order computations within the theory.
Many applications in exact real computation, e.g., reachability in dynamical systems, integrals, and solving differential equations, deal, however, with higher-order data types such as hyperspaces of functions and subsets \cite{collins2020}. 
In practice, such higher-order data types are defined based on abstract real numbers, relying on the underlying programming language's higher-order functionality \cite{DBLP:conf/ershov/BrausseKM15,kawamura2018parameterized,selivanova2021exact}. 
To argue the correctness of the defined data types, careful comparison of the implementations with the findings in constructive and computable mathematics \cite{BRATTKA200343,DBLP:conf/csl/GassnerP021,DBLP:journals/aml/IljazovicJ24,DBLP:journals/jla/CoquandS10,diener2008compactness} is necessary, which seems challenging to do for each implementation. Therefore, it is desirable to have a set of data types and their operations that are proven correct and can be used across various applications.

In this paper, we extend the previous axiomatic formalization to hyperspaces. 
For an arbitrary type, we define a type of classical subsets to which we attach computational meaning by introducing new types of open, closed, compact, and overt subsets from constructive mathematics. 
For more practical applications, we then restrict the arbitrary type to be a \inlinedef{Polish space}, i.e., a \emph{complete, separable metric space}, where we can define specialized encodings enabling more efficient computations. 
To ensure the correctness of these metric versions of the topological notions, we prove their (computational) equivalence to the general encodings.
The equivalence proofs depend on the fact that every computable function is continuous and thus requires a continuity principle in our theory.
Hence, we include a new axiom stating that any function \emph{nondeterministically} admits a modulus of continuity functional. We further show how this operation can be naturally realized in exact real computation software, specifically how to extract our continuity principle into a simple operation in \softwarename{AERN}.

In Euclidean space, a rather intuitive notion of computing a subset is to draw the set with arbitrary precision using arbitrarily small boxes. 
This idea can be extended to metric spaces, corresponding to the space being \emph{totally bounded}.
A space being totally bounded and complete is often referred to as \inlinedef{Bishop-compactness} \cite{bishop1967foundations}. Similar notions have been considered in computable analysis as representations of compact sets \cite{BRATTKA199965}. 
We formalize the proof that in our setting for a Polish space, Bishop-compactness constructively coincides with compact-overtness, and we take the type of compact-overt subsets as an important data type on which we prove various efficient operations that benefit from the total boundedness property, including limit operations based on the Hausdorff distance. 
As an application, we define a certain type of fractals  in Euclidean space given by iterated function systems as the limit of a recurrence relation on compact-overt sets. 
From the proof we extract a program that draws the fractal exactly up to any desired resolution.

\subsection{Overview of the Project}\label{ssec:overview} 
All results in this paper have been implemented in the Coq proof assistant as an extension of the \softwarename{cAERN} library. 
The Coq formalization itself is considered a significant contribution to this work and is therefore an integral part of the publication. 
The version associated with the paper is marked as release \cite{caern}, and most lemmas and definitions in the paper provide references to the corresponding Coq statements.
In the digital version, these references are hyperlinks that direct to the corresponding code segments in the GitHub repository.
The GitHub repository also includes installation instructions for the library and some additional examples.

The proofs in the paper are meant to convey the essence of the formal proofs in the formalization. 
Thus, although there might be more elegant proofs, the presentation is meant to reflect how we proved them with the tools available in our Coq library.
On the other hand, we do not aim to rephrase the full formal proofs. 
Some technical details have been simplified and a few simpler proofs even omitted completely.
The interested reader can check the details of these proofs by following the hyperlinks to the code segment in the GitHub repository.

The new formalization presented here consists of approximately 9000 lines of code out of approximately 30000 lines of code for the complete \softwarename{cAERN} library.

The file structure of the repository closely resembles the structure of the paper.
The paper is organized as follows. 
In \autoref{sec:preliminaries} we summarize the background system of our formalization and introduce notations used throughout the paper. 
In \autoref{sec:continuity} we define the continuity principle and prove some of its consequences.
We define open, closed, compact and overt subsets of an arbitrary type in \autoref{sec:subsets}.
We consider the case when the arbitrary type is a Polish space in \autoref{sec:polish}, outline some details of the code extraction in \autoref{s:extraction}, and finally show some applications generating certified drawings of subsets of two-dimensional Euclidean spaces in \autoref{sec:examples}.

The location of the corresponding files is as follows.
\begin{description}[style=nextline,font=\normalfont]
    \item[\fname{Hyperspace/Sierpinski.v}{/formalization/Hyperspace/Sierpinski.v}] for the properties of Sierpinski space described in \autoref{sec:preliminaries}
    \item[\fname{Hyperspace/Continuity.v}{/formalization/Hyperspace/Continuity.v}] for the results on Continuity in \autoref{sec:continuity}
    \item[\fname{Classical/Subsets.v}{/formalization/Classical/Subsets.v}] defines classical subsets and their operations from the beginning of \autoref{sec:subsets}
    \item[\fname{Hyperspace/Subsets.v}{/formalization/Hyperspace/Subsets.v}] for the definitions and properties of open, closed, compact and overt sets described in \autoref{sec:subsets}
    \item[\fname{Hyperspace/MetricSubsets.v}{/formalization/Hyperspace/MetricSubsets.v}]
    for the definition of metric spaces, the subset types on them and the proofs of equivalences outlined in \autoref{sec:polish}. 
    \item[\fname{Hyperspace/EuclideanSubsets.v}{/formalization/Hyperspace/EuclideanSubsets.v}, \fname{Hyperspace/SimpleTriangle.v}{/formalization/Hyperspace/Simpletriangle.v}, \fname{Hyperspace/SierpinskiTriangle.v}{/formalization/Hyperspace/SierpinskiTriangle.v}] and \fname{Hyperspace/SierpinskiTriangleLimit.v}{/formalization/Hyperspace/SierpinskiTriangleLimit.v} contain the examples of \autoref{sec:examples}.  Haskell code for the extracted programs can be found in the \fnamefull{extracted-examples}{https://github.com/holgerthies/coq-aern/tree/d802e7e03295bfa782dde553ab02a7ab8d336c59/extracted-examples} folder.
\end{description}
Apart from the list of files above, a few changes have also been made to other parts of the library, e.g., to include the new axioms in the appropriate files, some auxiliary statements about real numbers that were needed in the formalization, etc.

Although the paper uses type-theoretical notations instead of the Coq syntax, it should be straightforward to translate the statements in the paper to the corresponding Coq code.
For example, $\csome{x:A} B\ x$ corresponds to $(\myexists{x : A}B\ x) : \Prop$ and $\some{x:A}B\ x$ corresponds to either
$\{x : A \ \&\  B\ x\} : \Type$ or $\{x : A \mid  B\ x\} : \Type$.
The $\allx$-type, $\all{x: A}B\ x$, is expressed by $\myforall{x  : A}{B \ x}$ in Coq. 
Further, the notations used in the paper are slightly simplified, e.g.,  in the Coq source code we use $\hat{}\ \coqcode{Real}$ instead of $\dR$, etc.
These minimal modifications should enhance readability of the paper while still being close enough to the source code to easily map the statements in the paper to the Coq formalization.

\subsection{Related Work}
Computations with subsets are central to many areas of constructive mathematics and computable analysis, and already appear in Bishop's Foundations of Constructive Analysis \cite{bishop1967foundations}. 
A crucial notion in Bishop's framework is that of a compact space,
defined as a metric space that is \emph{complete} and \emph{totally bounded}.
The relationship between various classically equivalent notions of compactness has been a central theme in constructive mathematics \cite{bridges1999sequential,10.1007/3-540-45793-3_7}.
An overview of compactness in constructive mathematics can be found, for example, in Diener's PhD thesis \cite{diener2008compactness}.

In computable analysis, computations on subsets of various spaces are modeled through effective representations.
Representations for subsets of Euclidean space have first been studied by Brattka and Weihrauch \cite{BRATTKA199965} and later extended by Brattka and Presser \cite{BRATTKA200343} to subsets of computable metric spaces.
In the meantime, a rich theory of computability of subsets of metric spaces has been established and many deep results have been shown.
Iljazovi\'{c} and Kihara \cite{Iljazovi2021} provide a modern overview on subsets of metric spaces in the context of computable analysis. 

These concepts have been extended to generic represented spaces \cite{DBLP:phd/dnb/Schroder03a}, a detailed overview of which can be found in  \cite{Pauly2016OnTT} and \cite{collins2020}. 
It should be noted that there is some discrepancy in nomenclature for the same concepts across different works in the literature. In this paper, we primarily follow the terminology established in \cite{Pauly2016OnTT} (cf. also the discussion in op. cit.).

The idea of deriving topological notions from computational properties  has been introduced in Escardó's synthetic topology \cite{escardo2004synthetic} and similar ideas have been explored in Taylor's abstract stone duality \cite{TaylorP:lamcra,TaylorP:fofct}.

Constructive dependent type theory is well-suited for the implementation of these concepts and thus naturally some implementations in proof assistants already exist.
One notable example is the \softwarename{CoRN} library \cite{cruz2004c}, a substantial formalization of constructive reals in the Coq proof assistant.
Subsequent work deals with the formalization of metric spaces and compact subsets \cite{DBLP:journals/corr/abs-0806-3209, conorPhD}.
As already mentioned in the introduction,  our library follows a different approach from \softwarename{CoRN}, with the most significant difference being the treatment of equality of real numbers, avoiding the rather complicated setoid constructions.
We think that this approach is more accessible to typical mathematicians and computer scientists who are less familiar with these constructions.

Most results in this paper are not mathematically novel, but established facts in constructive analysis and formal topology \cite{FOURMAN1982107,Sambin1987}.
For example, the equivalence of Bishop-compactness and compact-overtness has (although in a slightly different setting) already been shown in \cite{DBLP:journals/jucs/CoquandS05} (cf. also \cite{spitters2010locatedness}).
However, our proofs are fully formalized in Coq and the integration within the cAERN library enables us to extract efficient programs from proofs in our theory.
We therefore think that the implementation provides new perspectives on computational applications and can inspire more practical use cases, bridging the gap between abstract theory and concrete computation.
Furthermore, the tools used in the cAERN library are rather elementary, building only on basic facts from computability theory.
Thus, we believe that the proofs are more accessible to non-experts in constructive mathematics.

\subsection{Preliminary Version}
This paper is an extended version of the conference paper \cite{DBLP:conf/mfcs/Konecny0T23} which appeared in the proceedings of the 48th
International Symposium on Mathematical Foundations of Computer Science (MFCS 2023).
While some of the fundamental theory has already been laid out in the conference paper, in the meantime we have substantially extended the work and thus most of the paper and the Coq formalization have been fully rewritten.
Most importantly, in  \cite{DBLP:conf/mfcs/Konecny0T23}  we only consider Euclidean spaces, while in this paper we extend the theory to generic Polish spaces as presented in \autoref{sec:polish}.
This extension  required considering a more general form of the continuity axiom (see \autoref{sec:continuity} for a comparison of the principles).
Thus, \autoref{sec:continuity} and \autoref{sec:polish} have been rewritten from scratch and only have minor overlaps with the conference version.
The biggest overlap is in \autoref{sec:examples}, as we kept the examples from the conference version and only made minor changes to the text to adapt to the generalized theory.
We also present some details of the Coq formalization in \autoref{ssec:overview} and the code extraction procedure in \autoref{s:extraction} which was missing in the conference version.
Further, the equivalence of compact-overt and our definition of totally bounded and complete was only stated as a meta-theorem with an informal proof as an appendix in \cite{DBLP:conf/mfcs/Konecny0T23}. 
Along with the generalization from Euclidean space to general Polish spaces, in this paper we present a fully formalized proof of this theorem in the Coq proof assistant, which is now (together with the theory on metric spaces) a significant part of the \softwarename{cAERN} library.

\section{Preliminaries}\label{sec:preliminaries}
We briefly introduce the setting where we build our formalization of hyperspaces, which is a constructive dependent type theory extended with computationally valid axioms. While this section is mostly an excerpt of   \cite{konevcny2022extracting}, where the full set of axioms can be found, we suggest some new axioms as well.

\subsection{Type Theory and Realizability}
We work in a simple constructive dependent type theory with basic types $\mathsf{0}, \mathsf{1}, \dB, \dN, \dZ$, (empty, unit, boolean, natural number, integer type, respectively), a universe of classical propositions $\Prop$, and a universe of types $\Type$.
We assume that the identity types $=$ are in $\Prop$ and that $\Prop$ is a type universe closed under $\to, \times, \lor, \exists, \allx$, containing two
type constants $\ttrue, \tfalse : \Prop$ which are the unit and the empty type, respectively. 

We denote by $P \lor Q : \Prop$ the classical fact that $P$ or $Q$ holds, whereas the usual sum type $P + Q : \Type$ denotes constructive evidence of which of $P$ or $Q$ holds.
Similarly, when we have a family of classical propositions $P : X \to \Prop$, the type $\csome{x : X}P\, x$ belonging to $\Prop$ denotes the classical existence of $x : X$ satisfying $P\, x$, while the ordinary dependent pair type (also called $\somex$-type), $\some{x : X}P\, x$ belonging to $\Type$ denotes the constructive existence.
Note that the sum types and $\Sigma$-types are defined for any types and type families, not only for those in $\Prop$; in particular, in $P + Q$ and $\some{x:X}P\, x$ the components $P, Q$ and $P\, x$ need not be in $\Prop$.

We write $\lall{x:X}P\, x$ instead of
$\all{x:X}P\, x$ when the intended meaning is a logical formula rather than a dependent function, for better readability; the two denote precisely the same $\Pi$-type, and in neither case is $P\, x$ required to be in $\Prop$.

We assume enough axioms that make $\Prop$ indeed a universe of classical propositions including the law of excluded middle 
\[\lall{P:\Prop} P \lor \neg P\] 
where $\neg P$ is defined by 
$P \to \tfalse$
and
propositional extensionality 
\[\lall{P, Q : \Prop} (P \to Q) \to (Q \to P) \to P = Q.\]
These axioms allow us to reason classically when we are dealing with
classical statements such as $\csome{x : X}P\, x : \Prop$ without harming the constructivity elsewhere in $\Type$.
We further assume a general function extensionality:
\[\lall{P: X \to \Type}\lall{f, g : \all{x: X}P\ x}
 (\lall{x : X} f\ x = g \ x) \to f=g.\]

To validate the axiomatic formalization,
we consider the category of assemblies (over Kleene's second algebra) $\Casm$ as the model of our type theory.

An assembly $\mathbf{X}$ is a pair of a set $|\mathbf{X}|$ and a binary relation $\Vdash_\mathbf{X} \subseteq \Baire \times |\mathbf{X}|$ called a (multi-) representation of $|\mathbf{X}|$ such that \[\lall{x \in |\mathbf{X}|}\csome{\varphi \in \Baire}\varphi \Vdash_\mathbf{X} x\] 
holds. When $\varphi \Vdash_\mathbf{X} x$, we say $\varphi$ is a realizer of $x$. Given two assemblies $\mathbf{X}, \mathbf{Y}$, a mapping $f : |\mathbf{X}| \to |\mathbf{Y}|$ is computable 
when there is a type-2 machine that for all $x\in |\mathbf{X}|$, transforms each realizer of $x$ to a realizer of $f(x)$. The category $\Casm$ denotes the category of assemblies and computable mappings between them. A type in our type theory is interpreted as an assembly and a term in our type theory is interpreted as a computable mapping. Therefore, the soundness of an axiom can be validated by the existence of a computable point in the underlying set of the corresponding assembly.
See \cite{van2008realizability} for details about assemblies and \cite{10.1007/BFb0022273} and \cite[\S~3.6.2]{BIRKEDAL20002} for the interpretation of type theories in $\Casm$.

\subsection{Kleenean and Sierpinski Types}
 As real numbers are expressed exactly using infinite representations, 
 comparing real numbers becomes a partial operation. Testing $x < y$ does not terminate when $x = y$ regardless of the specific representations that are used \cite[Theorem~4.1.16]{w00}.
To deal with such partiality, instead of making computations freeze when the same numbers are compared, \softwarename{AERN} and other exact real computation software provide a data type for lazy Booleans
$\{\cff, \ctt, \bot\}$, generalizing Booleans by adding a third element $\bot$ as an explicit state of divergence.
The third value $\bot$ is an admissible value that a variable can store whose nontermination is delayed until it is required to test whether the value is $\ctt$ or $\cff$.

Observing that the space of lazy Booleans satisfies the algebraic structure of Kleene's three-valued logic where $\bot$ is the third truth value for indeterminacy, the data type is also often called \emph{Kleenean}.
(See \cite{park2016foundation} for more of this discussion.)
The Kleeneans can be interpreted by the Kleenean assembly $\bK$  where the infinite sequence of zeros realizes $\bot$, any infinite sequence that starts with $1$ after a finite prefix of zeros realizes $\cff$, and any infinite sequence that starts with $2$ after a finite prefix of zeros realizes $\ctt$.

Following this approach, we introduce the type $\dK$ for Kleeneans as a primitive type in our formalization and axioms for its characterization. 
In particular, we assume that there are distinct constants $\ctrue, \cfalse : \kleene$, and if a Kleenean is \emph{defined}, meaning that $\defineds{k} \equivto k = \ctrue \lor k = \cfalse$, we assume that we can do branching on its value:
\[
\all{k:\dK}\defineds{k} \to \upc{k} + \downc{k}
\]
where $\upc{k} \equivto k = \ctrue$ and $\downc{k} \equivto k = \cfalse$.
We further assume that every Kleenean is observationally determined by whether it is true or false:
\[
(\upc{k_1}\leftrightarrow\upc{k_2})
\;\to\;
(\downc{k_1}\leftrightarrow\downc{k_2})
\;\to\;
k_1=k_2.
\]
Equivalently, two Kleeneans are equal whenever they have the same definedness behaviour and, if defined, the same Boolean value. Thus, although different realizers may represent the same Kleenean, the type \(\dK\) has exactly three extensional elements corresponding to \(\ctrue\), \(\cfalse\), and the undefined value.
We also introduce standard logical operators (negation, conjunction and disjunction) on Kleeneans following Kleene's three-valued logic. 

Since we often deal with infinite sequences of partial tests in this paper,
we suggest an extension to \cite{konevcny2022extracting} with countable disjunctions:
\begin{axiom}[Countable disjunction, \coqref{lazy\_bool\_countable\_or}{/formalization/Base/Kleene.v}{24}]\label{ax:disj}
\[
\all{f : \dN \to \dK} \some{k : \dK} k \neq \cfalse \land \left ( \upc{k} \liff \csome{n : \dN} \upc{f\ n} \right ).
\]
\end{axiom}
\noindent
To see that the axiom is valid, it suffices to show that such a $k$ is computable which can be done by iterating over all the outputs of the realizers of $f\ n$ in parallel, or, more precisely, by dovetailing the computations of all realizers, i.e., by interleaving their computation such that every realizer is simulated for arbitrarily many steps. Any witness produced by one of the realizers will therefore eventually be found.

Sierpinski space, the topological space $\{\definedx, \undefinedx\}$ whose only nontrivial open is $\{\definedx\}$ plays a central role in synthetic topology as it classifies opens, and in computability theory (e.g.\ \cite{escardo2004synthetic}), as it classifies semi-decidable sets, i.e., subsets for which there exists an algorithm that terminates whenever membership holds, but is not required to terminate otherwise.

We introduce Sierpinski space $\dS$ by restricting $\dK$:
\[
\dS \equivto \some{b :\dK} b \neq \cfalse.
\]
We name $\definedx \equivto (\ctrue, t_\definedx)$
and $\undefinedx \equivto (\cundef, t_\undefinedx)$
where $t_\definedx$ is the unique proof term for
$\ctrue \neq \cfalse$ and $t_\undefinedx$ is the unique proof term for $\bot \neq \cfalse$.
The uniqueness is due to the classical axioms in $\Prop$.
For a Sierpinski term $s : \dS$,
we write $\defineds{s}$ for $s$ being defined, i.e., $s = \definedx$ and $\undefineds{s}$ for $s$ being undefined, i.e., $s = \undefinedx$.

We further assume that the first projection $\fproj : \dS \to \dK$ admits a retraction $\skinv : \dK \to \dS$.
The assumption can be validated by a program on the Kleenean that on its input $b : \dK$ checks 
if $b = \ctrue$ or $\cfalse$ and simply diverges when $b = \cfalse$.

\subsection{Nondeterminism}
The nondeterminism considered in computable analysis is axiomatized 
as a monad $\mval : \Type \to \Type$ in our type theory.
The type $\mval X : \Type$ denotes the type of nondeterministic computations in $X : \Type$.
For example, $f : X \to \mval Y$ denotes a nondeterministic function from $X$ to $Y$.

Besides the monad axioms that make $\mval$ indeed a monad, 
we assume a \emph{nondeterministic dependent choice} principle.
Suppose a sequence of types  $P : \dN \to \Type$ and a sequence of classical binary predicates $R : \all{n:\dN}(P\ n) \to (P\ (n+1)) \to \Prop$. 
Suppose we have a mapping $f$ that for each $n  : \dN$, given $x : P\ n$, selects nondeterministically $y : P\ (n+1)$ such that $R_n\ x\ y$ holds:
\[
f : \all{n:\dN}\all{x:P\ n}
\mval \some{y : P\ (n+1)}R_n\ x\ y
\]
The nondeterministic dependent choice axiom states that there nondeterministically exists a trace $g : \all{n : \dN}P\ n$ such that $\lall{n:\dN}R_n\ (g\ n)\ (g\ (n+1))$ holds. 
In other words, we can compose the local non-deterministic choices into a global trace. 
The trace $g$ further satisfies a coherence condition, saying that each transition from $g\ n$ to $g\ (n+1)$ must be one of the possible values of $f\ n\ (g\ n)$. The details can be found in the original work \cite{konevcny2022extracting}. 
Under the realizability interpretation, the principle is realized by iteratively applying the realizer for $f$, which, given $n : \dN$ and a realizer for $x : P\ n$ produces a realizer for some $f\ n\ x$. 
Thus, starting with the realizer of some $x_0 : P\ 0$, applying the procedure $n$ times we get the $n$-th element of the trace.

The nondeterministic version of countable choice can be stated as
\[
\all{Y : \Type}\all{f : \dN \to \mval Y}\mval \some{g : \dN \to Y}
\lall{n:\dN} g\ n \in_\mval f\ n
\]
where $x \in_\mval y $, for $x : X$ and $y : \mval X$, 
denotes a classical proposition saying $x$ is a possible value in $y$.
Nondeterministic countable choice is derivable from nondeterministic dependent choice.

Generalizing the notion of continuous choice, for any type $X$, we define 
\[
X\mathrm{\_choice} \equivto
\all{Y: \Type} \all{f : X \to \mval Y}\mval \some{g : X \to Y}
\lall{x:X} g\ x \in_\mval f\ x.
\]
In the model, $X\mathrm{\_choice}$ is equivalent to saying that the interpretation of $X$ is a \emph{projective} object, meaning that it is \emph{isomorphic} to an assembly where each element has exactly one realizer. Intuitively, it means that when each element in $X$ has exactly one realizer, there will not be a true nondeterministic function out of $X$, considering that nondeterminism comes from computations being performed differently on different realizers of the same element.
Besides the natural number object $\bN$, another projective object in $\Casm$ is the Baire space assembly $(\IN^\IN, \{(\varphi, \varphi) \mid \varphi \in \IN^\IN\})$, where an infinite sequence $\varphi$ is realized only by itself, as is immediate from the realizability relation.
The interpretation of $\dN \to \dN$ is the exponential object $\bN^\bN$, which is isomorphic to this assembly; hence it is projective.
We can therefore assume the Baire choice for $X \equivto \dN \to \dN$:
\begin{axiom}[Baire choice, \coqref{M\_baire\_choice}{/formalization/Base/MultivalueMonad.v}{114}]\label{ax:baire-choice}
\[
\all{Y: \Type} \all{f : (\dN \to \dN) \to \mval Y}\mval \some{g : (\dN \to \dN) \to Y}
\lall{\varphi:\dN \to \dN} g\ \varphi \in_\mval f\ \varphi.
\]
\end{axiom}
\noindent

A common source of nondeterminism in practice in computable analysis is semi-decidability. Given multiple semi-decidable predicates, we choose one that terminates and evaluates to true, nondeterministically. An important example is the soft comparison (a variant from \cite{BRATTKA1998490}):
\[x <_n y \equivto \begin{cases}
\ctrue &\text{if } x \leq y - 2^{-n},\\
\cfalse &\text{if } y \leq x - 2^{-n},\\
\ctrue\text{ or } \cfalse&\text{otherwise when } |x - y| < 2^{-n},\\
\end{cases}\]
that nondeterministically approximates the partial $x < y$.
It can be realized by evaluating the two semi-decidable predicates $x< y+ 2^{-n}$ and $y < x + 2^{-n}$ in parallel provided that at least one of the two evaluations terminates and results in $\ctrue$. 
This is reflected in our type-theoretic framework by an axiom that states, given a countable sequence of Kleeneans $x : \dN \to \dK$, 
when we know that there (classically) is at least one index $n : \dN$ such that $x\ n = \ctrue$, we can nondeterministically find such an index:
\begin{axiom}[Nondeterministic countable selection, \coqref{ M\_countable\_selection}{/formalization/Base/MultivalueMonad.v}{100}]\label{ax:cchoice}
\begin{equation*}
\all{x : \dN \to \dK} \big( \csome{n : \dN}\upc{x\ n}\big) \to \mval \some{n : \dN} \upc{x\ n}.
\end{equation*}
\end{axiom}
\noindent 
Observe that this axiom, which can also be viewed as a version of Markov's principle for partially decidable predicates, is new compared to the original axiomatization in \cite{konevcny2022extracting}, which contains only binary nondeterminism.
This axiom can be realized by iterating over all pairs of natural numbers $(n,m)$ and checking if the $m$-th sequence element of the realizer of $x(n)$ equals two, which indicates that $x(n) = \ctrue$.

Another primitive way to construct nondeterminism is from a single Kleenean $k$ by testing whether $k = \ctrue$. We can consider a procedure that for each natural number $n$ evaluates $k$ using $n$ steps of computation and returns $1$ as an indicator that $k = \ctrue$ if within the $n$ steps $k$ evaluates to $\ctrue$. Otherwise, if $n$ steps were not sufficient to make $k$ evaluate to anything or $k$ happens to be $\cfalse$, it returns any different number, e.g., $0$.
This procedure of constructing a sequence of natural numbers $\dN \to \dN$ from $k$ is nondeterministic because the precise $n$ needed to judge $k = \ctrue$ is dependent on the specific realizer of $k$. That is, given two realizers of $\ctrue$, the sequences produced by the above procedure can be different. Hence, what this procedure realizes is the following axiom:
\begin{axiom}[\coqref{kleenean\_to\_nat\_sequence}{/formalization/Base/MultivalueMonad.v}{110}]\label{ax:k-to-seq}
\begin{equation*}
\all{k : \dK}  \mval\some{f : \dN \to \dN}
\upc{k} \liff \csome{n : \dN}f\ n = 1.
\end{equation*}
\end{axiom}

\subsection{Real Numbers and Euclidean Space}
We consider the assembly of reals $\bR$ with the standard Cauchy representation \cite[Section~4.1]{w00}
\[
\varphi \Vdash_\bR x \Leftrightarrow
\varphi\ \text{encodes}\ (q_n)_{n\in\IN}\subset \IQ\ \text{such that}\  |x - q_n| \leq 2^{-n}\text{ for all }n
\]
as the interpretation of our type $\dR$ of real numbers. We further assume that $\dR$ has two distinct constants $0$ and $1$, 
that there is a $\Prop$-valued classical order relation $< : \dR \to \dR \to \Prop$, that $<$ is semi-decidable 
(i.e. $\all{x,y:\dR}\some{k : \dK} \upc{k} \leftrightarrow x < y$), and the standard arithmetical operators. To formalize the completeness of reals, we define a sequence $f: \dN \to \dR$ being \inlinedef{fast Cauchy} by
\[
\mathrm{is\_cauchy}\ f \equivto
\lall{n,m : \dN}\abs{f\ n - f\ m} \leq 2^{-n}+2^{-m},
\]
and $x : \dR$ being a limit of a sequence $f$ by 
\[
\mathrm{is\_limit}\ x\ f \equivto
\lall{n : \dN}  \abs{x - f\ n} \leq 2^{-n}
\]
where $|x - y| \leq z$ is a shorthand for $-z \leq x - y \leq z$.
The completeness axiom states that for any fast Cauchy sequence, there exists its limit point:
\[
  \all{f: \dN \to \dR} \mathrm{is\_cauchy}\,f \to \some{x : \dR} \mathrm{is\_limit}\,x\,f\ .
\]

Using the axiomatization, we can define several standard operations such as $\max(x,y)$, $\abs{x}$, etc. Furthermore, as suggested in \cite{DBLP:journals/mlq/Hertling99}, the set of axioms is enough to prove that any two real number types sharing the set of axioms are equivalent. Again, the details can be found at \cite{konevcny2022extracting}.

Having axiomatized real numbers, extending the theory to complex numbers and general Euclidean space is simple.
We define a complex number as a pair of real numbers, i.e.\ $\mathrm{C} \equivto \dR \times \dR$, and for any $m \colon \dN$ we define the type $\dR^m$ simply as the type of length-$m$ lists over $\dR$, with maximum norm $\lvert \cdot \rvert : \dR^m \to \dR$ and a distance $d\ z_1\ z_2 \equivto \lvert z_1 - z_2 \rvert$.
Vector space operations are defined pointwise and the limit operators can be extended in the obvious way.

\section{Continuity}\label{sec:continuity}
The continuity principle on real numbers  states that every real function is continuous.
Formally, it can be written as
\[
\all{f : \dR \to \dR}\all{x : \dR}\all{n:\dN}
\mval\some{m:\dN}\lall{y :\dR}|x - y|  \leq  2^{-m} \to |f(x) - f(y)| \leq 2^{-n}.
\]

Under the realizability interpretation, this corresponds to an operation which, given a realizer $\varphi_f$ of $f$, a realizer $\varphi_x$ of $x$, and a natural number $n$, \emph{nondeterministically} computes a natural number $m$ such that $\lall{y : \dR} |x - y| \leq 2^{-m}\to |f(x) - f(y)| \leq 2^{-n}$. The computability of this operation, as well as the impossibility of removing $\mval$ from the statement has been well-studied in constructive and computable analysis \cite{xuthesis}.
The necessity of the presence of $\mval$ is due to the representation of functions:
For a fixed function $f$, different realizers may yield different valid moduli $m$. 

The continuity principle is counter-classical, and at first glance it might appear restrictive to work in a system where discontinuous functions cannot be discussed.
However, in the intended model of our formalization, the continuity principle is valid: a term $f: X \to Y$ is realized by a computable function between represented spaces, and such functions are necessarily continuous.
Thus, exploiting the computational content of this fact can be useful.

At  the same time, it is easy to see that the continuity principle is independent of the other axioms in our system, as all previous axioms could still be validated in classical mathematics, i.e.\ the category of sets, while the continuity principle is counter-classical.
Consequently, the continuity principle cannot be derived from the other axioms alone, and to introduce it in our system it must be assumed as an additional axiom.

Let us next discuss the concrete formulation of the axiom.
Rather than restricting to real functions, we want a formulation that is as general and abstract as possible. 
A common approach in constructive mathematics is to formulate continuity for functions on Baire space, $(\IN \to \IN) \to \IN$, as in the Kreisel–Lacombe–Shoenfield theorem~\cite{KreiselLacombeShoenfield1959}, and then encode more general spaces into Baire space.
However, this is precisely the kind of encodings that we try to avoid in our approach whenever possible, by axiomatizing properties of abstract spaces directly.
At the same time, we do not want to formulate a separate continuity principle for each combination of abstract types occurring in the development.
We therefore propose the following continuity principle for sequences, which turns out to be general enough to cover the applications in this paper:
\begin{axiom}[Continuity, \coqref{seq\_subset\_continuity}{/formalization/Base/MultivalueMonad.v}{105}]\label{ax:continuity}
We assume a term of the following type:
\begin{multline*}
\all{f : (\dN \to X) \to \dS} \all{x : \dN \to X} 
\defineds{(f\ x)}\\  \qquad\qquad\rightarrow \mval \some{m : \dN} 
\lall{y : \dN \to X}  \restr{x}{m} = \restr{y}{m} \rightarrow \defineds{(f\ y)}.
\end{multline*}
\end{axiom}
Here, for $x : \dN \to X$ and $n : \dN$, the term $\restr{x}{n} : \listof{X}$ denotes the prefix of $x$ of length $n$, where $\listof{X}$ is the type of lists over $X$.
Essentially, the axiom states that any semi-decidable subset of sequences is continuous, i.e.\ if a sequence $x: \dN \to X$ belongs to a subset defined by its characteristic function $f: (\dN \to X) \to \dS$, then there nondeterministically is a finite prefix of $x$ such that every sequence sharing this prefix also belongs to the subset.

In the remainder of this section, we demonstrate how this formulation of the continuity principle gives rise to continuity principles on various other types. 
One immediate application is to derive a continuity principle for functions into any type $Y$ with semi-decidable equality:
\begin{lem}[\coqref{continuity\_semidec\_eq}{/formalization/Hyperspace/Continuity.v}{239}]\label{lem:semi-cont}
Let $X, Y: \Type$ and assume that equality on $Y$ is semi-decidable, i.e.,
\[ 
\all{y, y' : Y} \some{s : \dS} \defineds{s} \liff y = y'.\]
Then the following continuity principle holds: 
\[
\all{f : (\dN \to X) \to Y}\all{x : \dN \to X} \mval \some{m : \dN} \lall{x' : \dN \to X} \restr{x}{m} = \restr{x'}{m} \rightarrow f\ x=f\ x'.
\]
\end{lem}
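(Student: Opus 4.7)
The plan is to reduce the claim to Axiom \ref{ax:continuity} by constructing a suitable Sierpinski-valued predicate that detects whether $f$ agrees with its value at $x$. Given $f : (\dN \to X) \to Y$ and $x : \dN \to X$, set $y_0 \dfeq f\ x$. The idea is to ask the continuity axiom about the sequences whose $f$-image equals $y_0$, and use semi-decidability of equality on $Y$ to realize ``equals $y_0$'' as a Sierpinski value.

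First I would use the semi-decidability hypothesis to obtain, for each $y : Y$, a Sierpinski term $s_y : \dS$ with $\defineds{s_y} \liff y = y_0$. Then I would define $g : (\dN \to X) \to \dS$ by $g\ x' \dfeq s_{f\ x'}$. By construction, $\defineds{(g\ x')} \liff f\ x' = y_0$, and in particular $\defineds{(g\ x)}$ holds since $f\ x = y_0$.

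Next I would apply Axiom \ref{ax:continuity} to $g$ and $x$. This nondeterministically yields an index $m : \dN$ such that for every $x' : \dN \to X$ with $\restr{x}{m} = \restr{x'}{m}$, we have $\defineds{(g\ x')}$, which unfolds to $f\ x' = y_0 = f\ x$. This is exactly the conclusion of the lemma, so the proof packages this $m$ into the nondeterministic existential.

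The only potential obstacle is bureaucratic rather than mathematical: one needs a uniform way to obtain the Sierpinski witness $s_{f\ x'}$ for each $x'$, so that $g$ itself is a well-defined term of type $(\dN \to X) \to \dS$ rather than just a pointwise construction. This is handled by simply lifting the semi-decidability hypothesis into a function $Y \to \dS$ (fixing $y_0$ as the second argument) and precomposing with $f$. Once $g$ is defined, the rest is a direct instantiation of the continuity axiom.
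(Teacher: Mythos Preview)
Your proof is correct and essentially identical to the paper's: both fix $y_0 = f\ x$, use semi-decidability of equality to build a Sierpinski-valued function $g$ (the paper's $f_{(f\ x)}$) with $\defineds{(g\ x')} \liff f\ x' = y_0$, and then apply \autoref{ax:continuity} to $g$ at $x$. Your remark about lifting the semi-decidability hypothesis to a function $Y \to \dS$ and precomposing with $f$ is exactly how the paper's $f_y$ is obtained.
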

\begin{proof}
The idea is to reduce $f$ to a family of Sierpinski-valued functions.
That is, for each $y : Y$, we define a function $f_y : (\dN \to X) \to \dS$ such that $\lall{x : \dN \to X} \defineds{(f_y\ x)} \liff (f\ x = y)$, which is possible since equality on $Y$ is semi-decidable.

Then trivially $\defineds{(f_{(f\ x)}\ x)}$ and thus applying \autoref{ax:continuity} on $f_{(f\ x)}$ and $x$ yields
\[
\mval \some{m : \dN} \lall{y : \dN \to X} 
\restr{x}{m} = \restr{y}{m} \to \defineds{(f_{(f\ x)}\ y)}.
\]
As $\defineds{(f_{(f\ x)}\ y)} \liff (f\ y = f\ x)$, the claim follows.
\end{proof}
\noindent 
When considering represented spaces, the spaces where equality is semi-decidable are precisely the discrete spaces.
Thus, we can apply this lemma to get the standard continuity principles for discrete types such as $Y = \dN$ or $Y = \dB$. 
However, computable analysis is mostly concerned with non-discrete spaces.
To treat such cases, we extend the previous lemma to functions where the codomain is a sequence type as well, i.e., replacing $Y$ by $\dN \to Y$.
\begin{lem}[\coqref{baire\_continuity}{/formalization/Hyperspace/Continuity.v}{15}]\label{lem:baire-continuity}
For any types  $X, Y$ where equality for $Y$ is semi-decidable, the following continuity principle holds: 
\begin{multline*}
\all{f : (\dN \to X) \to (\dN \to Y)} \all{x : \dN \to X}\all{n : \dN}\\ 
\qquad\qquad\mval \some{m : \dN} \lall{y : \dN \to X} \restr{x}{m} = \restr{y}{m} \to  \restr{(f\ x)}{n} = \restr{(f\ y)}{n}.
\end{multline*}
\end{lem}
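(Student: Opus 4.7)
The plan is to reduce the statement to the continuity principle for Sierpinski-valued functions (\autoref{ax:continuity}), in the spirit of \autoref{lem:semi-cont}. Since inequality on $Y$ is semi-decidable and a finite disjunction of semi-decidable predicates is again semi-decidable, we can form the Sierpinski-valued map
$$g : (\dN \to X) \to \dS,\qquad g(y) := \bigvee_{i < n} [f(y)(i) \neq f(x)(i)],$$
whose defined set is exactly $\{y \mid \restr{(f\ y)}{n} \neq \restr{(f\ x)}{n}\}$. If we could apply \autoref{ax:continuity} to $g$ at $x$ we would immediately obtain the desired modulus.

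The main obstacle is that $g(x)$ is necessarily undefined: each disjunct compares $f(x)(i)$ with itself, and semi-decidable inequality cannot witness $y_1 = y_2$. This is a genuine asymmetry compared to \autoref{lem:semi-cont}, where semi-decidable equality supports the self-plug-in trick --- setting the parameter to $f(x)$ in $[f(\cdot) = y]$ yields a Sierpinski value defined at $x$ --- while semi-decidable inequality goes the opposite way and only witnesses disagreement.

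My plan is therefore to induct on $n$. The base case $n = 0$ is immediate with $m = 0$. In the inductive step $n \to n+1$, the hypothesis yields a modulus $m_1$ for agreement on the first $n$ coordinates, and it remains to find a modulus $m_2$ for the single coordinate $f_n(y) := f(y)(n) : (\dN \to X) \to Y$; taking $m := \max(m_1, m_2)$ then finishes the proof. The single-coordinate case I would handle by applying \autoref{ax:continuity} not to the failing function $[f_n(\cdot) \neq f_n(x)]$ directly, but to an auxiliary Sierpinski-valued function that \emph{is} defined at $x$, constructed by composing the inequality test on $Y$ with the nondeterministic selection axioms (\autoref{ax:cchoice} and \autoref{ax:k-to-seq}) so as to convert the $Y$-inequality into a Baire-space computation whose continuity modulus at $x$ can be extracted. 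The hardest part is engineering this auxiliary function so that the resulting modulus is uniform over all $y$ with $\restr{y}{m_2} = \restr{x}{m_2}$, rather than merely a pointwise witness; this is where the nondeterministic structure of \autoref{ax:continuity} does the essential work.
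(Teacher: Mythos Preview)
Your overall structure is exactly the paper's: induction on $n$, base case $m=0$, and in the inductive step obtain a modulus $m_1$ for the first $n$ coordinates from the hypothesis, a modulus $m_2$ for the single new coordinate, and take $m=\max(m_1,m_2)$.

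Where you diverge is the single-coordinate step. The paper does \emph{not} attempt any workaround for semi-decidable inequality; it simply defines $f':\dN\to Y\to(\dN\to X)\to\dS$ with $\defineds{(f'\ k\ v\ z)}\liff f(z)(k)=v$ and applies \autoref{ax:continuity} to $f'\ (n{+}1)\ (f\ x\ (n{+}1))$ at $x$. The proof text explicitly says ``which is possible since equality on $Y$ is semi-decidable''. In other words, the paper's proof uses the hypothesis \emph{equality semi-decidable}, and the word ``inequality'' in the statement is a typo. This is consistent with the only later use of the lemma (for $g:(\dN\to\dN)\to(\dN\to\dN)$, where $Y=\dN$ has decidable equality).

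Your identification of the obstacle is correct, but the resolution you propose cannot work: with only semi-decidable inequality on $Y$ the lemma is in fact \emph{false}. Take $Y=\dR$ (which has semi-decidable inequality), $X=\mathsf{2}$, and $f(x)(k)\equivto\sum_{i\geq 0} x(i)\cdot 2^{-i-1}$. Then no coordinate $f(\cdot)(k)$ is locally constant on $\dN\to X$, so no $m$ as in the conclusion exists. Your proposed auxiliary construction via \autoref{ax:cchoice} and \autoref{ax:k-to-seq} is vague exactly at the point where it would have to fail; there is no way to manufacture a Sierpinski function defined at $x$ whose continuity modulus forces $f(y)(n)=f(x)(n)$ when $Y$ is merely Hausdorff. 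The right fix is to read the hypothesis as ``equality for $Y$ is semi-decidable'', after which the single-coordinate step is just the argument of \autoref{lem:semi-cont}.
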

\begin{proof}
The proof is by induction on $n$.
For $n=0$, we can choose  $m=0$ and the statement is trivially true.
Next, assume we can (nondeterministically) choose some $m_0$ such that 
\[
\lall{y : \dN \to X} \restr{x}{m_0} = \restr{y}{m_0} \to \restr{(f\ x)}{n} =\restr{(f\ y)}{n}.
\]
We need to show how to extend this to prefixes of length $n+1$.
It suffices to show that there is some $m_1$ such that 
\begin{equation}\label{eqn:baire-cont1}
\lall{y : \dN \to X} \restr{x}{m_1} = \restr{y}{m_1} \to (f\ x\ n = f\ y\ n)
\end{equation}
and choose $m = \max(m_0,m_1)$.

To show \autoref{eqn:baire-cont1}, we define a Sierpinski-valued function 
$f' : \dN \to Y  \to (\dN \to X) \to \dS$ such that 
\[
    \lall{k : \dN}\lall{f_k :  Y}\lall{x : \dN \to X} \defineds{(f'\ k\ f_k\ x)} \liff (f\ x\ k) = f_k
\]
which is possible since equality on $Y$ is semi-decidable.
Applying \autoref{ax:continuity} to $f'\ n\ (f\ x\ n)$ and $x$, we get
\[
\mval \some{m_1 : \dN} \lall{y : \dN \to X} \restr{x}{m_1} = \restr{y}{m_1} \to \defineds{(f'\ n\ (f\ x\ n)\ y)}.
\]
Since $\defineds{(f'\ n \ (f\ x\ n\ y)} \liff f\ y\ n = f\ x\ n$ this completes the proof.
\end{proof}
Note that the formal development only proves the special case of Baire space, i.e.\ $f : (\dN \to X) \to (\dN \to \dN)$, as it is the only instance required in the applications. As equality on $\dN$ is decidable and not just semi-decidable, the formal proof is slightly simpler than the one stated here, but follows the same basic idea.

\subsection{Continuity of Continuity}
Given a semi-decidable subset of the Baire space $p : (\dN \to \dN) \to \dS$, applying the continuity principle to $p$ yields a partial nondeterministic mapping
\[
\mu_p : (\dN \to \dN) \pto \mval \dN\ .
\]
Here, $X \pto Y$ denotes the type of partial mappings from $X$ to $Y$ defined by
\[X \pto Y \equivto \some{D:X\to\Prop} \all{x:X} D\ x \to Y.\]
We call the first entry $D  : X \to \Prop$ of a partial mapping $f : X \pto Y$ the \emph{domain of $f$}, denoted by  $\dom(f) : X \to \Prop$. The domain of $\mu_p$ corresponds to the semi-decidable subset $p$, namely
\[\dom(\mu_p) = \lam{x:\dN \to \dN}\defineds{(p\ x)}.\]
For a partial function $f : X \pto Y$, $x : X$, and $y : Y$, we further write
$\definedto{f\ x}{y}$ to mean that $f\ x$ is defined with value $y$, formally
\[
\definedto{f\ x}{y} \quad \equivto \quad \csome{p : (\dom(f)\ x)}(f\ x\ p = y).
\]

The partial mapping $\mu_p$ is a nondeterministic modulus of continuity for $p$:
for every $x : \dN \to \dN$ and $t : \defineds{(p\ x)}$, we have
\[\lall{n : \dN} n \in_\mval (\mu_p\ x\ t) \to
\lall{y : \dN \to \dN} \restr{x}{n} = \restr{y}{n} \to \defineds{(p\ y)}.\]
In other words, $\mu_p$ nondeterministically computes the length $n$ of a prefix of $x$ such that any extension of this prefix also satisfies $p$.
In the standard topology on Baire space, such a prefix determines the basic open set of all sequences with the same first $n$ values. Thus, the property says that this entire basic open set is contained in the set defined by $p$.

Later in the paper, we use $\mu_p$ to construct covers of $p$ by basic opens with centers chosen from a countable dense subset of Baire space. 
For this construction, it is essential that nearby centers must be assigned compatible neighborhoods. In other words, we require some form of continuity of the partial map $\mu_p$ itself.

However, as $\mu_p$ is multivalued, the continuity principle does not apply directly.
Still, if $\mu_p$ were a total mapping, we could use the Baire choice axiom (\autoref{ax:baire-choice}) to nondeterministically obtain a deterministic section:
\[
\mval\some{\mu_p':(\dN \to \dN)\to\dN}
\lall{x : \dN \to \dN} \mu_p'\ x \in_\mval \mu_p\ x
\]
and apply the continuity principle (\autoref{lem:semi-cont}), to obtain that each section is continuous:
\[
\mval\some{\mu_p':(\dN \to \dN)\to\dN}
(\mathrm{is\_cont}\ \mu_p') \land
\lall{x : \dN \to \dN} \mu_p'\ x \in_\mval \mu_p\ x.
\]
where $\mathrm{is\_cont}\ \mu_p'$ stands for
\[\all{x : \dN \to \dN} \mval \some{m : \dN} \lall{x' : \dN \to \dN} \restr{x}{m} = \restr{x'}{m} \rightarrow \mu_p'\ x=\mu_p'\ x'.\]
Thus, from a total nondeterministic $\mu_p : (\dN \to \dN)\to\mval\dN$, we can nondeterministically choose a continuous section $\mu_p' : (\dN \to \dN)\to\dN$ which can be used consistently in subsequent computations.
It should be noted that the continuous dependence here is only on $x$. Continuous dependence on $p$ would be invalid in the model.

It turns out that analogous results still hold when $\mu_p$ is partial, provided that the domain of $\mu_p$ is semi-decidable. Namely, we prove \autoref{lem:partial-baire-choice} extending \autoref{ax:baire-choice} and \autoref{lem:partial-continuity} extending \autoref{lem:semi-cont}. As their direct corollary, we obtain \autoref{lem:continuous-modulus} stating that even in the general case, when $\mu_p$ is partial on a semi-decidable subset, we can nondeterministically choose a nondeterministic section of it and this section is a continuous function.

\begin{lem}[\coqref{partial\_baire\_choice}{/formalization/Hyperspace/Continuity.v}{279}]\label{lem:partial-baire-choice}
Let $p : (\dN \to \dN) \to \dS$ define a semi-decidable subset of Baire space and  $Q : (\dN \to \dN) \to \dN \to \Prop$ a relation.
Assume further we have a nondeterministic partial function whose outputs satisfy $Q$, formally a term of type
\[ \left( \all{x : \dN \to \dN} \defineds{(p\ x)} \rightarrow \mval \some{n : \dN} (Q\ x\ n)\right). \]
Then, we can nondeterministically choose a partial function $F : (\dN \to \dN) \pto \dN$ whose domain is exactly the subset defined by $p$, and whose values satisfy $Q$.
Formally, we construct a term of type
\[
\mval \some{F : (\dN \to \dN) \pto \dN}  \dom(F) = \lam{x : \dN \to \dN} \defineds{(p\ x)}
\land \ \lall{x \in \dom(F)} (Q\ x\ (F\ x)).
\]
Here, $\lall{x\in\dom(f)}Q(x, f\ x)$, is an abbreviation for $\lall{x: X}\lall{p : \dom(f)\ x}Q(x, f\ x\ p)$.
\end{lem}
\begin{proof}
By replacing $(p\ x)$ with an infinite sequence of integers (\autoref{ax:k-to-seq}) we get
\[
\all{x : \dN \to \dN} \mval \some{f : \dN \to \dN} \defineds{(p\ x)} \liff \csome{n : \dN} (f\ n) = 1
\]
Thus, for any $x : \dN \to \dN$ and $n : \dN$, we can check if $f\ n = 1$ and if yes find an $m : \dN$ with $Q\ x\ m$.
To apply \autoref{ax:baire-choice}, we show 
\begin{multline*}
\all{x : \dN \to \dN}  \mval \some{g: \dN \to \dN}  \\
 (g\ n \neq 0 \rightarrow Q\ x\ ((g\ n)-1)) \  \land \  (\defineds{(p\ x)} \rightarrow \csome{n : \dN} g\ n \neq 0)
\end{multline*}
from which we get a function $h : (\dN \to \dN) \to \dN \to \dN$.
Now we can define the partial function $F$ at $x : \dN \to \dN$ with $\defineds{(p\ x)}$ by searching for the first index $n : \dN$ such that $h\ x\ n \neq 0$.
\end{proof}
\begin{lem}[\coqref{continuity\_partial}{/formalization/Hyperspace/Continuity.v}{316}]\label{lem:partial-continuity}
Any partial mapping $f : (\dN \to \dN) \pto \dN$ with semi-decidable domain is continuous. That is,
\[
 \all{x \in \dom(f)} \mval \some{m : \dN} \lall{y : \dN \to \dN} \restr{x}{m} = \restr{y}{m} \to \definedto{f\ y}{f\ x}.
\]
\end{lem}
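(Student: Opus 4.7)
The plan is to reduce the claim to \autoref{ax:continuity} applied to a suitably built semi-decidable predicate on the Baire space. Let $p : (\dN \to \dN) \to \dS$ be the Sierpinski-valued witness of $\dom(f)$ being semi-decidable, so that $\defineds{(p\ y)} \liff y \in \dom(f)$, and fix $x \in \dom(f)$ together with its value $f\ x : \dN$.

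The first step is to construct a single Sierpinski predicate $q : (\dN \to \dN) \to \dS$ such that $\defineds{(q\ y)}$ holds exactly when $y \in \dom(f)$ and $f\ y = f\ x$. Informally, on input $y$ one first waits for $p\ y$ to become defined, which supplies a proof of $y \in \dom(f)$; this proof is then used to evaluate $f\ y : \dN$, after which we decide the equality $f\ y = f\ x$ (semi-decidable on $\dN$) and inject the resulting Kleenean back into $\dS$ via the retraction $\skinv$. Formally, $q\ y$ is the Sierpinski-conjunction of $p\ y$ with the Sierpinski term obtained by lifting the decidable test on $\dN$ along the proof produced by the first conjunct.

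The second step is routine: since $x \in \dom(f)$ by assumption and $f\ x = f\ x$, the Sierpinski value $q\ x$ is defined. Applying \autoref{ax:continuity} to $q$ at $x$ yields a nondeterministic $m : \dN$ such that for every $y : \dN \to \dN$ with $\restr{x}{m} = \restr{y}{m}$ one has $\defineds{(q\ y)}$. By the defining property of $q$, this unpacks to both $y \in \dom(f)$ and $f\ y = f\ x$, which is exactly the assertion $\definedto{f\ y}{f\ x}$ required by the lemma.

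The main obstacle is keeping the construction of $q$ type-correct: the equality test $f\ y = f\ x$ cannot be posed until we have a proof that $y \in \dom(f)$, so one has to thread the Sierpinski "wait" on $p\ y$ through the dependent elimination that produces $f\ y$, and only then convert the final decidable test into $\dS$. Once $q$ has been assembled with the correct defining property, the remainder of the argument is a direct invocation of \autoref{ax:continuity}.
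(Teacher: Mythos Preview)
Your high-level strategy is sound, and if the predicate $q$ can be built as you describe, applying \autoref{ax:continuity} to it directly yields the conclusion. The gap is in the construction of $q$. What you call the ``Sierpinski-conjunction of $p\ y$ with the term obtained by lifting the decidable test along the proof produced by the first conjunct'' is a \emph{dependent} conjunction: the second conjunct $f\ y = f\ x$ only type-checks given a proof of $\dom(f)\ y$, and that proof is not available when you are merely holding the value $p\ y : \dS$. Your operational phrasing (``wait for $p\ y$ to become defined, which supplies a proof'') is semantically correct but does not correspond to any primitive in the paper's type theory; there is no elimination rule that turns a Sierpinski value into a proof of its definedness on which later computation can depend. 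To actually build $q$ you would need to go through \autoref{ax:k-to-seq} to turn $p\ y$ into a sequence $\sigma_y : \dN \to \dN$, case on $\sigma_y\ n = 1$ to obtain the proof of $\dom(f)\ y$, evaluate $f\ y$, test equality, and take a countable disjunction; then use either singleton elimination or Baire choice to get a deterministic $q$. You correctly flag this as the main obstacle but do not discharge it.

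The paper's proof avoids building $q$ and instead totalizes $f$ into a function $g : (\dN\to\dN)\to(\dN\to\dN)$ with $g\ y\ n = f\ y + 1$ whenever an $n$-step witness of $\defineds{(p\ y)}$ exists and $0$ otherwise (this is the construction ``similar to the previous lemma'', which already packages \autoref{ax:k-to-seq} and Baire choice). It then applies the derived continuity principle for Baire-valued maps (\autoref{lem:baire-continuity}) to $g$, locates an index $n$ with $g\ x\ n \neq 0$, and reads off $m$. So the paper routes through an intermediate $(\dN\to\dN)$-valued totalization and the derived \autoref{lem:baire-continuity}, whereas you aim for a Sierpinski-valued predicate and a direct appeal to \autoref{ax:continuity}. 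Both routes ultimately rest on the same machinery; your version would be slightly more direct once $q$ is constructed, but the paper's makes the totalization step explicit rather than hiding it inside an undefined dependent conjunction.
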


\begin{proof}
Similar to the proof of the previous lemma, let $p: (\dN \to \dN) \to \dS$ define the domain of $f$ and define a total function 
$g : (\dN \to \dN) \to \dN \to \dN$ such that 
\[
(\all{n : \dN} g\ x\ n \neq 0 \rightarrow f\ x = (g\ x\ n) - 1) \  \land \defineds{(p\ x)} \rightarrow \csome{n : \dN} g\ x\ n \neq 0.
\]
Then by continuity of Baire-valued functions (\autoref{lem:baire-continuity}) for each $x: \dN \to \dN$ and $n : \dN$, we have
\begin{equation}\label{eq:cont-ext-fun}
\mval\some{m : \dN} \all{y : \dN \to \dN} \restr{x}{m} = \restr{y}{m} \rightarrow \restr{(g\ x)}{n} = \restr{(g\ y)}
{n}.
\end{equation}
Now assume $\defineds{(p\ x)}$, then $\mval \some{n : \dN} g\ x\ n = (f\ x)+1$, thus we can use \autoref{eq:cont-ext-fun} to find an $m: \dN$ such that
\[
\lall{y: \dN \to \dN} \restr{x}{m} = \restr{y}{m} \rightarrow \restr{(g\ x)}{n+1} = \restr{(g\ y)}{n+1} 
\]
and since $g\ x\ n \neq 0$, for any $y: \dN \to \dN$, $g\ y\ n = g\ x\ n \rightarrow f\ x = f\ y$.
\end{proof}

\begin{thm}[\coqref{baire\_continuous\_modulus}{/formalization/Hyperspace/Continuity.v}{417}, \coqref{continuity\_partial}{/formalization/Hyperspace/Continuity.v}{L316}]\label{lem:continuous-modulus}
Let $p : (\dN \to \dN) \to \dS$, then there nondeterministically is a partial function $\mu : (\dN \to \dN) \pto \dN$ with $\dom(\mu) = \lam{x : \dN \to \dN} \defineds{(p\ x)}$ such that
\begin{enumerate}
\item $\mu$ is a modulus of continuity for $p$, i.e., \[\lall{x : \dN \to \dN} \defineds{(p\ x)} \rightarrow \lall{y : \dN \to \dN} \restr{x}{\mu\ x} = \restr{y}{\mu\ x} \rightarrow \defineds{(p\ y)},\] 
\item and $\mu$ is continuous in the sense that 
\[
\all{x \in \dom(\mu)}\mval \some{k : \dN} 
\lall{y : \dN \to \dN} 
\restr{x}{k} = \restr{y}{k} \rightarrow
 \definedto{\mu\ y}{\mu\ x}.
\]  
\end{enumerate}
\end{thm}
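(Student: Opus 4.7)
The plan is to derive this as a direct corollary of the two preceding lemmas, \autoref{lem:partial-baire-choice} and \autoref{lem:partial-continuity}, by first extracting the partial modulus from the continuity axiom and then invoking continuity of that extracted partial function.

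First, I would unfold what \autoref{ax:continuity} gives when applied to $p$: for every $x : \dN \to \dN$ with $\defineds{(p\ x)}$, we obtain nondeterministically some $m : \dN$ such that $\lall{y : \dN \to \dN} \restr{x}{m} = \restr{y}{m} \to \defineds{(p\ y)}$. Reading this as a classical specification, define
\[
Q\ x\ n \;\equivto\; \lall{y : \dN \to \dN}\restr{x}{n} = \restr{y}{n} \to \defineds{(p\ y)}.
\]
Then \autoref{ax:continuity} supplies exactly a term of type $\all{x : \dN\to\dN}\defineds{(p\ x)} \to \mval\some{n : \dN}Q\ x\ n$, which is the hypothesis of \autoref{lem:partial-baire-choice}.

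Next I would apply \autoref{lem:partial-baire-choice} to obtain, nondeterministically, a partial function $\mu : (\dN \to \dN) \pto \dN$ whose domain is exactly $\lam{x : \dN \to \dN} \defineds{(p\ x)}$ and that satisfies $\lall{x \in \dom(\mu)} Q\ x\ (\mu\ x)$. Unfolding $Q$, this is exactly condition (1): $\mu$ is a modulus of continuity for $p$. The fact that $\dom(\mu)$ is a semi-decidable subset of $\dN \to \dN$ is immediate from $p : (\dN \to \dN) \to \dS$.

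Finally, since $\mu$ has semi-decidable domain, \autoref{lem:partial-continuity} applies directly and yields condition (2): for every $x \in \dom(\mu)$ there is nondeterministically some $k : \dN$ such that $\restr{x}{k} = \restr{y}{k}$ implies $\definedto{\mu\ y}{\mu\ x}$. Both conclusions hold of the same $\mu$, so combining them gives the theorem. There is no real obstacle here; the statement is essentially engineered to be the composition of the two preceding lemmas, and the only minor care needed is to check that the semi-decidability of $\dom(\mu)$ is inherited from $p$ taking values in $\dS$, and that the nondeterministic choices of $\mu$ and of its continuity modulus can be composed under $\mval$ using the monad structure.
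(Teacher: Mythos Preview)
Your proposal is correct and follows essentially the same approach as the paper: apply \autoref{ax:continuity} together with \autoref{lem:partial-baire-choice} to extract the partial modulus $\mu$ satisfying condition~(1), then invoke \autoref{lem:partial-continuity} on this $\mu$ (whose domain is semi-decidable via $p$) to obtain condition~(2). The paper's proof is more terse but structurally identical.
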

\begin{proof}
By applying \autoref{ax:continuity} to $p$, we get
\[
\all{x : \dN \to \dN} \defineds{(p\ x)} \to \mval \some{m : \dN}
\all{y : \dN \to \dN} \restr{x}{m} = \restr{y}{m}\to
\defineds{(p\ y)}.
\]
Applying further \autoref{lem:partial-baire-choice} to this term, we nondeterministically get a partial modulus function with the required domain:
\[
\mval \some{\mu : (\dN \to \dN) \pto \dN}
\begin{aligned}[t]
\bigl(&\dom(\mu)
=
\{x : \dN \to \dN \mid \defineds{(p\ x)}\}
\\
&{}\land
\lall{x \in \dom(\mu)}
\all{y : \dN \to \dN}\,
\restr{x}{\mu\,x}
=\restr{y}{\mu\,x}
\to
\defineds{(p\ y)}
\bigr).
\end{aligned}
\]
By \autoref{lem:partial-continuity},  any such $\mu$ is continuous.
\end{proof}

\subsection{Continuity on Euclidean Spaces}
In the preliminary version \cite{DBLP:conf/mfcs/Konecny0T23} of this extended paper we only consider the special case of the Euclidean space.
To this end, we formalize continuity only for functions $f : \dR \to \dS$.
As the formalization is quite different from what is presented in this paper, let us briefly restate the definitions and show how they relate to the new principles.

The idea of the formalization in \cite{DBLP:conf/mfcs/Konecny0T23} is based on the fact that efficient implementations of exact real computation often implement real numbers as infinite sequences of rational intervals.
Typically, the whole computation is done by replacing real numbers by fixed precision intervals.
If the precision turns out to be insufficient at some point, the computation is restarted with intervals of higher precision.
To gain some intuition for the following definition, let us consider an implementation of $f: \dR \to \dS$ and the computation of $f\,x$ for some concrete $x : \dR$.
The program starts with an interval containing $x$ and tries to evaluate the test on this interval. If the test succeeds, it returns $\definedx$, otherwise the computation is repeated with a smaller interval. Thus, whenever $\defineds{(f\,x)}$, this process must eventually terminate after finitely many refinements. Equivalently, there is some rational interval containing $x$ on which the internal implementation can already certify that $\defineds{(f\,x)}$ holds. 

To formalize this intuition, define a type $\dD$ of dyadic rational numbers, for example as pairs of integers, and let $(d, n) : \dD \times \dN$ represent the interval $(d - 2^{-n}, d + 2^{-n}) \subset \IR$.
We can then view the test performed during refinement as a Boolean-valued function $\hat{f} : \dD \times \dN \to \dB$, where $\ttrue$ means success, while \(\tfalse\) means that the current interval is not sufficient (i.e.\ undecided).
That is, for a map $f : \dR \to \dS$, we say that a function $\hat f : \dD \times \dN \to \dB$ is an \inlinedef{interval extension} of $f$ whenever it satisfies the following inclusion property:
\[
\mathrm{incl}\  \hat{f} \equivto \lall{d : \dD} \lall{n : \dN} \hat{f} \ (d, n) = \ctrue \rightarrow \left ( \lall{x : \dR} \abs{x-d} \leq 2^{-n} \rightarrow \defineds{(f\  x)} \right ).  
\]
In words, whenever the interval extension $\hat{f}$ of $f$ returns $\ctrue$ on the pair $(d,n)$, $(f\;x)$ is defined for all $x \in (d-2^{-n}, d+2^{-n})$.
The term interval extension comes from the fact that we can regard the boolean values as encodings for the two nonempty open subsets (intervals) of Sierpinski space, i.e.\ $\ctrue : \dB$ is an encoding for  $\{\definedx\}$ and 
$\cfalse :\dB$ is an encoding for $\{\definedx, \undefinedx\}$. Thus $\hat{f}$ extends the function $f$ to a function that maps nonempty real (dyadic) intervals to nonempty intervals in Sierpinski space.

However, not all interval extensions are interesting as the inclusion property holds also for the trivial map $\hat{f}$ that returns $\cfalse$ for all inputs.
We say that an interval extension $\hat{f} : \dD \times \dN \to \dB$ of a function $f : \dR \to \dS$ is \inlinedef{tight} if whenever $\defineds{(f\ x)}$, there classically exists an interval containing $x$ for which $\hat{f}$ indicates that $\definedx$ is the only valid answer on that interval:
\[
\mathrm{tight}\  \hat{f} \equivto \lall{x : \dR} \defineds{(f\ x)} \rightarrow \csome{d : \dD} \csome{n : \dN} \abs{x-d} \leq 2^{-n} \land \hat{f}\ (d, n) = \ctrue
\]

We then state continuity by saying that
for any map $f : \dR \to \dS$, there nondeterministically exists an interval extension of it that is tight, i.e.
\begin{equation}\label{eq:old-continutiy}
\all{f: \dR \to \dS} \mval \some{\hat{f} : \dD \times \dN \to \dB} 
\mathrm{incl}\ \hat{f} \land \mathrm{tight}\ \hat{f}.
\end{equation}
Such a formulation is rather natural in our setting as assuming that the underlying exact real computation is done via expressing real numbers by converging sequences of intervals with dyadic rational endpoints, \autoref{eq:old-continutiy} can be realized simply by the trivial operation extracting the realizer $\hat f$ from $f$.
Note that since a single function admits several different realizers, the procedure of extracting realizers is nondeterministic.

The continuity defined in \autoref{ax:continuity} by itself is weaker than the one defined in \autoref{eq:old-continutiy}.
The tightness property in \autoref{eq:old-continutiy} implies that for each $x : \dR$ we can find a dyadic number $d$ and an integer $n$ such that $f$ is defined on the whole ball with radius $2^{-n}$ centered at $d$, and the ball contains $x$.
Thus, for a function $f: \dR \to \dS$, we can cover all real numbers with $\defineds{(f\ x)}$ simply by testing $\hat{f}$ on dyadic balls.
On the other hand, for pointwise continuity as defined in \autoref{ax:continuity}, the modulus at some $x : \dR$ and dyadic rationals close to $x$ can a priori be completely unrelated, and thus covering with balls at dyadic centers might miss some reals.

However, using \autoref{lem:continuous-modulus}, i.e., the fact that for a function on Baire space the modulus function is also continuous (which additionally required the Baire choice axiom), it is possible to derive \autoref{eq:old-continutiy}.
As the old continuity principles become obsolete in the new formalization, we do not provide a formal proof of this derivation in the Coq library, but let us nonetheless briefly sketch the idea here informally.
First, we can replace a real number $x : \dR$ by an approximation function $\varphi : \dN \to \dN$, encoding a dyadic sequence rapidly converging to $x$.
Thus, we can replace the function $f: \dR \to \dS$ with a function $f': (\dN \to \dN) \to \dS$.
Let $d : \dD$ and $\varphi_d$ encode the trivial approximation of $d$ by a constant sequence.
Assume $f'$ is defined on $\varphi_d$, then by continuity, choosing a large enough $m : \dN$ we can make sure that $\defineds{(f'\ \psi)}$ for all $\psi$ coinciding with $\varphi_d$ on the first $m$ elements. 
Further, since the modulus function $\mu_{f'}: (\dN \to \dN) \pto \dN$ of $f'$ is continuous at $\varphi_d$ it follows that again choosing $m : \dN$ large enough,   
for any sequence $\psi$ coinciding with $\varphi_d$ on the first $m$ elements, $\mu_{f'}\ \varphi = \mu_{f'}\ \psi$.
Thus, we define $(\hat{f}\; d\; n)$ to be true iff $n$ is large enough to make both the above statements hold.

Recalling that $f'$ simulates $f$ on rapid sequences of dyadics converging to a real number, it follows that $(f\;x)$ is defined whenever $\abs{x-d} \leq 2^{-n}$, i.e. the inclusion property of $\hat{f}$.
Now assume $(f\;x)$ is defined for some $x : \dR$, then $(f'\;\varphi)$ is defined for all approximations $\varphi_{x_n}$ where the first $n$ entries coincide with a constant dyadic $2^{-n}$ approximation $d_n : \dD$ of $x$.
Now by continuity of $\mu_{f'}$ at some point $(\mu_{f'}\ \varphi_{x_n}) = (\mu_{f'}\ \varphi_{d_n}) < n$ and therefore, assuming large enough $n$, $(\hat{f}\ d_n\ n) = \ctrue$ and $\abs{x-d_n} \leq 2^{-n}$.
Thus, the tightness property also holds.
\section{Subsets}\label{sec:subsets}
Giving a classical description of an object is much simpler than showing that it exists constructively.
For example, the square root of a nonnegative real number $x : \dR$ can simply be described as a nonnegative number $y$ such that $y \cdot y = x$ without concern for which algorithm to choose.
Our formalization allows us to conveniently separate such classical definitions from their computational implementation.

Similarly, we can define classical subsets pointwise as the set of points satisfying a classical predicate.
That is, for any type $X$, we express subsets of $X$ classically as
\[
\Csubset\ X \equivto X \to \Prop.
\]
For example, the set $\{x : \dR \mid x^3 \leq 2\}$ can simply be defined by $\lam{x : \dR} x^3 \leq 2$.

We define the standard set operations and relations in the obvious way.
That is, for $S, T : \Csubset\ X$, we define 
\begin{align*}
    \emptyset &\equivto \lam{x:X}\tfalse,\\
    x \in S &\equivto S\ x,\\
    S^C &\equivto \lam{x : X} \neg (S\ x),\\
    S \cup T &\equivto \lam{x : X}x \in S \lor x \in T,\\
    S \cap T &\equivto \lam{x : X}x \in S \land x \in T,\text{ and}\\
    S \subseteq T &\equivto \lall{x : X} S\ x \rightarrow T\ x.
\end{align*}
Moreover, for indexed subsets $f : I \to \Csubset\ X$ for some type $I$,
we define 
\[
\bigcup\ f \equivto \lam{x : X}\csome{i : I}x \in f\ i\quad\text{and}\quad
\bigcap\ f \equivto \lam{x : X}\lall{i : I}x \in f\ i.
\]
We often write $\bigcup_{x :X} f$ when $f$ is a term with $x$ appearing free to denote $\bigcup \lam{x:X} f$ and $\bigcap_{x : X} f$ to denote $\bigcap \lam{x:X}f$.

When we later define a type for structured subsets e.g. 
\[
P\!\textrm{\_subset}\ X\equivto \some{S : \Csubset\ X}P\ S
\]
for some $P : \Csubset\ X \to \Type$, if it is not explicitly defined otherwise, the set operations and relations on $P\!\textrm{\_subset}\ X$ are meant to act on the first projection; e.g., \[\left((S : P\!\textrm{\_subset}\ X) \subseteq (T : \Csubset\ X) \right)\equivto 
\fproj S \subseteq T.\]
We further use shorthand notations such as $S :\subseteq X$ for $S : \Csubset\ X$, $\all{x \in S}P\ x$ for $\all{x : X} S\ x \rightarrow P\ x$, $\some{x \in S}P\ x$ for $\some{x : X} S\ x \land P\ x$, and so on. The notation $S :\subseteq X$ of type judgement should not be confused with $(S \subseteq X) : \Prop$.

Our goal is to assign computational content to classically defined subsets which in turn allows us to extract programs that perform operations on these sets. 
In constructive mathematics, topological notions such as open, closed, compact, and so on, are closely linked to the kind of information that can be computed about the set.
For instance, an open set in constructive mathematics is not just a collection of points but is typically defined by a rule or algorithm that can semi-decide whether any given point belongs to the set.
Similarly, in computable analysis, computational content is often provided by defining representations for specific spaces of subsets, such as spaces of open or closed sets and performing computations on these representations (see e.g.\ \cite{Pauly2016OnTT,collins2020} for an overview).

We follow a similar approach by enriching classical subsets with computational content using topological notions such as open, compact and overt. 
Note that although we use the same notions as in point-set topology, one should not think about them as properties of the point set, but as the computational information additionally attached to the set.

The definitions are possible without restricting to specific spaces.
However, the proofs in the general setting often rely on unbounded search or other procedures with unbounded complexity and thus do not encode useful programs.
Later, we therefore consider the case of metric spaces, and in particular subsets of Euclidean space, more specifically and specialize some of the theorems to get more efficient algorithms.

\subsection{Open and Closed Sets}
Recall that Sierpinski space is the topological space on the two-point set $\{\definedx, \undefinedx\}$ with open subsets $\emptyset$, $\{\definedx\}$, and $\{\definedx, \undefinedx\}$. 
Thus, for a subset $T \subseteq X$ of a topological space $X$ we can consider its characteristic function $\chi_{T} : X \to \dS$, defined by
\[
\chi_T(x) =
\begin{cases}
\definedx & \text{if } x \in T,\\
\undefinedx & \text{otherwise.}
\end{cases}
\]
As $T$ is open iff $\chi_{T}$ is continuous, we can use its characteristic function as a witness of the set being open.
This leads to the following definition.
\begin{defi}[\coqref{open}{/formalization/Hyperspace/Subsets.v}{13},\coqref{closed}{/formalization/Hyperspace/Subsets.v}{134}]\label{def:open-closed}
We identify \emph{open sets} with their characteristic function:
\begin{align*}
\isopen\  (U :\subseteq X) & \equivto \some{f : X \to \dS} \lall{x : X} \defineds{(f\ x)} \,\liff \, x \in U, \\
\sopen\ (X : \Type) &\equivto \some{U :\subseteq X}\isopen\ U\ .
\end{align*}
Similarly, we define \emph{closed sets} as their complement:
\begin{align*}
\isclosed\  (A :\subseteq X) &\equivto \some{f : X \to \dS} \lall{x : X} \defineds{(f\ x)} \,\liff\, x \not \in A, \\
\sclosed\ (X:\Type) &\equivto \some{A :\subseteq X }\isclosed\ A\ .
\end{align*}
\end{defi}
By considering the function $f: X \to \dS$ to be computable, we automatically  
get the computational information that is assigned to open and closed subsets.
Namely, a subset $T$ of $X$ is open iff $x \in T$ is semi-decidable and closed if $x \not \in T$ is semi-decidable for $x : X$.
The definition allows simple, straightforward proofs of many basic properties of open and closed subsets. We list a few examples without proof.
A more exhaustive treatment can be found in \cite{Pauly2016OnTT}.
\begin{lem}[\coqref{open\_countable\_union}{/formalization/Hyperspace/Subsets.v}{59}, \coqref{open\_intersection}{/formalization/Hyperspace/Subsets.v}{99}, \coqref{closed\_union}{/formalization/Hyperspace/Subsets.v}{136}, \coqref{closed\_countable\_intersection}{/formalization/Hyperspace/Subsets.v}{151}]\label{lem:open-facts}
The following properties hold for open and closed subsets.
\begin{enumerate}
    \item Given any countable open subsets $U : \dN \to \sopen\ X$, the countable union $\bigcup  U$ is also open.
    \item If $U_1$ and $U_2$ are open then so is their intersection $U_1 \cap U_2$.
    \item If $A$ and $B$ are closed then so is their union $A \cup B$.
    \item Given any countable closed subsets $A : \dN \to \sclosed\ X$,  the countable intersection $\bigcap A$ is again closed.
\end{enumerate}
\end{lem}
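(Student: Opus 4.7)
The plan is to assign to each set constructor a corresponding logical operation on Sierpinski values, using the Kleenean operators (in particular \autoref{ax:disj}) together with the retraction $\skinv : \dK \to \dS$ to obtain analogous operators on $\dS$. Once we have these, each claim reduces to routinely verifying that the characteristic function built in this way correctly classifies membership.

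For (1), given $U : \dN \to \sopen\ X$, unfold each $U\ i$ to obtain a family of characteristic functions $f_i : X \to \dS$ with $\defineds{(f_i\ x)} \liff x \in U\ i$. For a fixed $x$, apply \autoref{ax:disj} to the sequence $\lam{i:\dN}\fproj(f_i\ x) : \dN \to \dK$ to obtain a Kleenean $k_x$ with $\upc{k_x} \liff \csome{i} \upc{\fproj(f_i\ x)}$ and $k_x \neq \ctrue$ (which is what we need to lift it to $\dS$ via $\skinv$). Define $g\ x \equivto \skinv\ k_x$; then $\defineds{(g\ x)} \liff \csome{i : \dN}\, x \in U\ i \liff x \in \bigcup U$, which witnesses openness. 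For (2), construct the Sierpinski conjunction $s_1 \land_{\dS} s_2 \equivto \skinv(\fproj s_1 \land_{\dK} \fproj s_2)$ out of the three-valued Kleenean conjunction; the pointwise product of the two characteristic functions then witnesses openness of $U_1 \cap U_2$.

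Claims (3) and (4) are obtained by dualizing via the complement. A closed set $A$ is by definition witnessed by a Sierpinski-valued function classifying $x \notin A$, so the characteristic data for a closed set is exactly the openness data of its complement. For (3), if $A, B$ are closed with classifiers $f_A, f_B$, then $x \notin A \cup B \liff (x \notin A \land x \notin B)$, so the Sierpinski conjunction from (2) yields the required classifier. For (4), with closed sets $A : \dN \to \sclosed\ X$ having classifiers $f_i$, we have $x \notin \bigcap A \liff \csome{i}\, x \notin A\ i$, and the countable disjunction construction from (1) applied to the $f_i$'s supplies the classifier.

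The only subtle point is that the semi-decidable witnesses for \emph{classical} statements in $\Prop$ must match the set-theoretic definitions of $\bigcup$, $\bigcap$, $\cap$, $\cup$, which involve $\csomex$ and $\land$/$\lor$ in $\Prop$. The side condition $k \neq \ctrue$ appearing in \autoref{ax:disj} is precisely what lets us pass into $\dS$, so one must not forget to thread it through the lift $\skinv$; this is the only step that is not completely formulaic. Everything else is unfolding definitions and applying propositional extensionality to the resulting equivalences.
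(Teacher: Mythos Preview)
Your approach is correct and is essentially what the formalization does; the paper itself omits the proof entirely (it lists the lemma ``without proof'' and defers to the Coq code). Using \autoref{ax:disj} for the countable cases and the Kleenean conjunction for the finite cases, then dualizing via complements for the closed statements, is exactly the intended argument.

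One small remark: your parenthetical about the side condition in \autoref{ax:disj} being ``precisely what lets us pass into $\dS$'' is slightly off. The retraction $\skinv : \dK \to \dS$ is total, so no side condition is needed to apply it; what matters is that $\defineds{(\skinv\ k)} \liff \upc{k}$, which follows from $\skinv$ being a retraction of $\fproj$. If instead you wanted to build the $\dS$-element directly as a dependent pair, you would need $k \neq \cfalse$ (note: $\dS \equivto \some{b:\dK} b \neq \cfalse$), not $k \neq \ctrue$ as written; the latter appears to be a typo in the paper's statement of \autoref{ax:disj} that you have propagated. Either way, the argument goes through.
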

\subsection{Compact and Overt Subsets}
Compactness is one of the most fundamental concepts in topology and analysis.
Compactness in classical topology is typically defined by the property that every open cover has a finite subcover. 
Defining compactness constructively is more subtle (see e.g.\ \cite{diener2008compactness}).
A generic definition for compactness on represented spaces requires that we can effectively check if an open subset is a cover.
Topologically, this corresponds to the upper Vietoris topology \cite{vietoris1922bereiche}.
That is, compactness is defined as follows: 
\begin{defi}[\coqref{compact}{/formalization/Hyperspace/Subsets.v}{201}]
We define a classical subset $K :\subseteq X$ being \emph{compact} by
\[
    \iscompact\ K \equivto
    \some{f : \sopen\  X \to \dS} \lall{U : \sopen\  X} \defineds{(f\ U)} \,\liff\, K \subseteq U.
\]
Consequently, the type of compact subsets is defined as
\[
    \scompact\ X \equivto
    \some{K : \Csubset\ X} \iscompact\ K.
\]
\end{defi}
That is, a subset $K :\subseteq X$ is compact if and only if, for any open set $U$, the statement $K \subseteq U$ is semi-decidable. Equivalently, for any semi-decidable predicate $P : X \to \Prop$, the statement  $\lall{x \in K} P\  x$ is semi-decidable.

Let us demonstrate the definition by stating a few simple properties of compact sets.
\begin{lem}[\coqref{is\_compact\_union}{/formalization/Hyperspace/Subsets.v}{203}, \coqref{is\_compact\_intersection}{/formalization/Hyperspace/Subsets.v}{229}, \coqref{singleton\_compact}{/formalization/Hyperspace/Subsets.v}{239}]\label{lem:compact-facts}
\begin{enumerate}
\item If $K_1,K_2 :\subseteq X$ are compact, then so is their union $K_1 \cup K_2$.
\item If $K :\subseteq X$ is compact and $A :\subseteq X$ is closed, then their intersection $K \cap A$ is compact.
\item For any $x : X$,  the singleton set $\{x\}$, 
defined by $\lam{y:X}y = x$, is compact. 
\end{enumerate}
\end{lem}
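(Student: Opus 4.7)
The plan is to prove each part by constructing the required witness $f : \sopen\ X \to \dS$ directly from the hypothesized data, relying on the fact that $\dS$ inherits countable disjunction and finite conjunction from $\dK$ and that union/intersection of open sets is open.

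For (1), given compactness witnesses $f_1, f_2 : \sopen\ X \to \dS$ for $K_1, K_2$, I would set the witness for $K_1 \cup K_2$ to $\lam{U} f_1\ U \land f_2\ U$, where $\land$ is the Sierpinski conjunction lifted from $\dK$. The correctness reduces to the classical set-theoretic equivalence $(K_1 \cup K_2) \subseteq U \liff (K_1 \subseteq U) \land (K_2 \subseteq U)$, together with the fact that $s_1 \land s_2$ is defined iff both $s_1$ and $s_2$ are defined.

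For (2), the key identity is $K \cap A \subseteq U \liff K \subseteq U \cup A^C$. Unpacking the definition of $\isclosed\ A$, the witness for $A$ being closed is precisely a characteristic function for $A^C$; hence $A^C$ carries the structure of an open set. Since unions of opens are open (instantiating \autoref{lem:open-facts} or repeating its construction via Sierpinski disjunction), $U \cup A^C : \sopen\ X$ whenever $U : \sopen\ X$. I would then define the compactness witness as $\lam{U} f_K\ (U \cup A^C)$, where $f_K$ is the witness of $K$'s compactness, and conclude from the aforementioned equivalence.

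For (3), the witness is most direct: if $U : \sopen\ X$ has characteristic function $\chi_U : X \to \dS$, then $\{x\} \subseteq U \liff x \in U \liff \defineds{(\chi_U\ x)}$, so I define $f\ U \equivto \chi_U\ x$. None of these constructions is mathematically deep; the main obstacle will be bookkeeping — threading the correct $\Sigma$-projections to extract characteristic functions from the $\sopen$/$\sclosed$/$\scompact$ records, and verifying the classical set identities in $\Prop$. Part (2) is the only one that requires a small conceptual step, namely recognising that the closedness witness of $A$ already supplies an open structure on $A^C$ which can be fed directly into $f_K$.
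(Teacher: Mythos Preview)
Your proposal is correct and follows exactly the expected constructions. The paper does not spell out a proof for this lemma in the text (it merely lists the properties and defers to the Coq formalization), but the witnesses you give --- conjunction of the two tests for the union, precomposition with $U \mapsto U \cup A^C$ for the intersection with a closed set, and evaluation of the characteristic function at $x$ for the singleton --- are precisely the canonical ones and are what the formalization implements.
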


For a space $X$ where inequality is semi-decidable, i.e.,
\[
\all{x, y : X}\semidec\ (x\neq y)
\]
which for represented spaces is equivalent to the space being computably T2 \cite[Proposition 6.2]{Pauly2016OnTT}, and includes the Polish spaces in \autoref{sec:polish}, the following lemma becomes useful.
\begin{lem}[\coqref{compact\_closed\_ineq\_semidec}{/formalization/Hyperspace/Subsets.v}{251}, \coqref{compact\_closed}{/formalization/Hyperspace/Subsets.v}{258}]\label{lem:compact-closed}
The following are equivalent.
\begin{enumerate}
 \item Every compact set $K :\subseteq X$ is closed.
 \item Inequality on $X$ is semi-decidable.
\end{enumerate}
\end{lem}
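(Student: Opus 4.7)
The plan is to prove the two implications separately, using the compactness witness in one direction and the singleton-compact lemma in the other.

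For the direction $(2) \Rightarrow (1)$, assume inequality on $X$ is semi-decidable and let $K :\subseteq X$ be compact with witness $f_K : \sopen\ X \to \dS$ satisfying $\defineds{(f_K\ U)} \liff K \subseteq U$. To show $K$ is closed, I need to exhibit a function $g : X \to \dS$ with $\defineds{(g\ x)} \liff x \notin K$. For each $x : X$, define the open set $U_x \equivto \{y : X \mid y \neq x\}$: by the semi-decidability of $\neq$, the map $y \mapsto (y \neq x)$ gives a Sierpinski-valued characteristic function, so $U_x : \sopen\ X$. Then set $g\ x \equivto f_K\ U_x$. Classically, $K \subseteq U_x$ holds iff no point of $K$ equals $x$, i.e.\ iff $x \notin K$, so $\defineds{(g\ x)} \liff K \subseteq U_x \liff x \notin K$, which is exactly $\isclosed\ K$.

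For the direction $(1) \Rightarrow (2)$, assume every compact subset of $X$ is closed, and let $x, y : X$ be arbitrary. By the singleton compactness lemma (third item of \autoref{lem:compact-facts}), the set $\{x\}$ is compact, and hence by assumption it is closed. Therefore we have some $f : X \to \dS$ with $\defineds{(f\ y)} \liff y \notin \{x\} \liff y \neq x$, which directly witnesses that $y \neq x$ is semi-decidable.

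The two directions are essentially dual uses of the same idea (turn a point into the open complement of its singleton, or turn a singleton into a closed set), so neither should present a serious obstacle. The only subtlety worth flagging is the classical step $K \subseteq U_x \liff x \notin K$: it uses excluded middle on points of $K$, but since this equivalence lives in $\Prop$, the available classical axioms make it immediate. Everything else is plugging the compactness/closedness witnesses into the definitions.
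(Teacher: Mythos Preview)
Your proposal is correct and follows essentially the same approach as the paper: for $(1)\Rightarrow(2)$ you both use that singletons are compact hence closed, and for $(2)\Rightarrow(1)$ you both form the open set $U_x = X\setminus\{x\}$ and test $K\subseteq U_x$ via the compactness witness. Your write-up is simply a bit more explicit about the witness functions.
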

\begin{proof}
Formally, we prove the following statement:
\[
\left (\all{K : \scompact\ X} \isclosed\ K \right) \liff \all{x, y : X} \semidec\ (x \neq y).
\]
If every compact set is closed then singletons are closed and thus inequality is semi-decidable.
Now assume $K :\subseteq X$ is compact. 
We need to show that $x \notin K$ is semi-decidable.
As $\{x\}$ is closed, $U \equivto X \setminus \{x\}$ is open.
Thus, to test $x \notin K$ it suffices to test if $K \subseteq U$.
This shows that $K$ is closed.
\end{proof}
\begin{cor}[\coqref{is\_compact\_countable\_intersection}{/formalization/Hyperspace/Subsets.v}{277}]
Assume inequality on $X$ is semi-decidable and $K: \dN \to \scompact\ X$ is a sequence of compact subsets of $X$. Then their countable intersection $\bigcap K$ is compact as well.
\end{cor}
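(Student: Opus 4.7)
The plan is to combine \autoref{lem:compact-closed}, \autoref{lem:open-facts}(4), and \autoref{lem:compact-facts}(2). The idea is simple: compactness is preserved under intersection with a closed set, and countable intersections of closed sets are closed, so if we can split off one term of the sequence as the ``compact part'' and treat the remainder as the ``closed part'', we are done.

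First, since inequality on $X$ is assumed to be semi-decidable, \autoref{lem:compact-closed} gives us that every compact subset of $X$ is closed. In particular, every term of the sequence $K$ is simultaneously compact and closed, so we obtain a sequence of closed subsets $K' : \dN \to \sclosed\ X$ defined by $K'\, n \equivto K\, (n+1)$ (with the obvious reindexing on the underlying classical subset). Second, by \autoref{lem:open-facts}(4), the countable intersection $A \equivto \bigcap K'$ is closed.

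Finally, we rewrite $\bigcap K$ as $(K\, 0) \cap A$: classically, $x \in \bigcap K$ iff $x \in K\, n$ for all $n$, which is the same as $x \in K\, 0 \land \lall{n : \dN} x \in K\, (n+1)$, i.e.\ $x \in K\, 0 \land x \in A$. Since $K\, 0$ is compact and $A$ is closed, \autoref{lem:compact-facts}(2) then yields that $(K\, 0) \cap A$ is compact, and transporting along the set equality gives the desired compactness of $\bigcap K$.

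The main obstacle is not conceptual but bookkeeping: one must be a bit careful with the shift of index and the fact that our subset operations on structured subsets act on the underlying classical subset via the first projection, so the extensional equality $\bigcap K = (K\, 0) \cap A$ needs to be used to transport the compactness witness. Once this equality is in hand, each individual lemma invocation provides the right characteristic function, and the composite semi-decidable test ``$\bigcap K \subseteq U$'' is obtained directly from the witnesses supplied by the cited lemmas.
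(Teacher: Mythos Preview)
Your proposal is correct and follows essentially the same approach as the paper's proof: use \autoref{lem:compact-closed} to get that each $K_i$ is closed, apply \autoref{lem:open-facts}(4) to obtain a closed countable intersection, split off $K_0$ as the compact part via $\bigcap K = K_0 \cap \bigcap_i K_{i+1}$, and conclude with \autoref{lem:compact-facts}(2). The paper's write-up is terser but the structure is identical.
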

\begin{proof}
By \autoref{lem:compact-closed} each $K_i$ is also closed and thus by \autoref{lem:open-facts}, $\bigcap  K$ is closed as well.
As $\bigcap K = K_0 \cap \bigcap_i K_{i+1}$, the statement follows from \autoref{lem:compact-facts}.
\end{proof}
Using continuity of the test-function, we can relate our definition of compactness to the constructive version of the classical finite subcover definition of compactness.
That is, we show the following lemma.
\begin{lem}[\coqref{compact\_fin\_cover}{/formalization/Hyperspace/Subsets.v}{302}]\label{lem:compact-fin-cover}
Let $K :\subseteq X$ be compact. 
Then every open cover of $K$ has a finite subcover.
Formally,
\[
\all{U : \dN \to \sopen\ X} K \subseteq \bigcup U \rightarrow \mval \some{m : \dN} K \subseteq \bigcup_{n=0}^m (U\ n).
\]
Here, the finite union $\bigcup_{n=0}^m (U\ n)$ is defined by primitive recursion on $m : \dN$.
\end{lem}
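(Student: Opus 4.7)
The plan is to reduce the statement to \autoref{ax:cchoice} applied to a natural sequence of Sierpinski tests. For each $m : \dN$, define $V_m := \bigcup_{n=0}^m U\ n$ by primitive recursion. By iterating the binary open union guaranteed by \autoref{lem:open-facts}(2), each $V_m$ lives in $\sopen\ X$ with characteristic function $\chi_{V_m}(x) = \chi_0(x) \lor \cdots \lor \chi_m(x)$. Letting $f : \sopen\ X \to \dS$ be the witness of $\iscompact\ K$, I set $s_m := f\ V_m$, so that $\defineds{s_m}$ is equivalent to $K \subseteq V_m$. Producing the required $m$ nondeterministically then amounts to applying \autoref{ax:cchoice} to the Kleenean sequence $m \mapsto \fproj(s_m)$, which only demands the classical statement $\csome{m} \upc{\fproj(s_m)}$.

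To establish this classical existence I would form the open set $W :\subseteq X$ whose characteristic function is $\chi_W(x) := \bigvee_m \chi_{V_m}(x)$, where the countable Sierpinski join is obtained through \autoref{ax:disj} wrapped via $\skinv$. As classical subsets, $W = \bigcup_m V_m = \bigcup U$, so the hypothesis $K \subseteq \bigcup U$ gives $K \subseteq W$, and hence $f\ W$ is defined. What remains is the step: if $f\ W$ is defined, then $f\ V_m$ is defined for some finite $m$. Intuitively this is forced by the fact that the computation witnessing $\defineds{(f\ W)}$ terminates in finite time and hence can only probe a finite prefix of the sequence $(\chi_{V_m})_m$ that defines $\chi_W$.

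Making this intuition rigorous is the main obstacle. The cleanest route is to invoke continuity: since $f$ is a term of our type theory, its action on the coherently constructed input $W$ can be analysed by encoding the relevant characteristic data (the sequence $m \mapsto \chi_{V_m}$ evaluated at witnesses) into a Baire-type sequence, after which \autoref{ax:continuity} (or the derived \autoref{lem:semi-cont}) yields a finite modulus $m$ beyond which only the first $m$ terms matter. For that value of $m$, definedness of $f\ W$ transports to definedness of $f\ V_m$, so $K \subseteq V_m$ holds classically.

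Finally, with $\csome{m} \upc{\fproj(s_m)}$ in hand, \autoref{ax:cchoice} yields $\mval \some{m : \dN} \upc{\fproj(s_m)}$. Unfolding the compactness specification rewrites this as $\mval \some{m : \dN} K \subseteq V_m$, i.e.\ exactly $\mval \some{m : \dN} K \subseteq \bigcup_{n=0}^m (U\ n)$, completing the proof.
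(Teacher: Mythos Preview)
Your overall plan---use continuity to extract a finite index, then read off the subcover---is the right idea, but the argument takes an unnecessary detour and leaves the crucial step imprecise.

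The detour is the pass through \autoref{ax:cchoice}. You propose to first establish the \emph{classical} existence $\csome{m}\defineds{s_m}$ via a continuity argument, and then apply nondeterministic countable selection to the sequence $m\mapsto\fproj(s_m)$. But any invocation of \autoref{ax:continuity} already yields a term of the form $\mval\some{m}\dots$, so downgrading that to $\csome{m}\dots$ only to re-upgrade via \autoref{ax:cchoice} is redundant.

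The imprecision is in how you intend to invoke continuity. You apply the compactness witness $f:\sopen\ X\to\dS$ to the single open $W$ and want to conclude that only a finite prefix of $(\chi_{V_m})_m$ matters. But \autoref{ax:continuity} is stated for maps of type $(\dN\to Y)\to\dS$, not for $\sopen\ X\to\dS$; your phrase ``encoding the relevant characteristic data into a Baire-type sequence'' is exactly the missing content, and \autoref{lem:semi-cont} does not help (its hypothesis is semi-decidable \emph{equality} on the codomain).

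The paper makes this step explicit and thereby shortens the whole proof: define $g:(\dN\to\sopen\ X)\to\dS$ by $g\ V := f\big(\bigcup V\big)$, using that countable unions of opens are open (\autoref{lem:open-facts}). The hypothesis $K\subseteq\bigcup U$ gives $\defineds{(g\ U)}$, and \autoref{ax:continuity} applied to $g$ at $U$ (with $Y=\sopen\ X$) directly produces
\[
\mval\some{m:\dN}\lall{V:\dN\to\sopen\ X}\restr{V}{m}=\restr{U}{m}\to\defineds{(g\ V)}.
\]
Instantiating $V$ with the sequence that agrees with $U$ below $m$ and is constantly $U\ m$ afterwards gives $\bigcup V=\bigcup_{n=0}^m(U\ n)$ and hence $K\subseteq\bigcup_{n=0}^m(U\ n)$---no countable selection, no encoding step.
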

\begin{proof}
As $K$ is compact, we can test if an open set covers $K$ and as the countable union of open sets is open (\autoref{lem:open-facts}), we can also test $K \subseteq \bigcup U$.
Thus, there is a function $f: (\dN \to \sopen\ X) \to \dS$ that tests if a countable union of open sets covers $K$, i.e., such that for any $V : \dN \to \sopen\ X$, 
$\defineds{(f\ V)} \liff K \subseteq \bigcup V$.
Furthermore, since we know that $U$ is a cover of $K$, it holds that $(f\ U)$ is defined.
Thus by applying continuity (\autoref{ax:continuity}) on $f$ and $U$, we get 
\[
\mval \some{m : \dN} \all{V : \dN \to \sopen\ X} \restr{V}{m} = \restr{U}{m} \rightarrow \defineds{(f\ V)}.
\]
Let us define $V : \dN \to \sopen\ X$ by $V\ i = U\ i$ for $i < m$ and $V\ i = U\ m$ otherwise.
Then $\bigcup V = \bigcup_{n=0}^m (U\ n)$ and $K \subseteq \bigcup V$ which shows that $\bigcup_{n=0}^m (U\ n)$ is a finite subcover of $U$.
\end{proof}

The above definition of compactness has a dual notion which is known as overtness and topologically corresponds to the lower Vietoris topology \cite{vietoris1922bereiche}.
\begin{defi}[\coqref{overt}{/formalization/Hyperspace/Subsets.v}{383}]
A classical subset $V :\subseteq X$ is defined to be \emph{overt} if
\[
    \isovert\ V \equivto
    \some{f : \sopen\  X \to \dS} \lall{U : \sopen\  X} \defineds{(f\ U)} \,\liff\, V \cap U \neq \emptyset
\]
and the type of overt subsets is defined by 
\[
\sovert\ X \equivto \some{V : \Csubset\ X}\isovert\ V.
\]
\end{defi}
Thus computationally, we can effectively verify if an open set intersects the set, or equivalently we can semi-decide $\csome{x \in A} P\  x $ for a semi-decidable property $P$.
Overtness usually does not appear in classical mathematics, as any function defined in the above way is continuous, i.e.\ classically every set is overt.
However, once computability is required, the notion becomes non-trivial and it has numerous applications in computable analysis and constructive mathematics \cite{Pauly2016OnTT,TaylorP:fofct,escardo2004synthetic,DBLP:journals/jla/CoquandS10}.

Similarly to compactness we can show some basic facts about overt sets.
\begin{lem}[\coqref{overt\_countable\_union}{/formalization/Hyperspace/Subsets.v}{409}, \coqref{overt\_open\_intersection}{/formalization/Hyperspace/Subsets.v}{424}, \coqref{singleton\_overt}{/formalization/Hyperspace/Subsets.v}{385}]
\begin{enumerate}
    \item Given a countable family $V : \dN \to \sovert\ X$ of overt subsets,  
    the countable union $\bigcup V$ is also an overt subset.
\item Let $V :\subseteq X$ be overt and $U :\subseteq X$ open, then their intersection $V \cap U$ is overt.
\item For any $x : X$,  the singleton set $\{x\}$ is overt. 
\end{enumerate}
\end{lem}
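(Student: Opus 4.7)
The plan is to construct, for each of the three claims, the required Sierpinski-valued test function on $\sopen\ X$ directly from the witnesses supplied by the hypotheses, and to verify the defining equivalence of $\isovert$ pointwise.

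For the singleton case (3), I would define the test function by $f\ U \equivto \chi_U\ x$, where $\chi_U : X \to \dS$ is the characteristic function supplied by the openness witness of $U$. Then $\defineds{(f\ U)} \liff x \in U \liff \{x\} \cap U \neq \emptyset$, which is exactly the required equivalence.

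For the intersection case (2), assume $V$ is overt with witness $g_V : \sopen\ X \to \dS$, and let $U_0$ be open. Since the finite intersection of open sets is open by \autoref{lem:open-facts}, the assignment $f\ U \equivto g_V\ (U_0 \cap U)$ is a well-defined map $\sopen\ X \to \dS$. Then $\defineds{(f\ U)} \liff V \cap (U_0 \cap U) \neq \emptyset \liff (V \cap U_0) \cap U \neq \emptyset$, as required.

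For the countable union (1), given $V : \dN \to \sovert\ X$ with witnesses $g_n : \sopen\ X \to \dS$, I would use the reformulation $(\bigcup V) \cap U \neq \emptyset \liff \csome{n : \dN} V_n \cap U \neq \emptyset$. For each fixed $U$, apply the countable disjunction \autoref{ax:disj} to the sequence $n \mapsto \fproj(g_n\ U) : \dN \to \dK$ to obtain a Kleenean $k$ with $\upc{k} \liff \csome{n : \dN} \defineds{(g_n\ U)}$, then pass through $\skinv$ to turn $k$ into a Sierpinski whose defined-ness coincides with $\upc{k}$. This yields the desired $f : \sopen\ X \to \dS$.

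The main obstacle, and the only place where something beyond direct transport is needed, is (1): one must combine an infinite family of semi-decidable conditions into a single semi-decidable test, rather than merely composing or reindexing existing witnesses. The countable disjunction axiom together with $\skinv$ is precisely designed to bridge that gap, so once the reduction to a sequence of Kleeneans is in place the rest is routine.
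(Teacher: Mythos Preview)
Your proposal is correct. The paper does not give an explicit proof of this lemma in the text; it is stated as a list of basic facts with pointers to the Coq formalization, analogous to how \autoref{lem:open-facts} and \autoref{lem:compact-facts} are presented. Your argument is the natural one and matches what the formalization does: evaluate the open-set characteristic function at $x$ for singletons, precompose the overtness witness with $U \mapsto U_0 \cap U$ for the intersection with an open, and combine the countable family of Sierpinski tests via \autoref{ax:disj} (and $\skinv$) for the countable union.
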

As equality on represented spaces is only semi-decidable for discrete spaces \cite[Theorem 8.2]{Pauly2016OnTT}, the following lemma shows that for most spaces we consider the intersection of overt sets is not overt in general.
\begin{lem}[\coqref{overt\_intersection\_eq\_semidec}{/formalization/Hyperspace/Subsets.v}{438}]
For a type $X$, assume that for all overt sets $V_1, V_2 : \sovert\ X$, their intersection $V_1\cap V_2$ is overt. Then equality on $X$ is semi-decidable.
Formally, 
\[
\left(\all{V_1, V_2 : \sovert\ X} \isovert\  (V_1 \cap V_2)\right) \rightarrow \all{x, y : X} \semidec\ (x = y)
\]
\end{lem}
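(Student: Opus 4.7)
My plan is to reduce semi-decidability of equality to a test of non-emptiness of an intersection of two singleton overt sets. Given $x, y : X$, by the earlier lemma singletons are overt, so $\{x\}$ and $\{y\}$ are both in $\sovert\ X$. By hypothesis, their intersection $\{x\} \cap \{y\}$ is then also overt, which gives a test function $f : \sopen\ X \to \dS$ such that $\defineds{(f\ U)} \liff (\{x\} \cap \{y\}) \cap U \neq \emptyset$ for every open $U$.

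To extract equality information from this test, I would apply $f$ to the whole space $X$, which is open via the constant characteristic function $\lam{z:X}\definedx$. Then $\defineds{(f\ X)}$ holds precisely when $\{x\} \cap \{y\} \neq \emptyset$. Classically, $\{x\} \cap \{y\} \neq \emptyset \liff x = y$ (the only possible witness is $x$, which must also equal $y$). Therefore $f\ X$ is a Sierpinski element witnessing semi-decidability of $x = y$.

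The proof is quite short and has no real obstacle, since everything is set up by the preceding definitions and the already-proven overtness of singletons. The only minor point to be careful about is to note that the full space $X$ is open using the trivial always-defined characteristic function, so that it can be fed into the test function of the overt intersection. Once this is observed, the equivalence between $\defineds{(f\ X)}$ and $x = y$ follows at once from the defining property of $\isovert$ and the description of the intersection of two singletons.
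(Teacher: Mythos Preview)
Your proof is correct and follows exactly the approach the paper's setup suggests: the paper does not spell out a proof in the text (it only points to the Coq formalization), but its surrounding remarks make clear that the intended argument is precisely to take the two overt singletons $\{x\}$ and $\{y\}$, use the hypothesis to make their intersection overt, and then test intersection with the full space. Your observation that $X$ is open via the constant characteristic function and that $\{x\}\cap\{y\}\neq\emptyset \liff x=y$ completes the argument just as expected.
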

The lemma emphasizes again that one should not consider the notions defined in this section as properties of the point sets: Both $\emptyset$ and singletons $\{x\} :\subseteq X$ are overt, and the set $\{x\} \cap \{y\}$ obviously coincides with one of them. However, overtness would provide us with a decision procedure to decide which of the cases we are in, which is not computable in general.

\begin{rem}
    As function extensionality holds, all types of subsets defined in this section, 
    $\Csubset\ X$, 
    $\sopen\ X$,
    $\sclosed\ X$, $\sovert\ X$, and 
    $\scompact\ X$
    are extensional in the sense that they are determined by their underlying point sets.
    To see this, consider, for example, $V_1, V_2 : \sovert\ X$ and write $V_i=(A_i,o_i)$, where $A_i \subseteq X$ and $o_i: \isovert\ A_i$ is an overt witness for $A_i$. Write further \(o_i=(f_i,p_i)\), where $f_i:\sopen\ X\to \dS$ is the semidecision procedure witnessing overtness and $p_i$ certifies its correctness.
    If $V_1 = V_2$, then trivially $A_1 = A_2$, i.e.\ the underlying point sets are equal.
    Conversely, if $A_1 = A_2$, then for every open subset $U:\sopen\ X$, 
    \[ \defineds{(f_1\ U)} \liff A_1\cap U\neq\emptyset \liff A_2 \cap U \neq \emptyset \liff \defineds{(f_2\ U)}. \]
    Thus, by function extensionality $f_1 = f_2$ and therefore $o_1 = o_2$, and it follows that
    \[V_1 = V_2 \liff (\lall{x : X}x \in V_1 \liff x \in V_2).\]
\end{rem}

\section{Subsets of Polish Spaces}\label{sec:polish}
Using basic properties of the Kleenean and Sierpinski spaces, we get short and elegant proofs for simple properties of open sets.
However, from the point of view of doing actual computations, using Sierpinski-valued functions does not allow exploiting any additional structure of the underlying space. Consequently, programs extracted from abstract proofs often fail to take advantage of more efficient algorithms that are available for concrete representations.
As we are interested in extracting exact real computation programs, in \cite{DBLP:conf/mfcs/Konecny0T23} we therefore focus on subsets of Euclidean spaces and how to encode optimized efficient algorithms by making use of particular properties of the representation.
On the other hand, many of the statements can at least be generalized to complete separable metric spaces, while still encoding almost the same algorithms.
As computations on general metric spaces have many applications in mathematics, we therefore decided to generalize our theory to this setting.
For completeness, we also prove many statements that are less relevant for using the extracted programs in actual computations, such as the equivalence of different representations of subsets. 
Still, those results are often useful in proofs and also help to build a unified theory in our formal development.
\subsection{Metric Spaces and Completeness}
A central notion in computable analysis is that of a computable metric space.
A computable metric space is a triple $(X, \distxop, \alpha)$ such that $(X, \distxop)$ is a metric space, $\alpha: \dN \to X$ is an effective enumeration of a dense subset of $X$, and such that the metric is computable on the dense subset.
Computable metric spaces are a central topic in computable analysis.
We introduce the following notions in our formalization.
\begin{defi}[\coqref{metric}{/formalization/Hyperspace/MetricSubsets.v}{14}]\label{def:metric}
For a type $X$, a \emph{metric} on $X$ is a mapping $\distxop: X \to X \to \dR$ such that 
\begin{enumerate}
\item$\lall{x, y : X} \distx{x}{y} = 0 \liff x = y$, 
\item $\lall{x, y : X} \distx{x}{y} = \distx{y}{x}$, and  
\item $\lall{x, y,  z: X} \distx{x}{y} \leq \distx{x}{z} + \distx{z}{y}$.
\end{enumerate}
We further say $X$ is a \emph{metric space} if there exists a metric on $X$, i.e., 
\[
\mathrm{metric}\;X \equivto \some{d : X \to X \to \dR} d \text { is a metric on } X 
\]
\end{defi}
\noindent
Although a metric space is formally a dependent pair of a type $X$ and a metric on it, we write both the metric space and the underlying type $X$ and its metric $\distxop$ when there is no possible confusion.

In exact real computation, the limit operation on sequences of real numbers is crucial for generating new real numbers from their approximating sequences and thereby leaving the realm of symbolic and algebraic computation.
We are mostly interested in complete metric spaces, i.e., spaces that contain the limits of sequences.
Similarly to the real number case, we introduce the following definitions.
\begin{defi}[\coqref{metric\_is\_fast\_cauchy}{/formalization/Hyperspace/MetricSubsets.v}{24},\coqref{metric\_is\_fast\_limit}{/formalization/Hyperspace/MetricSubsets.v}{25},\coqref{complete}{/formalization/Hyperspace/MetricSubsets.v}{1556}]
Let $X$ be a metric space.
\begin{enumerate}
\item A sequence $f: \dN \to X$ is called a \emph{(fast) Cauchy sequence} if 
\[ \mathrm{is\_fast\_cauchy}\  f \equivto \lall{n, m: \dN} \distx{(f\ n)}{(f\ m)} \leq 2^{-n} + 2^{-m}. \]
\item A point $x : X$ is called the \emph{(fast) limit of} $f: \dN \to X$ if 
\[ \mathrm{is\_fast\_limit}\ x\ f \equivto \lall{n : \dN} \distx{(f\ n)}{x} \leq 2^{-n}. \]
\item A subset $A :\subseteq X$ is \emph{complete} if it contains all limits of fast Cauchy sequences, i.e., 
\[ \iscomplete\ A \equivto \all{f : \dN \to A} \mathrm{is\_fast\_cauchy}\ f \rightarrow \some{x \in A}  \mathrm{is\_fast\_limit}\ x\ f.
\]
\end{enumerate}
\end{defi}
We say that $X$ is a \inlinedef{complete metric space} if $\iscomplete\ X$ holds.
It is easy to see that if the limit exists, it is unique (\coqref{metric\_limit\_unique}{/formalization/Hyperspace/MetricSubsets.v}{431}).
Thus, we typically write $\lim f = x$ instead of $\mathrm{is\_fast\_limit}\ x\ f$ and for a complete metric space we can define $\lim: (\dN \to X) \pto X$ as the partial operator mapping fast Cauchy sequences to their limits.

Again, we assume that there is an implicit coercion from the type of dependent pairs of a metric space and its completeness, to the underlying metric spaces.

As working with partial functions often leads to difficulties, it is useful to extend the limit operation to a total (albeit nondeterministic) function.
\begin{lem}[\coqref{extended\_limit}{/formalization/Hyperspace/MetricSubsets.v}{461}]\label{lem:elim}
Let $X$ be a complete metric space, then there exists a total nondeterministic extension of partial $\lim$.
Formally,
\[
\all{f : \dN \to X} \mval \some{x : X}  \mathrm{is\_fast\_cauchy}\ f \rightarrow \mathrm{is\_fast\_limit}\ x\ f.
\]
\end{lem}
\begin{proof}
The main idea is that given any sequence $f: \dN \to X$, we can nondeterministically extend it to a converging sequence (\coqref{to\_convergent}{/formalization/Hyperspace/MetricSubsets.v}{242}).
That is, we want to show
\[
\begin{array}{rl}
\all{f : \dN \to X} &\mval \some{g : \dN \to X}\\
&\quad(\mathrm{is\_fast\_cauchy}\  g)\  \land \\
&\quad(\lall{x : X} \mathrm{is\_fast\_limit}\ x\ f \rightarrow  \mathrm{is\_fast\_limit}\ x\ g).
\end{array}
\]
As $g$ is a fast Cauchy sequence, its limit exists and thus choosing $x = \lim g$ then proves the lemma.

Let us thus show how to define $g$.
To simplify the proof, we first speed up $f$, i.e., we replace $f$ by a function $f'$ such that
\[
\mathrm{is\_fast\_cauchy}\ f \rightarrow \left(\lall{n : \dN} \distx{(f'\ n)}{(f'\ (n+1))} < 2^{-(n+2)}\right) \land \lim\ f = \lim\ f'
\]
which can be done by explicitly setting $f' = \lam{n : \dN} f\ (n+3)$.

The slightly stronger condition than just being a fast Cauchy sequence has the advantage that we only need to check neighboring elements instead of arbitrary indices and guarantees that we can turn any initial segment of $f'$ into a fast Cauchy sequence simply by repeating the last element indefinitely.

Consequently, we want to choose $g$ such that it coincides with $f'$ as long as the condition is not violated, and, as soon as it gets violated, continue with the last element.
As the condition only requires to compare real numbers, violation of the condition is semi-decidable.
The difficulty, however, is to choose $g\ n$ in finite time in the case that the condition is not violated.
This is where nondeterminism comes in handy.
By the nondeterministic dependent choice axiom, it suffices to define a nondeterministic refinement procedure such that any sequence it defines satisfies the requirements we need.

We use the following procedure.
\begin{enumerate}
\item Start with $g\ 0 = f'\  0$.
\item Assume we are given $(g\  n)$ and we need to choose $(g\ (n+1))$.
Note that both $\distx{(f'\ n)}{(f'\ (n+1))} < 2^{-(n+1)}$ and $\distx{(f'\ n)}{(f'\ (n+1))} > 2^{-(n+2)}$ are semi-decidable and that at least one of the conditions holds.
Thus, we can nondeterministically choose one of them.
In the first case, set $(g\ (n+1)) = (f'\ (n+1))$, and in the second case set $(g\ (n+1)) = (g\ n)$. 
Further, remember which choice we have taken.
\item After choosing $(g\ (n+1)) = g\ n$ in the second step, choose $g\ m = g\ n$ for all following $m : \dN$. 
\end{enumerate}
Let us first prove that any $g$ defined in this way is a fast Cauchy sequence.
It suffices to show that $\lall{n : \dN} \distx{(g\ n)}{(g\ (n+1))} \leq 2^{-(n+1)}$.
By how we defined the refinement procedure, 
we either have $(g\ (n+1)) = (f'\ (n+1))$ or $(g\ (n+1)) = g\ n$.
In the second case, the statement is obviously true.
By steps two and three of the procedure, the first case can only occur if $g\ n = f'\ n$ and $\distx{(f'\ n)}{(f'\ (n+1))} < 2^{-(n+1)}$.

Further, by the way we chose $f'$, choosing the second case in step two of the procedure means that the original $f$ must have violated the condition of being a fast Cauchy sequence.
Thus, in case that $f$ is a fast Cauchy sequence, we always choose $g\ n = f'\ n$ and thus $g = f'$, which by construction has the same limit as $f$.
\end{proof}
\begin{rem}
As can be seen in the proof, the nondeterminism in \autoref{lem:elim} stems from the fact that whenever the input is not a fast Cauchy sequence, we cannot deterministically choose the index where the sequence is made constant precisely, thus allowing slightly different values for the extended limit.
\end{rem}

As with the limit operator, let us further introduce the notation $\elim: (\dN \to X) \to \mval X$ for the total extension of the limit function.

Before proceeding, let us note that any space $X$ in our setting already carries an intrinsic topology via the notion of open set from \autoref{def:open-closed}.
On the other hand, for a metric space $X$, the natural topology is the usual topology generated by open balls in the metric. As there is no mention of topology in our definition of a metric space, there is a priori no obvious reason why these two topologies should be related in any way. 
It may therefore be surprising at first glance that we later, in \autoref{sec:metric-continuity}, show that they indeed coincide for our definition.
The key ingredient is the continuity of the limit operator, which ties continuous functions to metric approximations and ultimately forces the intrinsic and metric topologies to agree.
For now, however, all mentions of open, closed, etc. should be understood as the intrinsic notions from \autoref{def:open-closed}, and not as anything related to the metric structure.

The extended limit operator turns out to be useful in proving the following standard result from topology.
\begin{lem}[\coqref{closed\_complete}{/formalization/Hyperspace/MetricSubsets.v}{1563}]\label{lem:closed_complete}
Every closed subset of a complete metric space is complete.
\end{lem}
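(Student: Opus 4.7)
The closedness witness gives $g : X \to \dS$ with $\defineds{(g\ y)} \liff y \notin A$. Given a fast Cauchy sequence $f : \dN \to A$, I view it via the inclusion $A \hookrightarrow X$ as a fast Cauchy sequence in $X$ and invoke completeness of $X$ to obtain the limit $x : X$ with $\mathrm{is\_fast\_limit}\ x\ f$. The goal then reduces to proving $x \in A$: the same estimate $\distx{(f\ n)}{x} \leq 2^{-n}$ will then witness that $x$ is the fast limit of $f$ in $A$.

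Since $x \in A$ is a classical proposition, I argue by contradiction via excluded middle. Assume $x \notin A$; then $\defineds{(g\ x)}$. The plan is to exhibit an index $n$ with $\defineds{(g\ (f\ n))}$, which contradicts $f\ n \in A$. This in turn reduces to a metric-continuity statement for $g$ at $x$: there exists $n : \dN$ such that $\defineds{(g\ y)}$ for every $y : X$ with $\distx{x}{y} \leq 2^{-n}$.

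To prove the metric-continuity statement, I apply \autoref{ax:continuity} through a sequence representation of $X$. Every $y$ with $\distx{x}{y} \leq 2^{-n}$ is the limit of the spliced sequence $\tau_y : \dN \to X$ taking the value $x$ for indices $i \leq n$ and $y$ for indices $i > n$, which is fast Cauchy by the triangle bound $2^{-n} \leq 2^{-i} + 2^{-j}$ for $i \leq n < j$. A lift $\tilde g : (\dN \to X) \to \dS$ of $g \circ \lim$ that is total on fast Cauchy inputs --- built from $g$ together with the extended limit \autoref{lem:extended-limit} (and \autoref{lem:partial-baire-choice} to land in $\dS$ rather than $\mval\dS$) --- satisfies $\tilde g(\lam{i}\ x) = g(x) = \definedx$. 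Applying \autoref{ax:continuity} at that constant sequence produces a modulus $m$; plugging in $\tau_y$ for any $n \geq m$ yields $g(y) = \definedx$. Specialising to $y = f\ m$, which lies within $2^{-m}$ of $x$, delivers $\defineds{(g\ (f\ m))}$ and the desired contradiction.

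The main obstacle is the construction of the lift $\tilde g$: because $\lim$ is partial and the extended-limit $\elim$ of \autoref{lem:extended-limit} is nondeterministic, some care is needed to produce a deterministic $\dS$-valued total map whose values on fast Cauchy inputs are those of $g \circ \lim$. Once $\tilde g$ is in place, the spliced-sequence computation and the final fast-Cauchy bound are elementary.
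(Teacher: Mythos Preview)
Your approach is essentially the paper's: lift the closedness witness along the limit to a deterministic map $(\dN \to X) \to \dS$, apply \autoref{ax:continuity} to this lift, and derive a contradiction by evaluating at a sequence whose limit is some $f\ m \in A$.

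Two remarks. First, a small variation: you apply continuity at the constant sequence $\lam{i}\ x$ and then splice in $y$, whereas the paper applies continuity at the given sequence $a$ itself and then truncates and repeats the last element. Your choice avoids the paper's preliminary ``speedup'' of $a$ (needed there to guarantee that the truncated-and-repeated sequence remains fast Cauchy), so it is arguably a bit cleaner.

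Second, your citation of \autoref{lem:partial-baire-choice} for constructing the deterministic lift $\tilde g$ is off: that lemma is stated for functions on Baire space $\dN \to \dN$, not $\dN \to X$, and in any case produces a partial $\dN$-valued function rather than a total $\dS$-valued one. The paper eliminates the nondeterminism differently: it forms the disjunction of $g(\elim\ b)$ with the semi-decidable predicate ``$b$ is not fast Cauchy'', observes that the resulting element of $\mval\dS$ is a \emph{singleton} for every $b$ (namely $g(\lim b)$ when $b$ is fast Cauchy and $\definedx$ otherwise), and then invokes singleton elimination for $\mval$ to obtain the deterministic $\tilde g : (\dN \to X) \to \dS$. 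You correctly flag this step as the main obstacle; this is the mechanism that resolves it.
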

\begin{proof}
Assume $X$ is complete, $A :\subseteq X$ is closed and $a : \dN \to X$ is a fast Cauchy sequence with values in $A$.
As in the proof of \autoref{lem:elim} we can assume that $a$ converges fast enough to make sure that every sequence we get from extending an initial segment of $a$ to an infinite sequence by infinitely repeating the last element, the new sequence will also be a fast Cauchy sequence.

As $X$ is complete we can find the limit $x = \lim a$.
We need to show that $x \in A$.
The proof is by contradiction.
Since $A$ is closed there is a function $f : X \to \dS$ such that $\defineds{(f\ x)} \liff x \notin A$.
Thus, by composing $f$ with the extended limit $\elim$, we get
\[
\all{b : \dN \to X}  \mval \some{s : \dS} \mathrm{is\_fast\_cauchy}\ b \rightarrow  (\defineds{s} \liff \lim b \notin A).
\]
Further, as the property of not being a fast Cauchy sequence is semi-decidable,
\[
\all{b : \dN \to X} \some{t : \dS} \defineds{t} \liff \neg \mathrm{is\_fast\_cauchy}\ b.
\]
Combining the two semi-decidable procedures by conjunction, we get for any $b : \dN \to X$
\[
 \mval \some{s : \dS} (\mathrm{is\_fast\_cauchy}\ b \rightarrow  (\defineds{s} \liff \lim b \notin A)) \land  (\neg \mathrm{is\_fast\_cauchy}\ b \rightarrow \defineds{s}).
\]
That is, by requiring to always return $\top$ when the input is not a fast Cauchy sequence, we ensure that for any $b$, there is a unique $s : \dS$ with the property.
We can therefore use singleton elimination to remove the nondeterminism, and thus there is a deterministic function $g: (\dN \to X) \to \dS$ such that 
\[
\lall{b : \dN \to X} \mathrm{is\_fast\_cauchy}\ b \rightarrow (\defineds{(g\ b)} \liff \lim b \notin A).
\]

Now assume that $\lim a \notin A$.
Then, since $a$ is a fast Cauchy sequence and $\defineds{(g\ a)}$, by continuity of $g$ we have
\[
\mval \some{m : \dN} \all{b : \dN \to X} \restr{b}{m} = \restr{a}{m} \rightarrow \defineds{(g\ b)}.
\]
Define a sequence $b : \dN \to X$ by $b\ n = a\ n$ for $n < m$ and $b\ n = a\ m$ otherwise.
Then $b$ is a fast Cauchy sequence with limit $a\ m$ and $(g\ b)$ is defined.
But then $(a\ m) \notin A$, contradicting the assumption that $a$ is a sequence taking values in $A$.
\end{proof}
\begin{rem}
Note that the other direction, i.e., every complete set is closed, is not true constructively.
For example, in the space of natural numbers with the discrete topology, every subset is complete, but the Halting set is not closed.
\end{rem}

The last property of metric spaces we need is separability, i.e., that there exists a countable dense subset.
\begin{defi}[\coqref{dense}{/formalization/Hyperspace/MetricSubsets.v}{18}]
A subset $A :\subseteq X$ is defined to be \emph{dense} in $X$ by 
\[
\mathrm{is\_dense}\ A \equivto \all{ U : \sopen\ X} U \neq \emptyset \rightarrow  A \cap U \neq \emptyset
\]
In general, we call a type $Y$ \emph{enumerable} if there exists a map $f: \dN \to Y$ which is classically surjective, i.e., $\lall{y : Y} \csome{n : \dN} f\ n = y$.
Thus, we define the space to be \emph{separable} by 
\[
\mathrm{is\_separable}\ X \equivto \some{e : \dN \to X} \mathrm{is\_dense}\ \lam{x : X} \csome{n : \dN} (e\ n) = x.
\]
\end{defi}

We call a complete, separable metric space a \inlinedef{Polish space}.
Given a Polish space $X$, we simply write $X$ to denote also the underlying metric space, when there is no possible confusion.

Although some statements can be proven in a more general setting, for the remainder of the paper we will for simplicity typically assume that $X$ is a Polish space. 
The interested reader can also compare the statements in the Coq formalization, which sometimes do not require all the assumptions. 

\subsection{Open Balls and Metric Continuity}\label{sec:metric-continuity}
For $c : X$ and $n : \dN$, we denote by $\ball(c,n)$ the open ball with center $c$ and radius $2^{-n}$, i.e., the classical subset defined by 
\[x \in \ball(c,n) \liff (\distx{x}{c})< 2^{-n}. \] 
It is easy to see that balls defined in this way are actually open:
\begin{lem}[\coqref{metric\_open}{/formalization/Hyperspace/MetricSubsets.v}{118}]\label{lem:open-ball}
For all $c : X$ and $n : \dN$, the subsets $\ball(c,n)$ are open in the sense of \autoref{def:open-closed}.
\end{lem}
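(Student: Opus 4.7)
The plan is to unfold the definition of $\isopen$ and directly exhibit the required Sierpinski-valued characteristic function. By \autoref{def:open-closed}, to prove $\isopen\ \ball(c,n)$ it suffices to produce $f : X \to \dS$ such that $\defineds{(f\ x)} \liff x \in \ball(c,n)$, which by unfolding the ball reduces to $\defineds{(f\ x)} \liff \distx{x}{c} < 2^{-n}$.

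The main ingredient is the semi-decidability of strict inequality on $\dR$ postulated in \autoref{sec:preliminaries}: for any two reals $r, s$, there is a Kleenean $k : \dK$ with $\upc{k} \liff r < s$. Applied pointwise to $\distx{x}{c}$ and $2^{-n}$, this yields, for each $x : X$, a Kleenean $k_x$ whose truth tracks membership of $x$ in the ball. Composing with the retraction $\skinv : \dK \to \dS$ of $\fproj$, whose defining property lifts $\upc{\cdot}$ on Kleeneans to $\defineds{\cdot}$ on Sierpinski values, I would then set $f := \lam{x : X} \skinv\ k_x$. The required equivalence $\defineds{(f\ x)} \liff \distx{x}{c} < 2^{-n}$ follows immediately by chaining the two equivalences.

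There is no real obstacle here; the lemma is essentially a direct repackaging of the real-number order axiom through the computable metric $\distxop : X \to X \to \dR$. Its purpose is to confirm that the metric notion of a basic open ball agrees with the abstract topological notion of openness based on Sierpinski-valued characteristic functions, so that subsequent constructions on Polish spaces can freely switch between the two viewpoints.
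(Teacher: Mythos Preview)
Your proposal is correct and is precisely the argument one would expect: the paper itself omits the proof, introducing the lemma only with ``It is easy to see that balls defined in this way are actually open''. Your unfolding of $\isopen$, use of the semi-decidability of $<$ on $\dR$ applied to $\distx{x}{c}$ and $2^{-n}$, and conversion via $\skinv$ is exactly the intended route.
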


For a Polish space we can use separability to choose a point from an open set.
\begin{lem}[\coqref{separable\_open\_choice}{/formalization/Hyperspace/MetricSubsets.v}{79}]\label{lem:open-choice}
Let $X$ be a Polish space and suppose $U :\subseteq X$ is open and (classically) nonempty. 
Then we can nondeterministically choose an element from the dense subset that is contained in $U$.
That is,
\[
\all{U : \sopen\ X} U \neq \emptyset \rightarrow \mval\some{n : \dN} e_n \in U.
\]
where $e$ is the countable dense sequence of $X$. 
\end{lem}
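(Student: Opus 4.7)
The plan is to combine three ingredients already available: the semi-decidability witness of openness, the density of the enumeration, and the nondeterministic countable selection axiom (\autoref{ax:cchoice}).

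First, I would unfold what it means for $U$ to be open: by \autoref{def:open-closed} there is a function $f : X \to \dS$ with $\defineds{(f\ x)} \liff x \in U$. Composing with the dense enumeration $e : \dN \to X$ and the first projection $\fproj : \dS \to \dK$, I obtain a sequence of Kleeneans $k : \dN \to \dK$ defined by $k\ n \equivto \fproj\ (f\ (e\ n))$, which satisfies $\upc{k\ n} \liff e_n \in U$.

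Next, I would establish the classical existence premise needed to invoke \autoref{ax:cchoice}, namely $\csome{n : \dN} \upc{k\ n}$. Since $X$ is separable, the subset $A \equivto \lam{x : X} \csome{n : \dN} (e\ n) = x$ is dense, so the hypothesis $U \neq \emptyset$ together with $U$ being open yields $A \cap U \neq \emptyset$. Unfolding this classical existence gives some $x : X$ with $x \in U$ and some $n : \dN$ with $e\ n = x$, hence $e_n \in U$ and therefore $\upc{k\ n}$.

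Having both the sequence $k$ and the classical existence $\csome{n : \dN} \upc{k\ n}$, I apply \autoref{ax:cchoice} to obtain $\mval \some{n : \dN} \upc{k\ n}$, which by construction of $k$ is exactly $\mval\some{n : \dN} e_n \in U$, as required.

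The only subtle point is making sure that the classical existence fed into \autoref{ax:cchoice} really lives in $\Prop$ and is obtained from $U \neq \emptyset$ using only the classical density statement, rather than trying to extract a witness from $U \neq \emptyset$ directly (which would not be computable). Once that is handled, the rest is just plugging the pieces together, so the proof is essentially a short application of the density axiom followed by nondeterministic countable selection.
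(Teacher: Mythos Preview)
Your proposal is correct and follows essentially the same approach as the paper: use density together with $U \neq \emptyset$ to obtain the classical existence $\csome{n:\dN} e_n \in U$, use the openness witness to make $e_n \in U$ semi-decidable, and then invoke nondeterministic countable selection (\autoref{ax:cchoice}). The only cosmetic difference is that you spell out the passage from $\dS$ to $\dK$ via $\fproj$, whereas the paper leaves this implicit.
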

\begin{proof}
Since $e$ is a dense sequence, $\csome{n : \dN} e_n \in U$ holds.
That is, there classically exists an element from the dense subset contained in $U$.
As $U$ is open $\all{n : \dN} \some{s : \dS} \defineds{s} \liff e_n \in U$.
Thus, we can define a function $f: \dN \to \dS$ such that $(f\ n)$ is defined iff $e_n \in U$.
Thus applying the nondeterministic countable selection principle (\autoref{ax:cchoice}) on $f$ completes the proof.
\end{proof}
\begin{cor}[\coqref{separable\_open\_overt}{/formalization/Hyperspace/MetricSubsets.v}{90}]
Let $X$ be a Polish space and $U :\subseteq X$ be open. 
Then $U$ is overt.
\end{cor}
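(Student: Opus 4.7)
The plan is to construct a witness $f : \sopen\ X \to \dS$ that directly realizes the overtness predicate for the given open set $U$. First, given an arbitrary open $U' : \sopen\ X$, I would invoke \autoref{lem:open-facts}(2) to note that $U \cap U'$ is itself open, thereby obtaining a characteristic function $g_{U'} : X \to \dS$ satisfying $\defineds{(g_{U'}\ x)} \liff x \in U \cap U'$.

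Next, using the countable dense enumeration $e : \dN \to X$ provided by separability of $X$, I would form the sequence of Kleeneans $\lam{n : \dN} \fproj\, (g_{U'}\ (e\ n))$ and apply the countable disjunction axiom (\autoref{ax:disj}) to obtain a single $k : \dK$ whose definedness is equivalent to $\csome{n : \dN} e_n \in U \cap U'$. Applying $\skinv$ then yields a Sierpinski term which I take as $(f\ U') : \dS$.

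The final step would be to verify the overtness equivalence $\defineds{(f\ U')} \liff U \cap U' \neq \emptyset$. The reverse direction is the place where density of $e$ plays an essential role: by the definition of density, every nonempty open subset of $X$ meets the image of $e$, so from $U \cap U' \neq \emptyset$ (being open by the first step) we classically obtain some $n : \dN$ with $e_n \in U \cap U'$. The forward direction is immediate, since any $e_n \in U \cap U'$ witnesses nonemptiness.

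I do not anticipate any substantive obstacle in carrying this out. The only real content is the recognition that separability combined with closure of opens under intersection reduces semi-decidable nonemptiness of $U \cap U'$ to a countable disjunction over the dense enumeration, which is precisely what \autoref{ax:disj} supplies. This mirrors the strategy of \autoref{lem:open-choice}, except that we package the test as a single Sierpinski value rather than nondeterministically selecting an index.
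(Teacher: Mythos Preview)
Your proposal is correct and follows essentially the same approach as the paper: form the open intersection $U \cap U'$, use density to reduce nonemptiness to $\csome{n}e_n \in U \cap U'$, and package this as a Sierpinski value via countable disjunction. The paper's proof is terser (it does not spell out the appeal to \autoref{ax:disj}), but the underlying argument is the same.
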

\begin{proof}
For any open $U' :\subseteq X$, $U \cap U'$ is open.
Thus $U \cap U' \neq \emptyset$ iff we can find an $i : \dN$ with $e_i \in U \cap U'$.
\end{proof}
\noindent Examples of overt sets that are not open are  singleton sets in the space of real numbers.

Next we show that we can approximate any $x : X$ arbitrarily well with elements of the dense subset, and additionally if $x$ is contained in some open set $U :\subseteq X$, we can choose the approximation to be also contained in $U$.
\begin{cor}[\coqref{separable\_metric\_approx\_inside}{/formalization/Hyperspace/MetricSubsets.v}{127}]\label{cor:approx}
Let $X$ be a Polish space  and suppose $U :\subseteq X$ is open.
Then,
\[
\all{x \in U} \all{n : \dN} \mval \some{i : \dN} e_i \in U \  \land \  \distx{x}{e_i} \leq 2^{-n}.
\]
\end{cor}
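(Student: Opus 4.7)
The plan is to reduce this corollary directly to \autoref{lem:open-choice} by constructing an appropriate open set that contains $x$ and forces the diameter bound.

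Given $x \in U$ and $n : \dN$, the natural candidate is the intersection $V \equivto U \cap \ball(x, n)$. First I would observe that $\ball(x,n)$ is open by \autoref{lem:open-ball}, so by part~(2) of \autoref{lem:open-facts} the intersection $V$ is open. Next, $V$ is (classically) nonempty because $x \in U$ by assumption and $\distx{x}{x} = 0 < 2^{-n}$, hence $x \in \ball(x,n)$, hence $x \in V$.

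Having verified the hypotheses of \autoref{lem:open-choice}, I would apply it to $V$ to nondeterministically obtain an index $i : \dN$ with $e_i \in V$. Unfolding $V$, this gives both $e_i \in U$ and $e_i \in \ball(x, n)$; the latter, together with symmetry of the metric, yields $\distx{x}{e_i} < 2^{-n}$ and in particular $\distx{x}{e_i} \leq 2^{-n}$, which is what the statement requires.

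There is essentially no obstacle here: once one notices that the bound $\distx{x}{e_i} \leq 2^{-n}$ can be enforced topologically by intersecting with an open ball, the result is a straight composition of \autoref{lem:open-ball}, \autoref{lem:open-facts}(2), and \autoref{lem:open-choice}. The only mild technicality is matching the strict inequality in the definition of $\ball(c,n)$ to the non-strict inequality in the conclusion, which is immediate.
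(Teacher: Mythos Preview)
Your proposal is correct and matches the paper's intended argument: the statement is presented as a corollary of \autoref{lem:open-choice} without an explicit proof, and the obvious derivation is precisely the one you give---intersect $U$ with the open ball $\ball(x,n)$ and apply \autoref{lem:open-choice} to the resulting nonempty open set.
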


It is a well-known fact from classical topology that every separable metric space is second countable, i.e., every open set can be written as the union of elements from a countable collection of open subsets, called a base for the space.

A possible choice for such a base is the collection of balls $\ball(e_n, m)$ with $n,m : \dN$ centered at points from the dense subset.
This gives an alternative representation for open subsets of computable metric spaces (sometimes called c.e. open in computable analysis) by effectively exhausting nonempty open sets with base elements.
We prove that this is equivalent to our generic definition of open sets using Sierpinski space.
Let us first formally define a base in our setting.
\begin{defi}
For $b :\subseteq X$, we define
\[
\mathrm{is\_base_X}\  b \equivto (\csome{i\ n : \dN} b = \ball(e_i, n)) \lor b = \emptyset.
\]
\end{defi}
That is, a base element is either a ball centered at some point of the dense subset or it is empty.
We choose some arbitrary enumeration of $\dN \times \dN$ and denote the sequence of base elements by $(B_i)_{i : \dN}$. 
We further denote the type of base elements by \[\mathrm{base}\ X \equivto 
\some{b :\subseteq X}\mathrm{is\_base_X}\ b.\]

Our goal is to prove that every Polish space is second countable.
Before we can prove the theorem, we first consider the special case of the Baire space, and later reduce the general case to this case.
As in \autoref{sec:continuity}, we identify the nonempty base elements of $\dN \to \dN$ with finite lists of natural numbers.
To simplify notation we state the theorem only for nonempty subsets here.
\begin{lem}[\coqref{enumerate\_baire\_open}{/formalization/Hyperspace/MetricSubsets.v}{845}]\label{lem:baire-second-countable}
Assume $U :\subseteq (\dN \to \dN)$ is an open, nonempty subset of Baire space. 
Then
\[ \mval \some{F : \dN \to [\dN]} U = \bigcup_{n : \dN} [F\  n] \]
where for a list $l : [\dN]$, by $[l] :\subseteq (\dN \to \dN)$ we denote the subset of sequences extending $l$.
\end{lem}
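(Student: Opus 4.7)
The plan is to enumerate sufficiently long prefixes of eventually-zero sequences lying in $U$, since these sequences form a countable dense subset of Baire space, and to use the continuity principle to guarantee that every point of $U$ is captured by one of these cylinders. The challenge is that membership in $U$ is only semi-decidable, so we will have to combine the semi-decision with a default value via \autoref{ax:k-to-seq} in order to produce a total enumeration.

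The first step is to apply \autoref{lem:continuous-modulus} to the Sierpinski-valued characteristic function of $U$, nondeterministically obtaining a partial modulus $\mu : (\dN \to \dN) \pto \dN$ with $\dom(\mu) = \lam{x} x \in U$ that is itself continuous on its domain. This guarantees $[\restr{x}{\mu\ x}] \subseteq U$ for every $x \in U$, reducing the problem to enumerating enough such prefixes.

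Next, fixing an enumeration $(\sigma_n)_{n : \dN}$ of the eventually-zero sequences and applying \autoref{ax:cchoice} to the semi-decidable family $\lam{n : \dN}\ \sigma_n \in U$, I would nondeterministically pick some $n_0$ with $\sigma_{n_0} \in U$ and set the default list $l_0 \equivto \restr{\sigma_{n_0}}{\mu\ \sigma_{n_0}}$. Then, using a bijection $\dN \cong \dN \times \dN$ and applying \autoref{ax:k-to-seq} (together with nondeterministic countable choice) to convert each Kleenean $\upc{\chi_U(\sigma_m)}$ into an integer sequence $g_m : \dN \to \dN$, I would define $F$ on $(m,k)$ to output $\restr{\sigma_m}{\mu\ \sigma_m}$ when $g_m\ k = 1$ (in which branch the evidence $\sigma_m \in U$ makes $\mu\ \sigma_m$ well-defined) and $l_0$ otherwise. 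By construction every emitted list $l$ satisfies $[l] \subseteq U$, so $\bigcup_n [F\ n] \subseteq U$ follows.

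For the reverse inclusion take any $x \in U$. The continuity of $\mu$ at $x$ yields some $k_0$ such that every $y$ agreeing with $x$ on the first $k_0$ coordinates has $\mu\ y = \mu\ x$; let $\sigma$ be the sequence that equals $x$ up to index $k \equivto \max(k_0, \mu\ x)$ and is zero afterwards. Then $\sigma$ extends $\restr{x}{\mu\ x}$, so $\sigma \in U$, and $\sigma$ extends $\restr{x}{k_0}$, so $\mu\ \sigma = \mu\ x$, whence $\restr{\sigma}{\mu\ \sigma} = \restr{x}{\mu\ x}$. Since $\sigma$ is eventually zero we have $\sigma = \sigma_m$ for some $m$, and for any $k$ large enough that $g_m\ k = 1$ the pair $(m,k)$ contributes $\restr{x}{\mu\ x}$ to the enumeration, which contains $x$. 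The main obstacle is keeping $F$ total despite the partiality of $\mu$; this is precisely what the fallback to $l_0$ combined with the Kleenean-to-sequence conversion achieves.
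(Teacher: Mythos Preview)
Your proof is correct and follows essentially the same approach as the paper's: obtain a continuous partial modulus of continuity for the characteristic function of $U$, enumerate the dense points lying in $U$, and output their prefixes of length given by the modulus, with the reverse inclusion established via continuity of $\mu$ at an arbitrary $x \in U$. The only cosmetic differences are your choice of eventually-zero sequences as the dense subset (the paper instead extends each finite list by repeating its last element) and your explicit handling of totality via \autoref{ax:k-to-seq} and a default value (the paper abstracts this into ``enumerate all finite lists $l$ with $\lext{l} \in U$'' and relegates the default-value trick to a remark).
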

\begin{proof}
From continuity of the characteristic function we get
\[
\all{x \in U}  \mval\some{m : \dN} [\restr{x}{m}] \subseteq U.
\]
Applying partial Baire choice (\autoref{lem:partial-baire-choice}) we (nondeterministically) get a partial modulus of continuity function, i.e., 
\[ \mval \some{\mu: (\dN \to \dN) \pto \dN} \lall{x \in U} [\restr{x}{(\mu\ x)}] \subseteq 
U.\]
For a list $l$, we define $\lext{l} : \dN \to \dN $ as the infinite sequence defined by starting with the initial segment given by $l$ and then continuing with the last element of the list indefinitely.

As $U$ is open and nonempty we can enumerate all finite lists $l : [\dN]$ such that $\lext{l} \in U$.
Denote the $n$-th element in the enumeration by $l_n$ and define $F \equivto \lam{n : \dN} \lext{(l_n)}_{(\mu\ \lext{l_n})}$.
We claim $U = \bigcup_{n : \dN} [F\ n]$.
First, $\bigcup_{n : \dN} [F\ n] \subseteq U$ obviously holds by the way we defined $F$.

To show that $U \subseteq \bigcup_{n : \dN} [F\ n]$, let $x \in U$.
It suffices to show that there is some finite list $p : [\dN]$ such that $x \in [\lext{p}_{(\mu\ \lext{p})}]$. 
By \autoref{lem:partial-continuity},  $\mu$ is continuous, i.e.,
\[
\mval \some{m : \dN} \lall{y : \dN \to \dN} \restr{x}{m} = \restr{y}{m} \rightarrow (\mu\ x) = (\mu\ y).
\]
Note that classical existence of $p$ suffices and thus the nondeterminism is not an issue here.
Now assume we are given $m : \dN$ as above and let $k \geq \max((\mu\ x),m)$ and $p \equivto \restr{x}{k}$.
By definition of $\mu$, $\lext{p} \in U$ and $\mu\ \lext{p} = \mu\ x$.
Thus, $\mu\ \lext{p} \leq k$ and therefore by definition of $p$, $x$ and $\lext{p}$ coincide on the initial segment of length $(\mu\ \lext{p})$.
It follows  $x \in [\lext{p}_{(\mu\ \lext{p})}]$ as claimed.
\end{proof}
\begin{rem}
In the Coq formalization, we prove the more general fact that any open set (i.e. including the empty set) can be exhausted with base elements consisting either of sets of the form $[l]$ for some finite list of natural numbers $l$ or the empty set.
The statement we prove in the paper then follows by the fact that for nonempty $U$ there has to be at least one nonempty base element in the enumeration and we can replace each occurrence of the empty set by the first such element.
\end{rem}
We can now prove that every Polish space is second countable.
\begin{thm}[\coqref{separable\_metric\_second\_countable}{/formalization/Hyperspace/MetricSubsets.v}{1200}]\label{thm:polish-second-countable}
 A subset $U :\subseteq X$ is open iff we can (nondeterministically) find a sequence of base elements exhausting $U$.
Formally, for any $U :\subseteq X$,
\[
   \isopen\  U  \liff   \mval \some{F : \dN \to \mathrm{base\ X}} U = \bigcup_{n : \dN} (F\ n).
\]
\end{thm}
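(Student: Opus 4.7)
\smallskip
\noindent\textbf{Proof plan.}
The $(\Leftarrow)$ direction is immediate: each $F\ n$ is either empty or an open ball $\ball(e_i, m)$, which is open by \autoref{lem:open-ball}; thus $U$ is open as a countable union of opens by \autoref{lem:open-facts}.

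For the $(\Rightarrow)$ direction, we reduce to the Baire-space version \autoref{lem:baire-second-countable} via the dense enumeration $e$. Given open $U :\subseteq X$ with witness $f_U : X \to \dS$, we construct a semi-decidable subset $V :\subseteq (\dN \to \dN)$ corresponding, as closely as possible, to the rapid Cauchy sequences of dense indices that converge into $U$. Since being fast Cauchy is only co-semi-decidable, we cannot take the literal preimage of $U$ under $\varphi \mapsto \lim_n e(\varphi(n))$; instead, we apply the ``freezing'' trick from the proof of \autoref{lem:extended-limit}, turning any $\varphi$ into a total fast Cauchy sequence (repeating the last value as soon as rapidity is first violated), taking its limit via completeness, and testing membership in $U$ using $f_U$. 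The resulting map $(\dN \to \dN) \to \dS$ is semi-decidable, yielding a genuine open $V$.

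Applying \autoref{lem:baire-second-countable} to $V$ gives a nondeterministic enumeration $G : \dN \to [\dN]$ with $V = \bigcup_n [G\ n]$. We translate each non-empty list $G\ n = [i_0, \ldots, i_k]$ to $F\ n \equivto \ball(e_{i_k}, k - c)$ for a constant $c$ reflecting the rapidity threshold used in building $V$, and set $F\ n \equivto \emptyset$ when $G\ n$ is empty. Containment $\bigcup_n (F\ n) \subseteq U$ follows since each $[G\ n] \subseteq V$ means every extension gives a limit in $U$, and the triangle inequality applied to extensions along a constant tail shows every point of the ball arises this way. Coverage uses \autoref{cor:approx}: any $x \in U$ admits a rapid Cauchy sequence of dense points converging to $x$, a finite prefix of which belongs to some $[G\ n]$, so that $x \in F\ n$.

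The main obstacle is twofold: defining $V$ as a genuinely open subset of $(\dN \to \dN)$ despite the co-semi-decidable ``fast Cauchy'' condition, and calibrating the radius constant so that the extracted balls are simultaneously contained in $U$ (soundness) and jointly cover $U$ (completeness). Both are handled by combining the freezing trick with a careful choice of the rapidity constant, ensuring that whenever $\varphi$ is genuinely fast Cauchy the sanitized limit agrees with $\lim_n e(\varphi(n))$, while otherwise the limit still lies inside the ball witnessed by the finite prefix that has already been checked to belong to $U$.
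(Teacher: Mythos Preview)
Your overall strategy matches the paper's: reduce to the Baire-space enumeration lemma via the extended limit, then translate finite lists back to metric balls. But you are missing one essential step that the paper inserts between the Baire enumeration and the ball translation, and without it the soundness inclusion $\bigcup_n (F\ n) \subseteq U$ fails.

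Concretely: from $V = \bigcup_n [G\ n]$ you immediately set $F\ n = \ball(e_{i_k}, k-c)$ where $i_k$ is the last entry of $G\ n$. But $[G\ n] \subseteq V$ only tells you that the \emph{sanitized} limit of every extension lies in $U$. If the initial segment $G\ n$ already violates rapidity (say $e_{i_0}=0$, $e_{i_1}=1000$, $U=(-\tfrac1{10},\tfrac1{10})$), the freezing trick may freeze at index $0$ for every extension, so every sanitized limit is $0\in U$ and $[G\ n]\subseteq V$ holds---yet $\ball(e_{i_1},1-c)$ sits near $1000$ and is not contained in $U$. Your justification (``extensions along a constant tail show every point of the ball arises this way'') breaks here: for $x$ in that ball you try to extend $G\ n$ to a rapid sequence converging to $x$, but the bad prefix forces early freezing, so the sanitized limit is $e_{i_0}$, not $x$.

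The paper fixes this by \emph{filtering} the enumeration: after obtaining the lists $l_i$, it keeps only those $l'_i$ for which consecutive dense points satisfy $\distx{e_{l[j]}}{e_{l[j+1]}} < 2^{-(j+1)}$; this condition is semi-decidable, so the subfamily is still enumerable. For such $l'_i$ the constant-tail extension $\lext{l'_i}$ is genuinely fast Cauchy, hence $\delta\,\lext{l'_i}=e_{i_k}$, and the splicing argument (prefix $l'_i$, then a rapid tail toward $x$) now yields a genuine fast Cauchy sequence whose $\delta$-value is $x$. The coverage direction also needs a word to check that the rapid name of $x\in U$ one builds via \autoref{cor:approx} survives this filter, which the paper verifies. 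A smaller omission in your plan is the use of Baire choice (\autoref{ax:baire-choice}) to pass from the nondeterministic $\elim$ to a single-valued $\delta:(\dN\to\dN)\to X$ before defining $V$; without this, $V$ is not a function into $\dS$.
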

\begin{proof}
By \autoref{lem:open-ball} each of the base elements is open, and thus by \autoref{lem:open-facts} the countable union of base elements is open thus we get $ ( \mval \some{F : \dN \to \mathrm{base}\ X)} U = \bigcup_{n : \dN} (F\ n)  ) \rightarrow \sopen\  U$.

For the other direction, let us assume that $U$ is open in the sense of the previous section, that is, there is a function $f: X \to \dS$ such that $\defineds{(f\ x)}$ iff $x \in U$.
The idea of the proof is to enumerate all $e_i \in E$ and find some $n$ such that $\ball(e_i, n) \subseteq U$ and $n$ large enough to guarantee that the union will cover all $x \in U$.
Note that continuity of $f$ alone is not sufficient for the latter.

To apply \autoref{lem:continuous-modulus}, we therefore reduce the statement to a statement on Baire space.
Define a function $\delta : (\dN \to \dN) \to \mval \dX$ by 
$ \lam{x : \dN \to \dN} \elim\  e \circ x$.
That is, we encode elements $x : X$ by infinite sequences of elements of the dense subset converging to $x$, similar to the standard Cauchy representation for computable metric spaces.
Here, we use the extended limit operator to guarantee that $\delta$ is total.
By \autoref{ax:baire-choice} we can nondeterministically choose a single-valued function instead, i.e., we show
\[
\mval \some{\delta : (\dN \to \dN) \to X } \lall{x : \dN \to \dN} \mathrm{fast\_cauchy}\  (e \circ x) \rightarrow \mathrm{is\_fast\_limit}\  (\delta\ x) \  (e \circ x). 
\]
For any such $\delta$ we can define an open subset $\mathcal{U}$ of the Baire space $\dN \to \dN$ as the preimage of $U$ under $\delta$.
More precisely, we show

\begin{multline}\label{eq:baire-U-def}
    \mval \some{\mathcal{U} : \sopen\ (\dN \to \dN)} \lall{x : \dN \to \dN}  \mathrm{is\_fast\_cauchy}\  (e \circ x) \\
\rightarrow (x \in \mathcal{U} \liff  \lim\  (e \circ x) \in U). 
\end{multline}

To see that the chosen $\mathcal{U}$ is open, we go back to the original definition of $\delta$ and note that $\elim (e \circ x) \in U$ can be checked as $U$ is open.

By lifting the nondeterminism, it suffices to show how to define the function $F : \dN \to \mathrm{base}\ X$ from any such $\mathcal{U}$.

To simplify the notation, let us again w.l.o.g. assume that $\mathcal{U}$ is nonempty.
Then, by \autoref{lem:baire-second-countable} there exists a sequence of lists $(l_i)_{i : \dN}$ such that
\begin{equation}\label{eq:baire-U}
\mathcal{U} = \bigcup_{i : \dN} [l_i].
\end{equation}
Note that the case that $\mathcal{U}$ might be empty can be handled analogously,  by allowing the empty set to appear in this enumeration.

Next, we want to remove all lists from the enumeration that cannot be extended to a fast Cauchy sequence when interpreted as enumeration of points in $E$.
First note that as inequality of reals is semi-decidable, we can define a function $t: [\dN] \to \dS$ such that
\[
\lall{l : [\dN]} \defineds{(t\ l)} \liff \lall{n < \abs{l}-1} (\distx{e_{l[n]}}{e_{l[n+1]}}) < 2^{-(n+1)}.
\]
That is, for any list it is semi-decidable if all neighboring elements $e_{l[n]}$ and $e_{l[n+1]}$ are $2^{-(n+1)}$ close to each other.

Using this fact, we can (nondeterministically) choose a subsequence $\{l'_i \mid i : \dN \} \subseteq \{l_i \mid i : \dN \} $ consisting of exactly those $l_i$ with $\defineds{(t\ l_i)}$.
Note that we again w.l.o.g. assumed that there is at least one list with the property to avoid including the empty set in the enumeration and that the order of the enumeration can be different from the original enumeration.

As in the proof of the previous lemma, let $\lext{l}$ denote the extension of the list $l$ to an infinite sequence by repeating the last element indefinitely.
By the way we choose the subsequence, for any $l'_i$, $e \circ \lext{l'_i}$ is a fast Cauchy sequence in $X$.
Finally, we define the function $F$ by $F\ i \equivto \ball((\delta\  \lext{l'_i}), \abs{l'_{i}}+1)$.

We claim that $U = \bigcup_{i : \dN} (F\ i)$.
Let us first show that $\bigcup_{i : \dN} (F\ i) \subseteq U$.
Assume $x \in \bigcup_{i : \dN} (F\ i)$, i.e., there is some $i : \dN$ with $x \in \ball((\delta\  \lext{l'_i}), \abs{l'_{i}}+1)$. 
Let $k : \dN$ be the last element of $l'_i$ and let $j = \abs{l'_i}$, then $\ball((\delta\  \lext{l'_i}), \abs{l'_{i}}+1) = \ball(e_k,j+1)$, i.e., $(\distx{x}{e_k}) < 2^{-(j+1)}$.

Using \autoref{cor:approx} we can show that there is a function $\varphi: \dN \to \dN$ such that $\lall{n : \dN} (\distx{x}{e_{(\varphi\ n)}}) < 2^{-(n+1)}$.
Define $\psi : \dN \to \dN$ by $\psi\  n \equivto l'_i[n]$ for $n < j$ and $\psi\ n \equivto \varphi\ n$ otherwise, i.e., we replace the first elements of $\varphi$ by the list $l'_i$.
As $(\distx{e_k}{e_{(\varphi\ j)}}) < 2^{-j}$ it follows that $\psi$ is a fast Cauchy sequence converging to $x$. 
Further, $\psi$ extends $l'_i$ and therefore by \autoref{eq:baire-U}, $\psi \in \mathcal{U}$.
By \autoref{eq:baire-U-def} it then follows  $x = (\lim e \circ \psi) \in U$ as claimed.

Let us now show the other direction, i.e., that  $U \subseteq \bigcup_{i : \dN} (F\ i)$.
Assume $x \in U$.
By \autoref{cor:approx} we can find a sequence $\varphi : \dN \to \dN$ such that
\begin{equation}\label{eq:varphi-close}
\lall{n : \dN} (\distx{e_{(\varphi\ n)}}{x}) < 2^{-(n+2)}
\end{equation}
$\varphi$ is a fast Cauchy sequence converging to $x \in U$, and therefore by \autoref{eq:baire-U-def} it follows that $\varphi \in \mathcal{U}$.
By \autoref{eq:baire-U}, there is some $i : \dN$ such that $\varphi \in [l_i]$, i.e., $l_i$ is an initial segment of $\varphi$.
By \autoref{eq:varphi-close}, $e \circ \lext{l_i}$ is a fast Cauchy sequence with $\lim\  (e \circ \lext{l_i}) = e_k$.
Thus, $(\delta\ \lext{l_i}) = e_k \in U$ and by \autoref{eq:varphi-close} it follows that $x \in \ball((\delta\ \lext{l_i}), \abs{l_i}+1)$.
Further, by the choice of $\varphi$ it follows that neighboring elements in $l_i$ are close enough to satisfy $\defineds{(t\ l)}$, i.e., $l_i$ is contained in the subsequence $(l'_i)_{i : \dN}$ and thus $x \in \bigcup_{i : \dN} (F\ i)$ as claimed.
\end{proof}

\begin{cor}[\coqref{separable\_metric\_continuous}{/formalization/Hyperspace/MetricSubsets.v}{924}]\label{cor:metric-continuity}
    Every function $f: X \to \dS$ is continuous with respect to the topology induced by the metric $\distxop$.
    That is,
    \[
    \all{f : X \to \dS} \all {x : X} \defineds{(f\ x)} \rightarrow \mval \some{m : \dN} \lall{y \in \ball(x,m)} \defineds{(f\ y)}.
    \]
\end{cor}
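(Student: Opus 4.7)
The plan is to deduce the corollary directly from \autoref{thm:polish-second-countable} together with the nondeterministic countable selection principle. Given $f : X \to \dS$ and $x : X$ with $\defineds{(f\ x)}$, I first set $U :\subseteq X$ to be the preimage $U := \lam{y : X} \defineds{(f\ y)}$. Then $U$ is open with witness $f$ itself, and contains $x$. Applying \autoref{thm:polish-second-countable} produces nondeterministically a sequence $F : \dN \to \mathrm{base}\ X$ with $U = \bigcup_{n : \dN} (F\ n)$. In particular, since $x \in U$, there classically exists some $n$ with $x \in F\ n$.

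Next, I locate such an $n$ computationally. Each $F\ n$ is either empty or a ball $\ball(e_{i_n}, k_n)$; in either case, membership $x \in F\ n$ is semi-decidable, reducing in the non-empty case to the semi-decidable strict inequality $\distx{x}{e_{i_n}} < 2^{-k_n}$. Applying \autoref{ax:cchoice} to the corresponding Kleenean sequence yields some $n$ nondeterministically with $x \in F\ n$; in particular $F\ n = \ball(e_{i_n}, k_n)$ and $\distx{x}{e_{i_n}} < 2^{-k_n}$.

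Finally, I pick a radius for a ball around $x$ that fits inside $\ball(e_{i_n}, k_n)$. Classically there exists $m : \dN$ with $2^{-m} + \distx{x}{e_{i_n}} < 2^{-k_n}$, and this too is a semi-decidable condition on $m$, so a second application of \autoref{ax:cchoice} produces such $m$ nondeterministically. By the triangle inequality, any $y \in \ball(x,m)$ satisfies $\distx{y}{e_{i_n}} \leq \distx{y}{x} + \distx{x}{e_{i_n}} < 2^{-m} + \distx{x}{e_{i_n}} < 2^{-k_n}$, hence $y \in F\ n \subseteq U$ and $\defineds{(f\ y)}$ as required.

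There is no real obstacle here: the mathematical content is concentrated in \autoref{thm:polish-second-countable}, and what remains is to thread two invocations of \autoref{ax:cchoice} through the $\mval$ monad along with the nondeterminism from second countability, plus one triangle-inequality estimate. The only small care needed is to combine these three sources of nondeterminism in the correct order, which is routine given that $\mval$ is a monad.
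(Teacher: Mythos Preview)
Your proof is correct and follows essentially the same approach as the paper: define $U$ from $f$, invoke \autoref{thm:polish-second-countable} to exhaust $U$ by base balls, use \autoref{ax:cchoice} to find a base ball containing $x$, then choose $m$ so that $\ball(x,m)$ fits inside that ball via the triangle inequality. The paper is slightly terser about finding $m$ (it just says ``we can further find some $m$'' rather than explicitly citing a second application of countable selection), but the argument is the same.
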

\begin{proof}
Let $U :\subseteq X$ be the open set defined by $x \in U \liff \defineds{(f\ x)}$.
By \autoref{thm:polish-second-countable}, we can write $U = \bigcup_{i : \dN} (F\ i)$ for some $F : \dN \to \mathrm{base\ X}$.

Let $x: X$ with $\defineds{(f\ x)}$, i.e., $x \in U$.
By \autoref{ax:cchoice} we get  $\mval \some{i : \dN} x \in (F\ i)$.
As obviously $(F\ i) \neq \emptyset$, there are some $j, n : \dN$ such that $(F\ i)= \ball(e_j,n)$.
As $2^{-n} > \distx{e_j}{x}$, we can further find some $m : \dN$ such that $2^{-m} < 2^{-n}-(\distx{e_j}{x})$.
Then $\ball(x,m) \subseteq \ball(e_j, n) \subseteq U$ which proves the statement.
\end{proof}

\subsection{Bishop-compactness, Locatedness and Compact-Overt Sets}
Subsets that we are particularly interested in are those that can be characterized in terms of `drawings' with arbitrary precision, meaning
that for each $n$ we can generate a picture of $A$ in terms of `pixels' (i.e.\ small boxes or closed balls) of size $2^{-n}$.
As we will see in this section, it turns out that subsets for which such drawings exist are precisely those that are both compact and overt.

Such sets (and slight variations) have been considered in constructive mathematics and computable analysis under different names, sometimes using the same terminology for slightly different concepts.
In Bishop's constructive mathematics, a set is \emph{compact} if it is \emph{complete} and \emph{totally bounded} \cite{bishop1967foundations}.
This definition of compactness is therefore known as \inlinedef{Bishop-compactness}.
Here, totally bounded means that it can be covered by finitely many subsets of fixed size.
More precisely, we use the following definition.
\begin{defi}[\coqref{totally\_bounded}{/formalization/Hyperspace/MetricSubsets.v}{1555}]
For a subset $A :\subseteq X$ of a Polish space we define 
\[
\begin{array}{rll}
\istotallybounded\  A \equivto & \all{n : \dN} \some{L : [X]} \\[1ex]
        & \quad \lall{c \in L} \ball(c,n) \cap A \neq \emptyset\\[1ex]
        & \quad \land \quad  \lall{x \in A} \csome{c \in L} x \in \ball(c,n). 
   \end{array}
\]
\end{defi}
Thus, $A$ is totally bounded if for each $n$ we get a finite covering of $A$ by open balls, each with radius at most $2^{-n}$ and such that each ball intersects $A$.
We call the set defined by the union of the elements of the $n$-th list of the sequence the $n$-th \inlinedef{covering} of $A$ and denote it by $A_n$.

In \cite{DBLP:conf/mfcs/Konecny0T23} we considered coverings by closed balls instead of open balls.
It is not difficult to see that they are equivalent.
As some proofs turn out to be simpler when using open balls, and others are simpler using closed balls, we add a proof of the statement, so that in the remainder of the paper we can use both open and closed ball coverings, depending on which makes the proof easier.
\begin{lem}[\coqref{totally\_bounded\_whichever}{/formalization/Hyperspace/MetricSubsets.v}{1775}]
\[
\begin{array}{rll}
\istotallybounded\  A \liff & \all{n : \dN} \some{L : [X]} \\[1ex]
        & \quad \lall{c \in L} \overline{\ball(c,n)} \cap A \neq \emptyset
         \\[1ex]
        &\quad \land \quad \lall{x \in A} \csome{c \in L} x \in \overline{\ball(c,n)}. \\
   \end{array}
\]
\end{lem}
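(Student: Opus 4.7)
The plan is to prove each direction of the equivalence separately; in both cases, the list supplied by one notion of total-boundedness can be transformed into one witnessing the other, possibly after a small shift in the precision parameter. Since the definition of $\istotallybounded$ uses constructive existence $\some{L : [X]}$ rather than $\mval$, no new nondeterministic choice is introduced: we simply reuse the list provided by the hypothesis.

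For the forward direction, I would fix $n : \dN$ and reuse the same list $L$ produced by the open-ball definition at precision $n$. The inclusion $\ball(c,n) \subseteq \overline{\ball(c,n)}$ immediately transfers both properties: nonempty intersection with $A$ lifts from the open to the closed ball, and any $x \in A$ witnessed to lie in some $\ball(c,n)$ automatically lies in $\overline{\ball(c,n)}$.

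For the backward direction, I would fix $n : \dN$ and invoke the closed-ball hypothesis at precision $n+1$ to obtain a list $L$. The key inclusion here is $\overline{\ball(c,n+1)} \subseteq \ball(c,n)$, which holds because $\distx{x}{c} \leq 2^{-(n+1)} < 2^{-n}$ for all $x$ in the closed ball. From this inclusion, both required properties follow for the same list $L$: the nonemptiness of $\overline{\ball(c,n+1)} \cap A$ implies nonemptiness of $\ball(c,n) \cap A$, and any covering of $A$ by closed balls of radius $2^{-(n+1)}$ is automatically a covering by open balls of radius $2^{-n}$ with the same centers.

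I do not expect any substantial obstacle: the argument is essentially metric bookkeeping. The only points requiring care are to handle the index shift cleanly in the backward direction and to invoke the strict inequality $2^{-(n+1)} < 2^{-n}$ via the appropriate real-number lemma in the cAERN library. Both are standard facts already available in the existing formalization.
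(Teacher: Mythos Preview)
Your proposal is correct and matches the paper's approach: the forward direction reuses the same list with the inclusion $\ball(c,n) \subseteq \overline{\ball(c,n)}$, and the backward direction shifts to precision $n+1$ using $\overline{\ball(c,n+1)} \subseteq \ball(c,n)$. The paper phrases the latter as taking ``the interior of the balls of the $(n+1)$-st closed covering,'' which amounts to exactly the index shift you describe.
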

\begin{proof}
If $A$ is totally bounded, then replacing the balls in the covering by their closure suffices.
If we have a covering of $A$ with closed balls as above, we can show that $A$ is totally bounded by defining the balls for the $n$-th covering by taking the $(n+1)$-st closed covering and use open balls with the same centers and radius $2^{-n}$.
\end{proof}

Further, sometimes it is required that in the definition of a set being totally bounded, the centers of the coverings need to be contained in the set itself.
For a complete subset, the notions are again equivalent.
\begin{lem}[\coqref{bishop\_compact\_centered}{/formalization/Hyperspace/MetricSubsets.v}{2749}]\label{lem:tot-bounded-strong}
Let $A :\subseteq \dX$ be totally bounded and complete.
Then,
\[ \mval \lall{n : \dN} \some{L : [A]} \lall{x \in A} \csome{c \in L} x \in \ball(c,n). \]
\end{lem}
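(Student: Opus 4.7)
The plan is to take the given external covering and refine each center to a genuine point of $A$ nearby. Given $n : \dN$, I first apply $\istotallybounded\ A$ at precision $n+3$ to obtain a finite list of centers $c_1, \ldots, c_m : X$ with each $\ball(c_i,n+3) \cap A \neq \emptyset$ and every $x \in A$ lying in some $\ball(c_i,n+3)$. For each fixed $i$, I use nondeterministic dependent choice to build a sequence $(d_k)_{k : \dN}$ with $d_0 = c_i$, where $d_{k+1}$ is selected from the $(n+4+k)$-covering so that $\distx{d_k}{d_{k+1}} < 2^{-(n+2+k)}$.

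The refinement step is constructive: the $(n+4+k)$-covering is a finite list of candidates, and because $\ball(d_k,n+3+k) \cap A$ is (classically) nonempty and is covered by the $(n+4+k)$-balls, at least one candidate $c'$ satisfies the strict inequality $\distx{d_k}{c'} < 2^{-(n+3+k)} + 2^{-(n+4+k)} < 2^{-(n+2+k)}$. Since strict real inequality is semi-decidable, \autoref{ax:cchoice} lets me select one such candidate nondeterministically from the finite list. The resulting sequence $(d_k)$ is fast Cauchy in $X$, so by completeness of $X$ it has a limit $d_\infty$ with $\distx{c_i}{d_\infty} \leq \sum_{k \geq 0} 2^{-(n+2+k)} = 2^{-(n+1)}$. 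For any $x \in A$, picking $i$ with $x \in \ball(c_i,n+3)$ gives $\distx{x}{d_\infty} < 2^{-(n+3)} + 2^{-(n+1)} < 2^{-n}$, so $x \in \ball(d_\infty,n)$.

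It remains to show $d_\infty \in A$, which is a proposition in $\Prop$ and may therefore be established classically. For each $k$ one picks (classically) $a_k \in \ball(d_k,n+3+k) \cap A$; a triangle inequality gives $\distx{a_k}{d_\infty} \leq \distx{a_k}{d_k} + \distx{d_k}{d_\infty} \to 0$. Passing to a subsequence $b_k$ with $\distx{b_k}{d_\infty} \leq 2^{-(k+1)}$ yields a fast Cauchy sequence in $A$. The constructive completeness of $A$ then supplies a limit in $A$, which by uniqueness of limits in $X$ equals $d_\infty$. Hence $d_\infty \in A$, and collecting the $d_\infty$ obtained for each $i$ yields the desired list $L : [A]$.

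The main obstacle is precisely the step $d_\infty \in A$: the iterative construction lives entirely in $X$, and there is no visible computable procedure to extract elements of $A$ from balls merely known to intersect $A$. The workaround relies crucially on the $\Prop$-valuedness of $A :\subseteq X$: since $A\ d_\infty$ is a classical proposition, classically chosen witnesses $a_k$ are admissible, and the constructive completeness of $A$ together with uniqueness of the limit forces the constructively produced $d_\infty$ into $A$.
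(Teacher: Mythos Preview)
Your argument is correct. The refinement step is sound (each $d_{k+1}$ is a center of a covering ball, so the intersection hypothesis propagates), the geometric series bound $\distx{c_i}{d_\infty}\le 2^{-(n+1)}$ is right, and your closing argument for $d_\infty\in A$ is legitimate: since the goal lies in $\Prop$, you may destruct the classical $\exists$ to obtain an actual sequence $b:\dN\to X$ landing in $A$, apply the constructive $\iscomplete\ A$ to it (yielding a $\Sigma$-type, which eliminates freely into $\Prop$), and conclude by uniqueness of limits.

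The paper proceeds differently and more modularly. It first observes (via \autoref{lem:bc-located}) that a nonempty totally bounded set is located, so the distance function $d_A$ is available; then it invokes \autoref{lem:located-choice}, whose proof already builds a Cauchy sequence converging into $A$ starting from any point with small $d_A$-value. Thus for each center $c$ of the $(n+2)$-covering (where $d_A(c)<2^{-(n+2)}$) one obtains $c'\in A$ with $\distx{c}{c'}<2^{-(n+1)}$ directly. Your route is more self-contained: you never compute $d_A$ and instead thread the nested covering data by hand. The paper's route buys reuse of two general lemmas and a two-line proof; yours buys independence from the locatedness machinery and makes the construction of the Cauchy sequence explicit at the level of coverings.
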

\begin{proof}
By applying the nondeterministic countable choice, it suffices to show that for each $n : \dN$, we can nondeterministically find an $n$-th covering of $A$ with centers in $A$.
Choose the covering $A_{n+2}$.
For each center $c : \dX$ in the covering, by \autoref{lem:located-choice} we can find a $c' \in A$ with $\distx{c'}{c} < 2^{-(n+1)}$.
Thus $\ball(c,n+1) \subseteq \ball(c',n)$ and replacing the center with $c'$ suffices for the $n$-th covering.
\end{proof}
Another related notion that plays a central role in constructive mathematics is locatedness.
A subset of a Polish space is \inlinedef{located} if the distance between the set and a point is computable.
Naturally, there is a tight connection between locatedness and overtness (cf. \cite{spitters2010locatedness}), however, we currently do not explore this completely in our formalization and only show a few facts that are useful for our applications.

For $A :\subseteq X$ and $x \in X$, let us write $d(x,A)$ for $\inf_{y \in A} \distx{x}{y}$.
We define a classical proposition, stating that a real number $r: \dR$ is the distance of $A$ and $x$ as follows:
\[
\mathrm{is\_dist}\ A\ x\ r \equivto \left(\lall{y \in A} \distx{x}{y} \geq r \right) \  \land \  \left( \lall{s : \dR}(\lall{y \in A} \distx{x}{y} \geq s) \rightarrow s \leq r \right) .
\]
By applying classical completeness, it is easy to show that the distance to any nonempty set exists classically and is unique.
\begin{lem}[\coqref{classical\_dist\_exists}{/formalization/Hyperspace/MetricSubsets.v}{2019}]\label{lem:classical-dist-exists}
Let $A :\subseteq X$ and $A \neq \emptyset$, then
\[ \lall{x : X} \usome{r : \dR} \mathrm{is\_dist}\ A\ x\ r \]
where $\usome{x : X}P\ x$ is defined naturally for classical unique existence in $\Prop$.
\end{lem}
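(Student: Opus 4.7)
The plan is to reduce the statement to the classical greatest-lower-bound principle for nonempty subsets of $\dR$ bounded below, which is a standard classical consequence of the constructive completeness axiom on $\dR$ once excluded middle and propositional extensionality in $\Prop$ are available. Since the entire statement sits in $\Prop$, no constructive content is required and all reasoning can be classical.

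Uniqueness is immediate: if $r_1$ and $r_2$ both satisfy $\mathrm{is\_dist}\ A\ x$, the first conjunct of each says it is a lower bound of $\{\distx{x}{y} \mid y \in A\}$, while the second conjunct says it dominates every lower bound; applying the second conjunct of each to the other yields $r_1 \leq r_2$ and $r_2 \leq r_1$, hence $r_1 = r_2$.

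For existence I would, using excluded middle on the classical assumption $A \neq \emptyset$, pick a witness $y_0 \in A$. Consider then the classical set of lower bounds
\[ L \equivto \lam{s : \dR} \lall{y \in A} s \leq \distx{x}{y}. \]
This set is nonempty (any nonpositive real lies in $L$, since distances are nonnegative) and bounded above by $\distx{x}{y_0}$. Invoking the classical supremum principle yields $r \equivto \sup L$. A short argument by contradiction, applying excluded middle on whether $r \leq \distx{x}{y}$ for each $y \in A$, shows that $r$ is itself a lower bound of the distance set (otherwise some element of the set would contradict $r$ being an upper bound of $L$), and the supremum property directly gives that $r$ dominates every other lower bound, establishing $\mathrm{is\_dist}\ A\ x\ r$.

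The only obstacle is producing the classical sup/inf principle as a derived lemma from the constructive completeness axiom on $\dR$; this is a routine classical manipulation (exhibit a Cauchy sequence from $L$ and the upper bound via a binary-search / bisection argument, then apply completeness), and once it is in hand the rest of the proof is bookkeeping.
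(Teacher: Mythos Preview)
Your proposal is correct and aligns with the paper's approach: the paper gives no proof beyond the remark ``By applying classical completeness, it is easy to show that the distance to any nonempty set exists classically and is unique,'' and your argument via the classical supremum of the set of lower bounds is exactly the standard way to cash this out. The uniqueness argument and the reduction of the sup/inf principle to the constructive completeness axiom via bisection are both sound and appropriate for this setting.
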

We formally define locatedness as follows.
\begin{defi}[\coqref{located}{/formalization/Hyperspace/MetricSubsets.v}{1933}]
For a subset $A :\subseteq X$ of a Polish space we define 
\[
  \islocated\  A \equivto  \all{x : X} \some{r : \dR} \mathrm{is\_dist}\ A\ x\ r.  
\]
\end{defi}
For a located set $A$, we  write $d_A : X \to \dR$ for the distance function $x \mapsto r$ witnessing the locatedness of $A$.
Note that the empty set is not located according to our definition.

It is possible to choose a point from a complete, located set:
\begin{lem}[\coqref{located\_choice}{/formalization/Hyperspace/MetricSubsets.v}{2717}]\label{lem:located-choice}
Let $A :\subseteq X$ be complete and located, then $\mval \some{x : X} x \in A$. 
\end{lem}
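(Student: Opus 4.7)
The plan is to nondeterministically construct a fast Cauchy sequence in $\dX$ approaching $A$, take its limit via completeness of $\dX$, and show classically that this limit lies in $A$ using completeness of $A$.

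Let $(e_i)_{i : \dN}$ be the dense sequence of $\dX$ and $d_A : \dX \to \dR$ the computable distance function given by locatedness of $A$; note that the infimum characterization in $\mathrm{is\_dist}$ forces $A$ to be classically nonempty, since otherwise no $r$ would satisfy the defining conditions of $d_A$. Using nondeterministic dependent choice I build a sequence $c_n := e_{i_n}$ with $d_A(c_n) < 2^{-(n+3)}$ and $\distx{c_n}{c_{n+1}} < 2^{-(n+2)}$. At each step such an index $i_{n+1}$ exists classically (pick $a \in A$ within $2^{-(n+3)}$ of $c_n$ via the infimum property, then $e_j$ within $2^{-(n+5)}$ of $a$ by density, so that $e_j$ satisfies both required bounds), and since the bounds are strict inequalities on computable reals and hence semi-decidable, \autoref{ax:cchoice} produces the index nondeterministically.

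The bound on consecutive distances telescopes to make $(c_n)_n$ fast Cauchy, so by completeness of $\dX$ we obtain $x := \lim c_n$, and the triangle inequality $d_A(x) \leq d_A(c_n) + \distx{c_n}{x} \to 0$ yields $d_A(x) = 0$.

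Since $x \in A$ is a classical proposition I conclude by reasoning in $\Prop$: from $d_A(x) = 0$ I classically extract, for every $n$, some $a_n \in A$ with $\distx{a_n}{x} < 2^{-(n+1)}$, and classical countable choice assembles them into an actual fast Cauchy sequence $(a_n)_n$ in $A$; invoking the completeness of $A$ as a $\Type$-level datum produces a limit $x' \in A$ which must coincide with $x$ by uniqueness of limits. The main obstacle is precisely this final step: promoting the pointwise classical existence of approximating points in $A$ to a genuine $\dN \to A$ function, which is what completeness of $A$ needs in order to be applied, relies on a countable-choice principle at the level of $\Prop$ to bridge the constructive construction of $x \in \dX$ with its classical membership in $A$.
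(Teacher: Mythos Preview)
Your approach matches the paper's: build a fast Cauchy sequence in $\dX$ whose $d_A$-values tend to zero (the paper does the inductive step via \autoref{lem:open-choice} applied to the open nonempty set $\{y : d_A(y) < 2^{-(n+1)}\} \cap \ball(x_n, n-1)$, which is your density-plus-semidecidability argument packaged slightly differently), take its limit $x$, and conclude $x \in A$ from $d_A(x)=0$ together with completeness of $A$. Your worry about the final step is legitimate, but the paper's text handles it no better --- it simply writes ``since $A$ is complete, it must hold that $x \in A$'' without spelling out how the pointwise classical existence of approximants in $A$ is promoted to an actual function $\dN \to A$ for feeding into $\iscomplete\ A$.
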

\begin{proof}
Since $A$ is located, it has a distance function
$d_A:X\to\dR$. Consider the open set
\[
U_2:=\{x:X\mid d_A(x)<2^{-2}\}.
\]
It is semidecidable, and it is nonempty because $A\subseteq U_2$.
Hence, by \autoref{lem:open-choice}, we can nondeterministically choose
$c_0\in U_2$.

We construct a sequence $(x_n)_{n\in\dN}$ inductively starting with $x_0=x_1=x_2=c_0$. Suppose that $n\geq 2$ and that $x_n$ has been chosen with
$d_A(x_n)<2^{-n}$. Consider the open set
\[
V_n
 :=
 \{y:X\mid d_A(y)<2^{-(n+1)}\}
 \cap
 B(x_n,n-1).
\]
 It is nonempty as there exists $a\in A$ such that
\[
d_X(x_n,a)<d_A(x_n)+2^{-n}< 2^{-(n-1)}
\]
and since $d_A(a)=0<2^{-(n+1)}$, $a\in V_n$. Therefore,
\autoref{lem:open-choice} yields a point $x_{n+1}\in V_n$. In
particular,
$d_A(x_{n+1})<2^{-(n+1)}$ and $d_X(x_n,x_{n+1})<2^{-(n-1)}$.
Using nondeterministic dependent choice, these successive choices can
be combined into an infinite sequence.

Define $f_n:=x_{n+2}$ then $f$ is a fast Cauchy sequence and thus has a limit $x=\lim f$. 
It remains to show that
$x\in A$. 
For each $n$, there is point $a_n\in A$ satisfying
$
d_X(f_n,a_n) < 2^{-(n+1)}.
$
Consequently, after shifting the sequence $(a_n)_n$ by a fixed number
of indices, it is a fast Cauchy sequence in $A$ with limit $x$.
Completeness of $A$ therefore implies $x\in A$, as required.
\end{proof}



\begin{lem}[\coqref{totally\_bounded\_located}{/formalization/Hyperspace/MetricSubsets.v}{2509}]\label{lem:bc-located}
Assume $A$ is nonempty and totally bounded. Then $A$ is located.
\end{lem}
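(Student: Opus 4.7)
The plan is to build, for each $x : X$, a fast Cauchy sequence of dyadic approximations to $d(x,A)$ using the finite coverings provided by total boundedness, and then apply the completeness axiom of $\dR$ to obtain the witness $r$.

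First, for each $n : \dN$ invoke $\istotallybounded\ A$ to obtain a list $L_n : [X]$ such that $\bigcup_{c \in L_n}\ball(c,n)$ covers $A$ and each ball meets $A$. Since $A \neq \emptyset$ classically, each $L_n$ is classically nonempty; combined with decidability of list nonemptiness this allows us to avoid the empty-list case and define
\[
m_n \,\equivto\, \min_{c \in L_n} \distx{x}{c} : \dR,
\]
using the finite $\min$ on reals. Second, I would establish the two-sided bound $m_n - 2^{-n} \leq d(x,A) \leq m_n + 2^{-n}$ classically (using \autoref{lem:classical-dist-exists} to give meaning to $d(x,A)$): the lower bound follows because every $y \in A$ lies in some $\ball(c,n)$ with $c \in L_n$, so $\distx{x}{y} \geq \distx{x}{c} - 2^{-n} \geq m_n - 2^{-n}$; the upper bound follows because the $c \in L_n$ realising the minimum has $\ball(c,n) \cap A \neq \emptyset$, so some $y \in A$ satisfies $\distx{x}{y} \leq m_n + 2^{-n}$.

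Third, from the two-sided bound the triangle inequality on $\IR$ gives $|m_n - m_k| \leq 2^{-n} + 2^{-k}$, so after an index shift $r_n \equivto m_{n+1}$ the sequence $(r_n)_{n:\dN}$ is fast Cauchy. Invoking the completeness axiom for $\dR$ yields $r : \dR$ with $\mathrm{is\_limit}\ r\ r_n$, i.e.\ $|r - m_{n+1}| \leq 2^{-(n+1)}$ and hence $|r - m_n| \leq 2^{-n+1}$.

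Finally, I would verify $\mathrm{is\_dist}\ A\ x\ r$. For the first conjunct, given $y \in A$, the lower bound yields $\distx{x}{y} \geq m_n - 2^{-n}$ for every $n$; passing to the limit with the estimate $|r - m_n| \leq 2^{-n+1}$ forces $\distx{x}{y} \geq r$. For the second conjunct, given $s : \dR$ with $s \leq \distx{x}{y}$ for all $y \in A$, the upper bound supplies, for each $n$, some $y_n \in A$ with $\distx{x}{y_n} \leq m_n + 2^{-n}$, so $s \leq m_n + 2^{-n}$, and again passing to the limit yields $s \leq r$. The main obstacle is really only bookkeeping: handling the empty-list case cleanly (so that the $\min$ is defined) and threading the classical existence of $d(x,A)$ from \autoref{lem:classical-dist-exists} through the estimates to obtain the fast Cauchy property; the computational core — computing $m_n$ from the finite list and applying $\dR$-completeness — is straightforward.
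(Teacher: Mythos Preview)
Your proposal is correct and follows essentially the same approach as the paper: approximate $d(x,A)$ by the minimum distance from $x$ to the centres of the $n$-th covering, show this yields a fast Cauchy sequence, and apply completeness of $\dR$. The paper phrases the approximation slightly more indirectly, via the distance $d(x,A_n)$ to the covering set and the identity $d(x,A\cup B)=\min(d(x,A),d(x,B))$, which leads to looser constants ($3\cdot 2^{-n}$ and $2\cdot 2^{-n}$) and an index shift by $3$; your direct use of $m_n=\min_{c\in L_n}\distx{x}{c}$ gives the cleaner bound $|m_n-d(x,A)|\leq 2^{-n}$, but the computational content is the same.
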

\begin{proof}
We need to show that for any nonempty totally bounded set we can compute the distance function.
That is, we want to show
\[
 \istotallybounded \ A \rightarrow A \neq \emptyset \rightarrow \all{x : X} \some{r : \dR}\mathrm{is\_dist}\ A\ x\ r.
\]
The idea is to show that for any $n : \dN$ we can get a $2^{-n}$ approximation of the distance (which classically exists according to \autoref{lem:classical-dist-exists}).
We can then use the limit operator to compute the limit.
The proof consists of two parts:
\begin{enumerate}
\item We show that the distance to the $n$-th cover approximates the distance well. More precisely, we show the following statement:
\[
\mathrm{is\_dist}\ A\ x\ r \rightarrow \mathrm{is\_dist}\ A_n\ x\ r' \rightarrow \abs{r-r'} \leq 3 \cdot 2^{-n}.
\]
\item We show that we can approximate the distance to the $n$-th approximation well enough:
\[
\all{n : \dN} \some{r : \dR} \all{r' : \dR} \mathrm{is\_dist}\ A_n\ x\ r' \rightarrow \abs{r-r'} \leq 2 \cdot 2^{-n}.
\]
\end{enumerate}
Then obviously approximating the distance to the $(n+3)$-rd covering gives a $2^{-n}$ approximation to the distance of $x$ and $A$.
For 1., it suffices to show that $A \subseteq A_n$ and that for any $x \in A_n$ there is a $y \in A$ with $\distx{x}{y} \leq 2 \cdot 2^{-n}$.
For 2., we show that $d(x,A \cup B) = \min(d(x,A),d(x,B))$.
Since $A_n$ is a finite union of balls, we can thus approximate the distance to $A_n$ by approximating the distance to each ball.
Now, $\distx{c}{x}$ is a good enough approximation to $d(x,\ball(c,n))$.

\end{proof}

\begin{lem}[\coqref{compact\_overt\_totally\_bounded}{/formalization/Hyperspace/MetricSubsets.v}{1680}, \coqref{compact\_complete}{/formalization/Hyperspace/MetricSubsets.v}{1914}]
Let $X$ be Polish and $K :\subseteq X$ be compact and overt, then $K$ is complete and totally bounded.
\end{lem}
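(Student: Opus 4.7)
The plan is to establish the two conclusions separately. For \emph{completeness}, note that inequality on $X$ is semi-decidable: $x \neq y$ iff $\distx{x}{y} > 0$, and strict order on reals is semi-decidable by the axioms of $\dR$. Hence by \autoref{lem:compact-closed}, every compact subset of $X$ is closed, so in particular $K$ is closed; since $X$ is complete by assumption, \autoref{lem:closed_complete} then yields completeness of $K$.

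For \emph{total boundedness}, fix $n : \dN$ and let $(e_i)_{i : \dN}$ be the countable dense sequence of $X$. Set $U_i \equivto \ball(e_i, n)$. By density, $K \subseteq \bigcup_{i : \dN} U_i$, and applying \autoref{lem:compact-fin-cover} to the compact set $K$ and this cover we nondeterministically obtain an $m : \dN$ with $K \subseteq U_0 \cup \cdots \cup U_m$. This finite cover may contain balls disjoint from $K$ and so must be pruned in order to satisfy the condition $\ball(c, n) \cap K \neq \emptyset$ for every chosen center.

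To prune it, I would enumerate the finitely many subsets $S \subseteq \{0, \ldots, m\}$ and, for each such $S$, form the semi-decidable conjunction ``$K \subseteq \bigcup_{k \in S} U_k$ and $U_k \cap K \neq \emptyset$ for every $k \in S$''. The first conjunct is a test on an open set (since finite unions of open sets are open by \autoref{lem:open-facts}) and is semi-decidable by compactness of $K$; the second is a finite conjunction of tests that are semi-decidable by overtness of $K$. Classically the subset $S_0 \equivto \{k \leq m \mid U_k \cap K \neq \emptyset\}$ witnesses the conjunction, so \autoref{ax:cchoice} nondeterministically produces some valid $S$, and we output the list $L \equivto (e_k)_{k \in S}$.

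The main difficulty is chiefly bookkeeping rather than conceptual: the construction invokes $\mval$ twice (once for the finite subcover, once for the selection of $S$), and these need to be composed via the monadic bind. The classical justification for the selection is immediate since discarding balls disjoint from $K$ leaves its cover of $K$ intact, and both semi-decidability tests follow directly from the defining properties of compact and overt subsets, so no genuinely new mathematical idea is needed beyond the lemmas developed earlier.
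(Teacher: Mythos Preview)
Your proof is correct, and the completeness half matches the paper exactly. For total boundedness, however, you and the paper take the two steps in opposite order.

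The paper first applies overtness to the full countable family $(\ball(e_i,n))_{i:\dN}$: using the test $\ball(e_i,n)\cap K\neq\emptyset$ it (nondeterministically) extracts the subsequence consisting of exactly those balls that meet $K$, checks that this subsequence still covers $K$ by density, and only then invokes \autoref{lem:compact-fin-cover} to cut down to a finite list. Thus the intersection condition is guaranteed before the finite subcover is taken, and no pruning step is needed afterwards. (The paper also handles the empty case explicitly at the outset, using that emptiness is decidable for compact-overt sets.)

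You instead apply \autoref{lem:compact-fin-cover} first to get a finite but possibly ``dirty'' cover, and then prune by searching over the $2^{m+1}$ subsets $S\subseteq\{0,\dots,m\}$, using compactness and overtness together to semi-decide the conjunction and \autoref{ax:cchoice} to pick a valid $S$. This is perfectly sound, and has the minor advantage that the empty case falls out automatically (the subset $S=\emptyset$ witnesses the test when $K=\emptyset$). The paper's ordering, on the other hand, avoids the subset enumeration entirely and uses compactness only once, which is slightly cleaner both conceptually and for the extracted program.
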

\begin{proof}
Inequality on metric spaces is semi-decidable, thus by \autoref{lem:compact-closed} every compact set is closed and by \autoref{lem:closed_complete} complete.
It remains to show that $K$ is totally bounded.
As the exact choice of coverings can be nondeterministic, we formally show the statement
\[
 \iscompact\ K \rightarrow \isovert\ K \rightarrow \mval\ ( \istotallybounded\ K).
\]
For this, it suffices that for any $n : \dN$, we can nondeterministically find a list $L : [X]$ such that the balls of radius $2^{-n}$ centered at the elements of $L$ cover $K$ and each of the balls intersects $K$.
First note that for a set that is both compact and overt, it is decidable if the set is empty or not.
In the case that the set is empty we can thus simply return the empty list.
Now assume that $K$ is nonempty.
By compactness of $K$ and \autoref{lem:compact-fin-cover} it suffices to find a countable cover with the desired property.
Since $K$ is overt, we test if the open ball of radius $2^{-n}$ centered at $e_i$ intersects $K$ for any element $e_i : X$ of the dense subset of $X$.
More precisely, we can define a sequence $f: \dN \to \dS$ by $\defineds{(f\ i)} \liff \ball(e_i,n) \cap K \neq \emptyset$.
As there exists an $x \in K$ and some element of the dense subset is $2^{-n}$ close to that $x$, there is at least one $i : \dN$ such that $\defineds{(f\ i)}$.
Thus, we can (nondeterministically) choose a sequence $g: \dN \to \dN$ such that the subsequence defined by $g$ contains exactly those $e_j$ for which $\ball(e_j, n)$ intersects $K$. 
It remains to show that this defines a cover of $K$.
For $x \in K$, we can find an $i : \dN$ such that $\distx{x}{e_i} < 2^{-n}$.
Thus $x \in \ball(e_i, n)$ and therefore $\ball(e_i,n) \cap K \neq \emptyset$ which means that $i$ is contained in the subsequence. 
\end{proof}

Before we show the other direction, we need the following lemma.
\begin{lem}[\coqref{totally\_bounded\_id\_intersection}{/formalization/Hyperspace/MetricSubsets.v}{1840}]\label{lem:K-covering-intersection}
Let $K$ be totally bounded and complete.
Then $K = \bigcap_{n : \dN} K_n$ where $K_n$ denotes the $n$-th covering of $K$.
\end{lem}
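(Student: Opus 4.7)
The plan is to prove the two inclusions separately, noting first that the claimed equality is a statement in $\Prop$ (it unfolds to $\lall{x : X} x \in K \liff x \in \bigcap_n K_n$), so we are free to use classical existence to pick witnesses.

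The inclusion $K \subseteq \bigcap_{n : \dN} K_n$ is immediate: the second conjunct in the definition of $\istotallybounded$ gives precisely $\lall{x \in K} x \in K_n$ for every $n$, so $K \subseteq K_n$ holds for all $n$ and hence $K$ is contained in the intersection.

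For the reverse inclusion $\bigcap_{n : \dN} K_n \subseteq K$, fix $x \in \bigcap_n K_n$. For each $n$, the membership $x \in K_n$ means there classically exists a center $c_n$ in the $n$-th list with $x \in \ball(c_n, n)$, i.e.\ $\distx{x}{c_n} < 2^{-n}$. By the first conjunct of $\istotallybounded$, $\ball(c_n,n) \cap K \neq \emptyset$, so there classically exists $y_n \in K$ with $\distx{c_n}{y_n} < 2^{-n}$. The triangle inequality then gives $\distx{x}{y_n} < 2 \cdot 2^{-n} = 2^{-(n-1)}$. Re-indexing by setting $y'_n := y_{n+2}$ yields a sequence in $K$ with $\distx{x}{y'_n} < 2^{-(n+1)}$, and in particular $\distx{y'_n}{y'_m} \leq \distx{y'_n}{x} + \distx{x}{y'_m} \leq 2^{-n} + 2^{-m}$, so $y'$ is a fast Cauchy sequence in $K$ with $x$ as its fast limit.

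Since $K$ is complete, there is some $x' \in K$ with $\mathrm{is\_fast\_limit}\ x'\ y'$, and by uniqueness of limits (applied in $X$) we have $x = x'$, hence $x \in K$. The only step requiring any care is the index shift to obtain the fast Cauchy condition; the rest is a direct combination of the definitions of totally bounded and complete with the triangle inequality. No continuity or nondeterminism is needed here because the statement is propositional and classical witness selection suffices.
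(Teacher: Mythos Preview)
Your proof is correct and follows essentially the same approach as the paper: both directions are handled identically, with the reverse inclusion obtained by choosing a point of $K$ in each covering ball containing $x$, forming a fast Cauchy sequence converging to $x$, and invoking completeness of $K$ together with uniqueness of limits. Your version is slightly more explicit about the index shift and the use of classical existence (which the paper leaves implicit), but the argument is the same.
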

\begin{proof}
Assume $K$ is totally bounded and complete and let $x \in K$.
Then by definition $x$ is contained in one of the balls of the covering $K_n$.
Now assume $x \in K_n$ for all $n : \dN$.
We can define a sequence $x_n$ with $x_n \in K$ and $\lim x_n = x$ since by definition there is a ball $B$ of radius $2^{-(n+1)}$ with $x \in B$ and $K \cap B \neq \emptyset$.
Thus, any $x_n \in B \cap K$ has distance at most $2^{-n}$ to $x$.
Since $K$ is complete, the limit of the sequence is contained in $K$, which shows that $x \in K$.
\end{proof}
The next lemma shows that Bishop-compactness is equivalent to being compact-overt.
\begin{lem}[\coqref{bishop\_compact\_compact}{/formalization/Hyperspace/MetricSubsets.v}{3418}, \coqref{bishop\_compact\_overt}{/formalization/Hyperspace/MetricSubsets.v}{3539}]
Assume $K$ is totally bounded and complete, then $K$ is compact and overt.
\end{lem}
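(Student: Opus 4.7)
The plan is to construct, for each direction, a deterministic test function into $\dS$. For any nondeterminism arising from intermediate constructions (e.g.\ choosing coverings or base decompositions), we will rely on the fact that two elements of $\dS$ are equal iff they have the same definedness status, which lets us commit to a unique outcome via singleton elimination once the target value is uniquely determined by the data.

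For overtness, I would use \autoref{lem:tot-bounded-strong} to (nondeterministically) fix a sequence of coverings $(L_n)_{n : \dN}$ of $K$ whose centers lie in $K$. Given $U : \sopen\ X$ with test function $g_U : X \to \dS$, define $f_O(U)$ to semi-decide ``$\csome{n : \dN}\csome{c \in L_n} c \in U$'' by taking the countable disjunction (\autoref{ax:disj}) of the semi-decidable predicates $\defineds{(g_U\ c)}$ indexed over all centers of all $L_n$. If this succeeds with witness $c$, then $c \in K \cap U$. Conversely, if $x \in K \cap U$, by metric continuity (\autoref{cor:metric-continuity}) there is some $m : \dN$ with $\ball(x,m) \subseteq U$; then for any $c \in L_{m+1}$ with $\distx{c}{x} < 2^{-(m+1)}$ we get $c \in U$, and such a $c$ must exist since $L_{m+1}$ covers $K$ and contains $x$ in one of its balls.

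For compactness, I would combine the (deterministic) total boundedness with the second countability from \autoref{thm:polish-second-countable}. Nondeterministically fix an exhaustion $U = \bigcup_i (F\ i)$ of $U$ by base balls, and for each $n : \dN$ let $L_n$ be the $n$-th covering of $K$ given by total boundedness. Define $f_K(U)$ to semi-decide the following condition via \autoref{ax:disj}:
\[
\csome{n, m : \dN}\lall{c \in L_n}\csome{i \leq m} \ball(c, n) \subseteq (F\ i).
\]
Since each $(F\ i)$ is either empty or a ball $\ball(d,k)$, containment $\ball(c, n) \subseteq \ball(d, k)$ reduces to the decidable condition $\distx{c}{d} + 2^{-n} \leq 2^{-k}$, and the outer conjunction over $c \in L_n$ is finite, so the inner test is decidable.

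If $f_K(U)$ succeeds, then $K \subseteq \bigcup_{c \in L_n}\ball(c, n) \subseteq \bigcup_{i \leq m}(F\ i) \subseteq U$. For the converse, suppose $K \subseteq U$. Here I would switch to classical reasoning in $\Prop$: since $K$ is totally bounded and complete in a complete metric space, a standard diagonal argument shows every sequence in $K$ has a Cauchy subsequence with limit in $K$, so $K$ is classically compact. The open cover $\{(F\ i)\}_i$ then admits a classical finite subcover and a positive Lebesgue number; picking $n_0$ with $2^{-n_0}$ smaller than half this number and taking $n \geq n_0 + 1$, for each $c \in L_n$ we pick any $y_c \in \ball(c, n) \cap K$ and observe $\ball(c, n) \subseteq \ball(y_c, n-1) \subseteq (F\ i)$ for some $i$ in the finite subcover. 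Letting $m$ be the maximum such index yields a witnessing pair $(n,m)$. The main obstacle is packaging the internal nondeterminism (the choice of base exhaustion and of centered coverings) into a single honest function $\sopen\ X \to \dS$; this is handled by observing that the resulting Sierpinski value is uniquely determined by the propositional specification $\defineds{(f_K\ U)} \liff K \subseteq U$, so we can extract a single $f_K$ from $\mval$ by singleton elimination.
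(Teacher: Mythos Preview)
Your overtness argument is correct and arguably cleaner than the paper's route: you go directly through centered coverings (\autoref{lem:tot-bounded-strong}) and metric continuity, whereas the paper first establishes compactness and then uses locatedness (\autoref{lem:bc-located}) to reduce the test $\ball(x,n) \cap K \neq \emptyset$ to $d_K(x) < 2^{-n}$, combining this with second countability.

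The compactness argument has a genuine gap. First a minor point: the test $\distx{c}{d} + 2^{-n} \leq 2^{-k}$ is a comparison of real numbers and is \emph{not} decidable; with strict inequality it becomes semi-decidable, which is all your countable-disjunction construction needs. The more serious issue is that the distance condition is only \emph{sufficient} for $\ball(c,n) \subseteq \ball(d,k)$, not equivalent to it in a general metric space (consider an ultrametric, where every point of a ball is a center). Your Lebesgue-number argument establishes the set-theoretic containment $\ball(c,n) \subseteq \ball(y_c,n-1) \subseteq (F\ i)$, but your semi-decidable test is checking the stronger distance inequality, so you have not shown the test actually fires when $K \subseteq U$.

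The paper closes this gap by a different converse: assuming the test never fires, for every $k$ there is a center $x_k \in L_k \subseteq K$ (this is where \autoref{lem:tot-bounded-strong} is used) with $\distx{x_k}{c_i} \geq 2^{-n_i} - 2^{-k}$ for all $i$; by classical sequential compactness of $K$ (derived from totally bounded plus complete) a subsequence converges to some $x \in K$ with $\distx{x}{c_i} \geq 2^{-n_i}$ for all $i$, contradicting $K \subseteq U$. Your Lebesgue approach can be repaired by proving a uniform margin---an $\epsilon > 0$ such that every $y \in K$ satisfies $\distx{y}{d_i} < 2^{-k_i} - \epsilon$ for some $i$ in the finite subcover---but establishing that margin is essentially the same sequential-compactness argument the paper uses.
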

\begin{proof}
Note that for totally bounded $K$, $K = \emptyset$ is decidable and in the case that $K = \emptyset$ both the compact and overt information is trivial.
We thus can assume that $K \neq \emptyset$.

We first show that $K$ is compact.
It suffices to show that $K$ is classically compact, i.e.,
\[
\all{U : \dN \to \sopen \  X} K \subseteq \bigcup_{n : \dN} U_n \rightarrow \csome{k : \dN} K \subseteq \bigcup_{n = 0}^k U_n  
\]
and that for any cover $U$ and $k : \dN$, we can semi-decide whether $K \subseteq \bigcup_{n=0}^k U_n$, as this allows us to search through all finite covers in parallel until we find one that is large enough.
Further, using the fact that the space is second countable, it suffices to use open balls for the open sets $U_i$ in the coverings.

The first part is very similar to the standard textbook proof of the equivalence of totally bounded and complete and compactness.
We omit it here and refer to the Coq formalization (\coqref{bishop\_compact\_classical\_fincover\_compact}{/formalization/Hyperspace/MetricSubsets.v}{3359}).

For the second part, we further use the fact that a complete and totally bounded set  is classically sequentially compact, i.e.,
\[
\all{f : \dN \to K} \csome{g : \dN \to \dN} (\all{n : \dN} (g\  (n+1)) > (g\ n)\ ) \land \ \csome{y \in K} y = \lim f\circ g. 
\]
Again, the classical proof is similar to the textbook proof and omitted from the paper (\coqref{bishop\_compact\_classically\_seqcompact}{/formalization/Hyperspace/MetricSubsets.v}{3170}).

Now, assume that $B = \bigcup_{i= 0}^N \ball(c_i, n_i)$ is a finite union of balls.
We show that $K \subseteq B$ is semi-decidable. 
Since $K$ is totally bounded, we can find coverings $K_n = \bigcup_{i=0}^{N_i} K_{n,i}$ of $K$ with arbitrarily small balls.
The idea is that for large enough $n$, each $K_{n,i}$ will be completely contained in at least one of the balls.
To this end, for two balls $B_1 = \ball(x, n)$ and $B_2 = \ball(y, m)$, let us write $B_1 \ll B_2$ if $(\distx{x}{y}) < 2^{-m} - 2^{-n}$. 
Obviously, $B_1 \ll B_2$ implies that $B_1 \subseteq B_2$ and $B_1 \ll B_2$ is semi-decidable as it reduces to comparisons of reals.
Let us further write $L_n : [X]$ for the list of centers of the $n$-th covering.
According to \autoref{lem:tot-bounded-strong} we may assume that all $x \in L_n$ are contained in $K$.
Let \[
P\,k:=
\lall{x\in L_k}\,
\csome{0\le i\le N}\,
\ball(x,k)\ll\ball(c_i,n_i).
\]
For each $k : \dN$, $P\, k$ is again semi-decidable, as it is a finite conjunction of finite disjunctions of semi-decidable statements.
Thus $\csome{k : \dN}P\ k$ is also semi-decidable.

Thus, it remains to show that $K \subseteq B \liff \csome{k : \dN} P\, k$.
The implication $\csome{k : \dN} P\,k \to K \subseteq B$ is immediate, as
if $P\,k$ holds, then every ball $\ball(x,k)$ with $x \in L_k$ is contained in
one of the balls $\ball(c_i,n_i)$ and hence $K \subseteq \bigcup_{x \in L_k}\ball(x,k) \subseteq B$. 

For the converse implication, assume that $K \subseteq B$ and towards a contradiction assume that $\lall{k : \dN} \neg P(k)$, then unfolding the definitions and using $L_k \subseteq K$, we get
\[
\lall{k : \dN} \csome{x \in K}  \lall{0 \leq i \leq N} \distx{x}{c_i} \geq 2^{-{n_i}} - 2^{-k}.
\]
Thus, by classical countable choice, there (again classically) exists a sequence $(x_k)_{k:\dN} \subseteq K$ such that $\lall{k : \dN} \lall{0 \leq i \leq N} \distx{x}{c_i} \geq 2^{-{n_i}} - 2^{-k}$.
Using classical sequential compactness, there further is a convergent subsequence
$(x_{k_j})_{j : \dN} \subseteq K$ with $\lim_j x_{k_j} \in K$.
However, for all $0 \leq i \leq N$ and all $j : \dN$, $\distx{x_{k_j}}{c_i} \geq 2^{-n_i}-2^{-k_j}$, and thus the limit is contained in none of the balls $\ball(c_i,n_i)$, i.e.\ $\lim_j x_{k_j} \notin B$, contradicting the assumption that $K \subseteq B$.

Finally, we show that $K$ is also overt.
We first show that it is possible to check that $\ball(x,n) \cap K \neq \emptyset$ for any ball $\ball(x,n)$.
Since we have already shown that $K$ is compact, $\ball(x,n) \cap K \neq \emptyset$ iff $d_K(x) < 2^{-n}$.
Thus, we can use the fact that $K$ is located by \autoref{lem:bc-located} and that inequality on real numbers is semi-decidable.

Now let $U :\subseteq X$ be an arbitrary open set. 
By \autoref{thm:polish-second-countable}, we can exhaust $U$ with base elements, i.e., $U = \bigcup_{i : \dN} U_i$ where $U_i$ are open balls.
For all $i : \dN$, let $s_i : \dS$ be defined by $\defineds{s_i} \liff U_i \cap K \neq \emptyset$.
Then $(U \cap K \neq \emptyset)$  iff $\exists (i:N). s_i \downarrow$.
\end{proof}

Combining the above lemmas we get:
\begin{thm}
A subset $K :\subseteq X$ of a Polish space is compact and overt iff it is totally bounded and complete.
\end{thm}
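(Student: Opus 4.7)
The theorem is essentially a packaging result: the two directions have already been established as separate lemmas immediately preceding the statement. The forward direction, that compact and overt subsets of a Polish space are complete and totally bounded, is exactly the content of \coqref{compact\_overt\_totally\_bounded}{...}{} combined with \coqref{compact\_complete}{...}{}. The reverse direction, that totally bounded and complete subsets are compact and overt, is the content of \coqref{bishop\_compact\_compact}{...}{} and \coqref{bishop\_compact\_overt}{...}{}. So the plan is simply to introduce the two implications and discharge each by citing the appropriate lemma.

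Concretely, the proof will first assume $K :\subseteq X$ is compact and overt. From the compactness assumption, we invoke the first lemma to obtain completeness and total boundedness at once. For the converse, assume $K$ is totally bounded and complete; then we invoke the second lemma to simultaneously produce a witness of compactness and of overtness for $K$. Combining the two gives the biconditional.

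There is no real obstacle here, since all the heavy lifting — the construction of finite coverings with centres in $K$ (via \autoref{lem:tot-bounded-strong}), the use of locatedness (\autoref{lem:bc-located}), the classical sequential compactness argument, and the reduction of arbitrary open covers to ball covers via second countability (\autoref{thm:polish-second-countable}) — has already been absorbed into the two component lemmas. The only mild subtlety worth flagging is that the theorem as stated is a propositional equivalence on classical subsets; the nondeterminism that appears in the constructive component lemmas (for instance in choosing the finite lists witnessing total boundedness) is not visible at the level of the statement, so the combination is purely logical and requires no further bookkeeping on $\mval$.
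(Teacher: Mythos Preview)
Your proposal is correct and matches the paper's approach exactly: the paper simply states ``Combining the above lemmas we get'' before the theorem, with no further proof, so the result is indeed just a packaging of the two preceding lemmas you cite. Your additional remark about the nondeterminism being absorbed at the propositional level is reasonable commentary but not needed for the argument.
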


The totally bounded information used to produce coverings yields a representation that is easy to manipulate computationally and efficient if our goal is to draw the whole set or do other operations that require global knowledge.
However, naturally the size of the list of coverings grows exponentially when increasing the precision and one might argue that in practice high precision approximations are more useful locally, i.e.\ instead of getting a picture of the whole set one might want to zoom into a small part of the set and draw only this sector with high precision.

Locatedness comes in handy here, as instead of outputting the whole set, the list can be replaced by a nondeterministic test function, that tests whether a ball is close enough to the set or not:
\[
\mathrm{M\_test}\ A \equivto \all{x : X} \all{n : \dN} \mval\ \big( (d(x,A) < 2 \cdot 2^{-n}) + (d(x,A) > 2^{-n})\big).  
\]
Recall that here $P+Q$ denotes the sum type in $\Type$, not a proposition:
an element of $P+Q$ gives constructive evidence of which of $P$ or $Q$ holds.
Thus, the operation is a nondeterministic near/far test that distinguishes points that are definitely far from the set from those that are sufficiently close, and is allowed to be inconclusive near the boundary.
Similar representations have been studied in complexity theory (e.g. \cite{braverman2005complexity,braverman2009computability}) as they give a reasonable notion of polynomial time complexity for compact subsets.
It is possible to replace the totally bounded information with the $\mathrm{M\_test}$ information to get a similar representation for compact-overt sets.
This might allow the definition of more efficient algorithms in some cases, but we have not pursued this approach in detail thus far.

\subsection{Operations on Compact-Overt Subsets}
Classically, the Hausdorff distance $d_H(S,T)$ of two nonempty sets is defined by
 \[
       d_{\mathrm H}(S,T) = \max\left\{\,\sup_{x \in S} d(x,T),\, \sup_{y \in T} d(S,y) \,\right\}.
\]
For any metric space the Hausdorff distance defines a metric on the space of nonempty compact subsets of the metric space.

To formally define the Hausdorff distance, we define the \inlinedef{fattening} $U_\varepsilon$ of a subset $U :\subseteq X$ by 
\[
U_\varepsilon \equivto \lam{x : X} \csome{y \in U} (\distx{x}{y}) \leq \varepsilon.
\]
We can then define a classical proposition on classical subsets $S, T :\subseteq X$ and a real number $r : \dR$ by
\[
\mathrm{dist_H\_of}\ S\ T\ r\equivto \mathrm{is\_inf}\  (\lam{\varepsilon : \dR} \varepsilon > 0 \land S \subseteq T_\varepsilon \land T \subseteq S_\varepsilon)\ r
\]
for $r$ being the Hausdorff distance between $S$ and $T$ 
where $\mathrm{is\_inf}$ is also defined classically.
Given the compact-overt information of a set, we can make this constructive.
In fact, it suffices that the sets are totally bounded.
\begin{lem}[\coqref{Hausdorff\_dist\_exists}{/formalization/Hyperspace/MetricSubsets.v}{4324}]
Let $K_1, K_2 :\subseteq X$ be totally bounded and nonempty.
Then, $ 
\some{r : \dR} (\mathrm{dist_H\_of}\ K_1\ K_2\ r)$.
\end{lem}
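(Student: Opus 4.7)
The plan is to reduce the problem to computing approximations of the two suprema appearing in the definition of $d_H$, using the coverings provided by totally boundedness, and then taking the limit of a fast Cauchy sequence of approximations.

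First, since $K_1$ and $K_2$ are nonempty and totally bounded, by \autoref{lem:bc-located} they are both located, so we have distance functions $d_{K_1}, d_{K_2} : X \to \dR$. Moreover, by \autoref{lem:tot-bounded-strong} we may choose, for each $n : \dN$, finite lists $L_1^n \subseteq K_1$ and $L_2^n \subseteq K_2$ that are centers of $2^{-n}$-coverings of $K_1$ and $K_2$ respectively. Define
\[ r_n \equivto \max\Bigl(\max_{c \in L_1^n} d_{K_2}(c),\ \max_{c \in L_2^n} d_{K_1}(c)\Bigr). \]
This is computable from the located distance functions and the lists, using a finite maximum on $\dR$.

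The key step is to show that $(r_n)_{n : \dN}$ (after shifting the index by a constant) is a fast Cauchy sequence. For this I would prove the inequality
\[ \bigl| r_n - \sup\nolimits_{x \in K_1} d_{K_2}(x) \bigr| \leq 2^{-n} \quad\text{and analogously with the roles swapped}, \]
where the classical suprema exist since $K_1$ and $K_2$ are bounded. One direction is immediate because $L_1^n \subseteq K_1$ so $\max_{c \in L_1^n} d_{K_2}(c) \leq \sup_{x \in K_1} d_{K_2}(x)$. The other direction uses the covering property: any $x \in K_1$ lies within $2^{-n}$ of some $c \in L_1^n$, and since $d_{K_2}$ is $1$-Lipschitz, $d_{K_2}(x) \leq d_{K_2}(c) + 2^{-n}$. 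Combining this with the analogous bound for the $K_2$ side and a standard $\max$ inequality, we obtain that $|r_n - r_m| \leq 2^{-n} + 2^{-m}$, i.e.\ the shifted sequence is fast Cauchy.

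We then invoke completeness of the reals to produce $r \equivto \lim r_n$. It remains to verify $\mathrm{dist_H\_of}\ K_1\ K_2\ r$. This requires unfolding $\mathrm{is\_inf}$ and showing both that every $\varepsilon > r$ satisfies $K_1 \subseteq (K_2)_\varepsilon$ and $K_2 \subseteq (K_1)_\varepsilon$, and that any $r' < r$ fails the same condition. The first part follows from the classical existence of points in $K_2$ achieving distance arbitrarily close to $d_{K_2}(x)$ for each $x \in K_1$, combined with $r \geq \sup_{x \in K_1} d_{K_2}(x)$; the second from the fact that if $r' < r$ then one of the two suprema exceeds $r'$, yielding a point that is too far from the other set. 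The main obstacle is the careful bookkeeping of the covering approximation bounds and the fact that we must work with classical suprema (justified since the sets are classically compact, as they are complete and totally bounded), so that the constructive limit we compute agrees with the classical infimum characterization required by $\mathrm{dist_H\_of}$.
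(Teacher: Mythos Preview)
Your approach is sound and reaches the same conclusion, but it differs from the paper's in one meaningful way. The paper approximates \emph{both} $K_1$ and $K_2$ by the finite point sets $K_1^{(n)}, K_2^{(n)}$ given by the centers of their $n$-th coverings, uses the triangle inequality for $d_H$ together with $d_H(K_i, K_i^{(n)}) \leq 2^{-n}$, and then computes $d_H(K_1^{(n)}, K_2^{(n)})$ \emph{exactly} by an induction on finite point sets via the identities $d_{\vec H}(A\cup\{x\},B)=\max(d_{\vec H}(A,B),d_B(x))$ and $d_{\vec H}(\{x\},B\cup\{y\})=\min(d_{\vec H}(\{x\},B),\distx{x}{y})$. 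You instead keep one side as the full set and invoke \autoref{lem:bc-located} to obtain the located distance $d_{K_2}(c)$ (resp.\ $d_{K_1}(c)$) at each covering center, taking a finite maximum. Your route is conceptually cleaner because it reuses locatedness rather than rebuilding a finite-set Hausdorff computation; the paper's route is more self-contained and avoids the detour through locatedness.

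One small gap: you invoke \autoref{lem:tot-bounded-strong} to get $L_1^n \subseteq K_1$, but that lemma requires $K_1$ to be totally bounded \emph{and complete}, whereas the present statement only assumes totally bounded and nonempty. The fix is easy: drop the requirement $L_1^n \subseteq K_1$ and use the intersection property of the covering instead. Each center $c \in L_1^n$ satisfies $\ball(c,n)\cap K_1 \neq \emptyset$, so there is $a \in K_1$ with $\distx{c}{a} < 2^{-n}$, and the $1$-Lipschitz property of $d_{K_2}$ gives $|d_{K_2}(c)-d_{K_2}(a)| \leq 2^{-n}$. Both inequalities in your key step then hold with an extra $2^{-n}$ slack, which is absorbed by shifting the index.
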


\begin{proof}
By applying the limit operator, it suffices to show that for each $n : \dN$, we can find a $2^{-n}$ approximation of $r$.
For any totally bounded set $K$, the centers of the balls in the $n$-th covering of $K$ have distance less than $2^{-n}$ to some point in the set.
Thus, for the set of finite points $K^{(n)}$ defined by the centers of the $n$-th covering it holds $d_H\;K\;K^{(n)} \leq 2^{-n}$.
By the triangle inequality it suffices that we can compute the Hausdorff distance for (nonempty) finite point sets.

We show this for the one-sided Hausdorff distance $d_{\vec{H}}(S,T)\ \equivto \mathrm{inf}_{\varepsilon \geq 0} \ S \subseteq T_\varepsilon$ from which the claim follows by taking the maximum of $d_{\vec{H}}(S,T)$ and $d_{\vec{H}}(T,S)$.
The proof is by induction using the easy to verify facts that
$d_{\vec{H}}(A  \cup \{x\}, B) = \max(d_{\vec{H}}(A,B), (d_B\;x))$ and
$d_{\vec{H}}(\{x\}, B \cup \{y\}) = \min (d_{\vec{H}}(\{x\},B), (\distx{x}{y}))$ (\coqref{Hausdorff\_dist\_os\_add\_pt}{/formalization/Hyperspace/MetricSubsets.v}{3736} and \coqref{Hausdorff\_dist\_os\_pt\_extend}{/formalization/Hyperspace/MetricSubsets.v}{3821}).
\end{proof}

The next lemma shows that we can construct the limits of compact-overt sets.

\begin{lem}[\coqref{hausdorff\_fast\_cauchy\_limit\_exists}{/formalization/Hyperspace/MetricSubsets.v}{4633}, \coqref{totally\_bounded\_lim}{/formalization/Hyperspace/MetricSubsets.v}{4420}]
For any compact-overt sets $K, K'$ and  $x: \dR$, define $(d_{\mathrm H}\ K\ K' \leq x) :\Prop$ classically using $\mathrm{dist_H\_of}$.
Let $(K_i)_{i : \dN} :\subseteq X$ be a sequence of nonempty compact-overt sets such that for all $n,m : \dN$, $d_{\mathrm H}\;K_{n}\;K_{m} \leq 2^{-n}+2^{-m}$ holds.
Then there is a unique compact-overt set $K :\subseteq X$ such that
$\lall{n : \dN} d_{\mathrm H}\;K_{n}\;K \leq 2^{-n}$. 
\end{lem}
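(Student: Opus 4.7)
The plan is to use the equivalence between compact-overtness and the conjunction of totally boundedness and completeness established earlier in this section, and to construct $K$ as a totally bounded, complete subset. Classically, I would take $K$ to be the set of all $x : X$ for which there exists a fast Cauchy sequence $(x_n)_n$ with $x_n \in K_n$ for every $n$ and $\lim x_n = x$. Since each $K_n$ is compact and overt, emptiness is decidable, so I first handle the edge case where all $K_n$ are empty (and take $K = \emptyset$); otherwise all $K_n$ are nonempty (the Hausdorff hypothesis with a finite bound forces this) and $K$ itself is nonempty, because starting from any $x_0 \in K_0$ one iteratively extends by using the Hausdorff-Cauchy hypothesis together with the overt/located information on $K_{n+1}$ to pick $x_{n+1} \in K_{n+1}$ sufficiently near $x_n$, producing a fast Cauchy sequence (possibly after a shift of indices).

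For totally boundedness, I would derive the $n$-th covering of $K$ from the $(n{+}c)$-th covering of $K_{n+c}$ for a small constant $c$: since $d_{\mathrm H}(K_{n+c}, K) \leq 2^{-(n+c)}$, every center of the inner covering lies within $2^{-n}$ of a point of $K$ (verified by extending that center to a fast Cauchy sequence through the remaining $K_m$ using overtness, analogous to the nonemptiness argument), and conversely every $x \in K$ is close to its witness $x_{n+c}$, which sits in some ball of the inner covering, so $x$ lies in a slightly enlarged ball around the same center. Completeness of $K$ follows by diagonalization: given a fast Cauchy sequence $(a_k)_k$ in $K$ with each $a_k$ witnessed by $(a_k^{(n)})_n$, the diagonal $z_n \equivto a_n^{(n)}$ is a fast Cauchy sequence with $z_n \in K_n$ and $\lim z_n = \lim a_k$, placing the limit in $K$. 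Once $K$ is exhibited as totally bounded and complete, the Hausdorff bound $d_{\mathrm H}(K_n, K) \leq 2^{-n}$ is immediate in both directions, and the compact-overt structure is supplied by the equivalence theorem. Uniqueness follows because two candidate compact-overt sets both within Hausdorff distance $2^{-n}$ of every $K_n$ have Hausdorff distance $0$ from each other, hence coincide as closed (equivalently, complete) subsets.

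The main obstacle will be the constructive construction of the witnessing fast Cauchy sequences used throughout: extending a given center to a sequence $(y_m)_m$ with $y_m \in K_m$ and controlled step sizes requires iterated nondeterministic choice from the overt information on each $K_m$, which I would package via the nondeterministic dependent choice axiom introduced in \autoref{sec:preliminaries}. Additional care is needed with the ball convention (strict versus non-strict inequality) and with the precise offset $c$, so that the inclusions of balls with progressively larger radii pass through cleanly after the index shift and so that the finite covers produced by totally boundedness of the $K_n$ can be manipulated at the list level rather than only classically.
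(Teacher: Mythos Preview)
Your proposal is correct, but the paper takes a more direct route. Rather than defining $K$ as the set of limits of threading sequences $(x_n)_n$ with $x_n \in K_n$, the paper simply \emph{declares} the $n$-th covering of $K$ to be the diagonal covering $K_{n+1,n+1}$ (the $(n{+}1)$-st covering of the $(n{+}1)$-st set), and then defines $K$ itself as the intersection $\bigcap_n K_{n+1,n+1}$, invoking \autoref{lem:K-covering-intersection} which says a totally bounded complete set coincides with the intersection of its coverings. This sidesteps the explicit construction of witnessing sequences via nondeterministic dependent choice, the diagonalization argument for completeness, and the separate nonemptiness analysis: total boundedness is immediate from the definition of the coverings, and completeness comes essentially for free from the intersection description.

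What your approach buys is a more transparent description of which points lie in $K$ and why each verification step holds; it would also generalize more readily to settings where one does not already have \autoref{lem:K-covering-intersection} available. The paper's approach is terser and closer to how the Coq development is organized (working directly with the covering data rather than with points), but it leaves more of the routine checks implicit. Your covering construction for total boundedness is in fact essentially the same diagonal as the paper's, so the two arguments converge at that point; the real divergence is in how $K$ is \emph{defined} and how completeness is obtained.
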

\begin{proof}
Each $K_i$ is totally bounded and we denote  its $j$-th covering by $K_{i,j}$.
We define a totally bounded set $K$ by setting the $n$-th covering of $K$ to be $K_{n+1, n+1}$ and define the limit as the unique complete set defined by $\bigcap_{n : \dN} K_n$.
\end{proof}

\begin{rem}
    Putting the results together, we can consider the space of (classically) compact subsets of a Polish space $X$ with the compact-overt representation as a complete metric space with respect to the Hausdorff distance.
    Further, as we can approximate any totally bounded $K :\subseteq X$ arbitrarily well by finite sets, the set of nonempty finite sets forms a countable dense subset.
    Thus, we can consider the space of classically compact subsets itself as a Polish space.
\end{rem}

While the previous sections already show that we can compute many operations on compact-overt subsets, the algorithms we get from the proofs are extremely inefficient and thus not very useful.
On the other hand, using the totally bounded information, many operations can be directly applied on the coverings, yielding more efficient algorithms.
We therefore specialize the theorems for many of the standard operations.
The extracted programs can be used as a small calculus to combine subsets using set operations.

For any set $A :\subseteq X$ and function $f: X \to Y$ let us define the \inlinedef{image} $f[A] :\subseteq Y$ by  \[y \in f[A] :\liff \csome{x \in A} f(x) = y.\]
We further say a function $f: X \to Y$ is \inlinedef{uniformly continuous} on some set $A$ if \[\all{n : \dN} \some{m : \dN} \lall{x\ y \in A} \distx{x}{y} < 2^{-m} \rightarrow \mathrm{d}_Y \; (f\;x)\; (f\;y) < 2^{-n}.\]
Using the continuity principle it is possible to show that every function is uniformly continuous on compact-overt sets.
However, in many cases it is more practical to provide the uniform modulus of continuity manually and thus avoid the inefficient computation.
We therefore only show the following lemma.
\begin{lem}[\coqref{image\_totally\_bounded}{/formalization/Hyperspace/MetricSubsets.v}{4683}]
Let $f: X \to Y$ be uniformly continuous, then the image of compact-overt sets is again compact-overt.
\end{lem}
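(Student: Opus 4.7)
The plan is to use the equivalence, established earlier in this section, between compact-overtness and being totally bounded plus complete. Given compact-overt $K :\subseteq X$, I will construct the totally bounded data of $f[K]$ explicitly from that of $K$ by pushing forward centers through $f$, and recover completeness afterwards.

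For the totally bounded part, fix $n : \dN$. Uniform continuity of $f$ on $K$ yields a modulus $m : \dN$ with $\distx{x}{y} < 2^{-m}$ implying $\mathrm{d}_Y\ (f\;x)\ (f\;y) < 2^{-n}$ for all $x, y \in K$. By \autoref{lem:tot-bounded-strong} I can nondeterministically pick a list $L = [c_1, \ldots, c_k]$ of centers, all lying in $K$, whose $2^{-m}$-balls cover $K$. Set the proposed $n$-th covering of $f[K]$ to be $[f(c_1), \ldots, f(c_k)]$. Each $f(c_i) \in f[K]$, so $\ball(f(c_i),n) \cap f[K] \neq \emptyset$. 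Conversely, any $y \in f[K]$ may be written $y = f(x)$ with $x \in K$; the covering property gives some $c_i$ with $\distx{x}{c_i} < 2^{-m}$, and the modulus then forces $\mathrm{d}_Y\ y\ (f\;c_i) < 2^{-n}$, so $y \in \ball(f(c_i),n)$.

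For completeness of $f[K]$, note that compactness and overtness transfer through any continuous function: for open $V :\subseteq Y$, $f^{-1}[V]$ is open in $X$, and $f[K] \subseteq V$ is equivalent to $K \subseteq f^{-1}[V]$, while $f[K] \cap V \neq \emptyset$ is equivalent to $K \cap f^{-1}[V] \neq \emptyset$; the compact and overt test functions of $K$ thus induce those for $f[K]$. Since inequality on the Polish space $Y$ is semi-decidable, \autoref{lem:compact-closed} yields closedness of $f[K]$, and \autoref{lem:closed_complete} then yields completeness.

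The main obstacle I anticipate is that ordinary point-wise continuity of $f$ is insufficient for the totally bounded construction: that construction genuinely needs a single modulus $m$ valid over all of $K$, so that a finite list of centers of $K$ is mapped to a genuine $n$-covering of $f[K]$. Taking uniform continuity as a hypothesis avoids invoking an internal uniform-continuity-on-compacts derivation and keeps the extracted algorithm efficient: it merely maps the totally bounded centers of $K$ through $f$, incurring no search or enumeration overhead.
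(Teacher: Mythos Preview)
Your totally bounded construction is exactly the paper's approach: pick the modulus $m$ from uniform continuity and push the $m$-th covering of $K$ forward through $f$. Your use of \autoref{lem:tot-bounded-strong} to force the centers into $K$ is a correct refinement that the paper's one-line sketch leaves implicit; it is needed because uniform continuity is assumed only for points of $K$.

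The paper's proof text says nothing about completeness (and the Coq lemma is indeed named \textsf{image\_totally\_bounded}), so your second paragraph is an addition rather than a reproduction. Your preimage argument is valid in this setting---composing the Sierpinski characteristic function of $V$ with $f$ yields one for $f^{-1}[V]$, so compactness and overtness of $f[K]$ follow directly from those of $K$. But note that this already establishes the stated conclusion outright, so from a purely logical standpoint the totally bounded construction becomes redundant; its value, as you rightly observe, is computational: the extracted program should map covering centers through $f$ rather than rely on the generic compact-overt tests, and that is precisely the point the paper makes in motivating these specialized lemmas.
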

\begin{proof}
Let $A$ be a compact-overt set. Using the uniform continuity we can find an $m : \dN$ such that the image of the $m$-th covering of $A$ is an $n$-th covering of $f[A]$.
\end{proof}

The properties stated below can mostly be shown by directly applying the operations on the coverings.
In the formal development, we currently only prove them for Euclidean space, but the proofs can be easily generalized to any space where the operations are defined.
\begin{lem}[\coqref{tbounded\_union}{/formalization/Hyperspace/EuclideanSubsets.v}{1004},\coqref{tbounded\_scale\_up}{/formalization/Hyperspace/EuclideanSubsets.v}{1129},\coqref{tbounded\_scale\_down}{/formalization/Hyperspace/EuclideanSubsets.v}{1081}, \coqref{tbounded\_translation}{/formalization/Hyperspace/EuclideanSubsets.v}{1025}]
Let $X$ be a Polish space with addition $+ : X \to X \to X$ and scalar multiplication $\cdot : \dR \to X \to X$ defined and compatible with the metric.
Then the following hold.
\begin{enumerate}
    \item For compact-overt sets $A, B :\subseteq X$ their union $A \cup B$ is compact-overt.
    \item For a compact-overt set $A :\subseteq X$ and any $c > 0$, the scaled set $c A \equivto \lam{x:X}
    \csome{y \in A}x = c \cdot y$ is compact-overt.
    \item For a compact-overt set $A :\subseteq X$ and any $c : X$, the set $A + c \equivto 
    \lam{x :X} \csome{y \in A} x = y + c$
    is compact-overt.
\end{enumerate}
\end{lem}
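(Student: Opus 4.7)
The plan is to reduce each of the three claims to the equivalent characterization via totally bounded and complete sets, established earlier in the paper, and then construct the required coverings of the resulting set directly from coverings of the given ones. Completeness of the resulting set will in each case follow from the fact that the constructed coverings are themselves totally bounded and thus, together with the classical fact that the corresponding point set is closed in a complete space, give a compact-overt set via the equivalence with Bishop-compactness. I would work entirely at the level of totally bounded coverings, since the overt and compact information is already packaged by the equivalence.

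For part (1), given totally bounded coverings of $A$ and $B$ at precision $n$ represented by lists $L_A, L_B : [X]$, I take $L_{A \cup B} \equivto L_A \mathbin{+\!+} L_B$ (list concatenation) as the $n$-th covering of $A \cup B$. Each ball $\ball(c,n)$ with $c \in L_A$ already intersects $A$, hence $A \cup B$, and similarly for $L_B$; and any $x \in A \cup B$ lies in a ball of one of the two original coverings. This yields $\istotallybounded\ (A \cup B)$, and since $A$ and $B$ are each compact-overt and hence closed by \autoref{lem:compact-closed}, $A \cup B$ is closed and therefore complete in the Polish space $X$ by \autoref{lem:closed_complete}.

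For part (3), the translation case is straightforward: from an $n$-th covering $L$ of $A$ I build the $n$-th covering of $A + c$ as the pointwise translate $L + c \equivto \lam{p : X} (p + c) \circ L$. Since translation is an isometry, $\ball(x + c, n)$ intersects $A + c$ iff $\ball(x,n)$ intersects $A$, and covers $y + c$ iff $\ball(x,n)$ covers $y$. Completeness of $A + c$ again follows either from closedness under a continuous inverse (translation by $-c$) or directly by transporting fast Cauchy sequences.

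Part (2) is the main obstacle, because a positive real $c$ is not known in advance to be bounded above by any particular $2^k$. I would first nondeterministically pick a natural number $k$ with $c \leq 2^k$, using the soft comparison on reals to test $c <_1 2^k$ for successive values of $k$; this yields $\mval \some{k : \dN} c \leq 2^k$. Given such a $k$, I take the $(n+k)$-th covering of $A$ with centers $L$ and use $c \cdot L \equivto \lam{p : X} (c \cdot p) \circ L$ as the $n$-th covering of $cA$. Scaling by $c$ maps $\ball(x, n+k)$ into $\ball(c \cdot x, n+k - k) = \ball(c \cdot x, n)$ up to the bound $c \leq 2^k$, and preserves the nonempty intersection property since $y \in A$ iff $c \cdot y \in cA$ (as $c > 0$ ensures $\cdot c$ is a bijection). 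The resulting coverings give totally boundedness of $cA$. Completeness follows from the closedness of $cA$, which is the preimage of the closed set $A$ under the continuous map $x \mapsto c^{-1} \cdot x$, again combined with \autoref{lem:closed_complete}. The nondeterminism introduced by choosing $k$ is harmless since it only affects which precision in the covering of $A$ we start from, not the resulting set.
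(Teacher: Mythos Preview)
Your proposal is correct and follows the same approach the paper indicates: operating directly on the lists of covering centers (concatenation for union, pointwise translation, and pointwise scaling after choosing a bound $2^k \geq c$), together with closedness in a complete space for the completeness half. The paper's own proof is essentially a one-line remark that ``the properties \ldots mostly can be shown by directly applying the operations on the coverings'' (with the formal development restricted to Euclidean space), so you have simply filled in the natural details.
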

As mentioned above, instead of working on coverings, another efficient way to compute with sets is to work on the $\mathrm{M\_test}$ representation.
The above operations can equivalently be defined using the information given by the $\mathrm{M\_test}$ operator, but we omit the details.

\section{Realizing Axioms in Code Extraction}
\label{s:extraction}
To extract  programs from proofs in the Coq formalization, we extended Coq's program extraction mechanism to handle the newly added axioms, including the continuity principle, providing a certified module in \softwarename{AERN} for defining and manipulating higher-order objects. 
Our choice of Haskell as the extraction language is mostly pragmatic, as the \softwarename{AERN} library, which provides the exact-real arithmetic used by the extracted programs, had already been developed in Haskell. At the same time, Haskell’s built-in support for lazy evaluation makes it particularly well suited for implementing exact-real computation and the constructions used in our development. Nevertheless, there should be no fundamental obstacle to defining a similar extraction mechanism for another language such as OCaml, provided that one also builds suitable support for basic exact-real computation.

The types $\dK$ and $\dR$ are extracted to  \softwarename{AERN} types $\ckleenean$ and $\creal$, which encapsulate convergent sequences of lazy Booleans and dyadic intervals, respectively.
Thus, each value in $\dK$ or $\dR$ corresponds to multiple values (realizers) in $\ckleenean$ or $\creal$, respectively.
The nondeterminism monad $\mval$ becomes the identity monad in Haskell, as the nondeterminism permitted in $\mval$ 
disappears at the level of $\ckleenean$ and $\creal$ values (realizers).
As discussed before \autoref{ax:baire-choice}, the existence of multiple realizers for values of types such as $\dK$ or $\dR$ is the only source of nondeterminism permitted in $\mval$.

\begin{itemize}
    \item \autoref{ax:disj} (Countable disjunction) extracts to 
    the \softwarename{AERN} library's built-in function \texttt{or\_countable}, which dovetails the evaluation of the infinite sequence of Kleeneans, each entry of which is itself an infinite sequence of lazy Booleans.

\item \autoref{ax:baire-choice} (Baire choice) extracts to
the Haskell identity function since, after removing the nondeterminism monad $\mval$, the axiom has essentially no computational content.

\item \autoref{ax:cchoice} (Nondeterministic countable selection) extracts to
the AERN function \\\texttt{selectCountable}
which searches for \texttt{True} within an infinite sequence of infinite sequences and, if it finds one, returns the number of the sequence in which it found it.
It traverses the sequences in such an order that each of the elements will eventually be checked.

\item \autoref{ax:k-to-seq} extracts to  the expression 
\begin{lstlisting}[language=Haskell]
\k n -> if isCertainlyTrue (k?(bits n)) then 1 else 0
\end{lstlisting}
translating the given convergent Kleenean sequence \texttt{k} into a function sequence using the \texttt{?} operator that evaluates the sequence with a certain effort, which determines the precision of any numerical computation involved.

\item \autoref{ax:continuity} extracts to
the AERN function \texttt{maxIntParamUsed} which takes the two parameters $f$ and $x$ and applies $f(x)$ while spying on how the function $f$ uses the sequence $x$, keeping track of the largest accessed index.
If $f(x)$ terminates and returns True, the function returns the value of the largest parameter used.
The spying is achieved by instrumenting the sequence $x$ with imperative code that updates a mutable variable.
The imperative code is hidden inside the pure code using \texttt{unsafePerformIO}.

\end{itemize}

\section{Examples in Euclidean Space}\label{sec:examples}
Let us consider a few larger examples for the concrete space $\dR^n$ and the programs that we can extract from the proofs.
We use the subset of dyadic rational numbers as the countable dense subset.  
The dyadic numbers are defined as pairs $\dD \equivto \dZ \times \dN$ and we identify the pair $(z,n)$ with the real $z \cdot 2^{-n}$.
Similarly, we write $\dD^m$ for the type of length-$m$ lists over $\dD$ when $m : \dN$, and
suppose that the trivial coercion from $\dD^m$ to $\dR^m$ is taken implicitly.

We use dyadic numbers to approximate real numbers.

\begin{lem}
For any $m : \dN$, the space $\dR^m$ with the distance induced by the maximum norm $\| \cdot \|$, the limit operator from \autoref{sec:preliminaries} and the dyadic rational numbers as countable dense subset is a Polish space.
\end{lem}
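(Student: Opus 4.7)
The plan is to verify the three defining properties in turn: that $\dR^m$ is a metric space under the max norm, that it is complete with respect to the limit operator introduced in \autoref{sec:preliminaries}, and that $\dD^m$ provides a countable dense subset.

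For the metric structure, I would unfold $\distxop\,x\,y \equivto \|x-y\|$ with $\|z\| \equivto \max_{0 \leq i < m} \abs{z_i}$ and check the three clauses of \autoref{def:metric}. The identity of indiscernibles reduces to the fact that $\max_i \abs{x_i - y_i} = 0$ iff $x_i = y_i$ for all $i$, which in turn reduces to the corresponding fact on $\dR$ together with functional (list) extensionality. Symmetry is immediate from $\abs{x_i - y_i} = \abs{y_i - x_i}$, and the triangle inequality follows by applying the triangle inequality on $\dR$ coordinate-wise and then observing that $\max$ preserves such inequalities.

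For completeness, given a fast Cauchy sequence $f : \dN \to \dR^m$, each coordinate projection $f_i : \dN \to \dR$ inherits the fast Cauchy property because $\abs{f_i(n) - f_i(k)} \leq \|f(n) - f(k)\| \leq 2^{-n} + 2^{-k}$. Completeness of $\dR$ from \autoref{sec:preliminaries} yields a limit $x_i : \dR$ for each $i < m$. Assembling these coordinates into an $m$-element list $x : \dR^m$ and using $\|x - f(n)\| = \max_i \abs{x_i - f_i(n)} \leq 2^{-n}$ shows that $x$ is the fast limit of $f$.

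For separability, I would fix any effective bijection $\dN \cong \dN \times \dN^{\leq m}$ (or iterate the pairing function $m$ times) to obtain an enumeration $e : \dN \to \dD^m$ hitting every dyadic vector, and show it is dense using \autoref{def:metric}'s openness: given a nonempty open $U \subseteq \dR^m$ and a classical witness $x \in U$, continuity of the characteristic function of $U$ (through \autoref{ax:continuity} applied to the Cauchy representation of $x$) provides some $n : \dN$ such that $\ball(x,n) \subseteq U$; then density of $\dD$ in $\dR$ in each coordinate lets us pick $d \in \dD^m$ with $\|x - d\| < 2^{-n}$, whence $d \in U$. The main obstacle is this last step: one must either invoke the per-coordinate density of dyadic rationals from the underlying axiomatization of $\dR$, or pass through \autoref{cor:approx}-style reasoning more directly; in the Coq formalization the cleanest route is to approximate each real coordinate by a dyadic to accuracy $2^{-(n+1)}$ using the built-in dyadic approximation for $\dR$ and combine them into a single dyadic vector, noting that the max norm of the componentwise errors is bounded by the worst coordinate error.
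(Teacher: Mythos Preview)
Your proof is correct and follows the same route as the paper, which is extremely terse: it simply cites the already-available dyadic approximation result $\all{x : \dR^m}\all{p : \dN}\mval\some{d : \dD^m}\|x-d\| \leq 2^{-p}$ to establish density and declares the remaining metric and completeness properties easy to show. You spell out considerably more detail---in particular, you make explicit the continuity step (via \autoref{ax:continuity} on a Cauchy-sequence encoding) needed to pass from a nonempty Sierpinski-open $U$ to a metric ball inside it, a step the paper's one-line ``from which it follows that $\dD$ is dense'' elides entirely.
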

\begin{proof}
 In \cite{konevcny2022extracting} it is already shown that for any real number we can nondeterministically find a dyadic rational approximation up to any prescribed error, i.e., the following statement holds in the Euclidean space:
\[
\all{x : \dR^m} \all{p : \dN} \mval \some{d : \dD^m} \dist{x}{d} \leq 2^{-p}.
\]
It follows that $\dD^m$ is dense in $\dR^m$.
The other properties of a Polish space are easy to show.
\end{proof}
From a proof that a set is compact-overt in our Coq implementation, we can extract an \softwarename{AERN} program that computes the coverings.
We implemented the examples below in Coq and extracted programs to generate for each $n$ a finite list of real numbers, encoding the center and radius of each ball in the $n$-th covering. 
We can then use \softwarename{AERN} to give us arbitrarily exact rational approximations of these real numbers, which we in turn can output and visualize by drawing the approximate boxes.
Let us start with a simple example.
We define a triangle $T \subseteq \IR^2$ by
\[
  (x,y) \in T \liff x \geq 0 \land y \geq 0 \land x + y \leq 1.
\]
To show that the set is totally bounded, let
$b_{i,j,n} \equivto \ball((\frac{2i+1}{2^{n+1}}, \frac{2j+1}{2^{n+1}}), n)$.
We define the $n$-th covering as the list $L$ containing all $b_{i,j,n}$  with $i+j < 2^n$.
\autoref{fig:triangle} shows the coverings defined by this procedure.
All of the balls have radius $2^{-n}$ by definition and each ball intersects the triangle as $i+j < 2^n$ implies
$\frac{2i+1 + 2j + 1}{2^{n+1}} \leq 1$.
Each point of the triangle is contained in one of the $b_{i,j,n}$ as for any $x \in [0,1]^2$ and $n \in \IN$ we can find some $k \in \IN$ such that $ 2^{-n} k \leq x  < 2^{-n}(k+1)$.
We can thus find such coordinates $(i,j)$ for a point $(x,y) \in T$ and as $x+y \leq 1$ it follows that $i+j < 2^{n}$ as claimed.
\begin{figure}
\centering
\begin{subfigure}{.23\textwidth}
\includegraphics[width=0.9\textwidth]{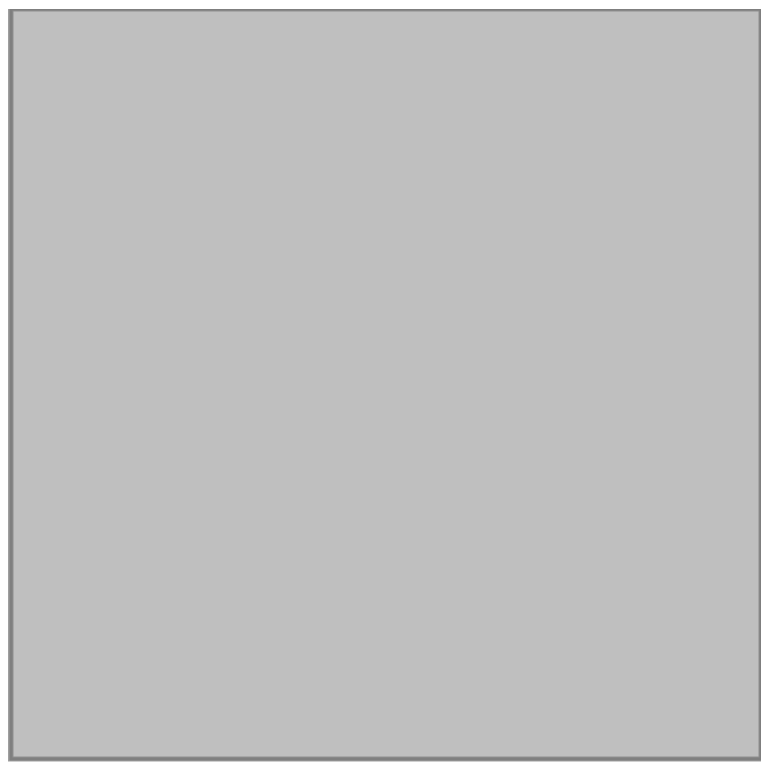}
\end{subfigure}
\begin{subfigure}{.24\textwidth}
\includegraphics[width=0.9\textwidth]{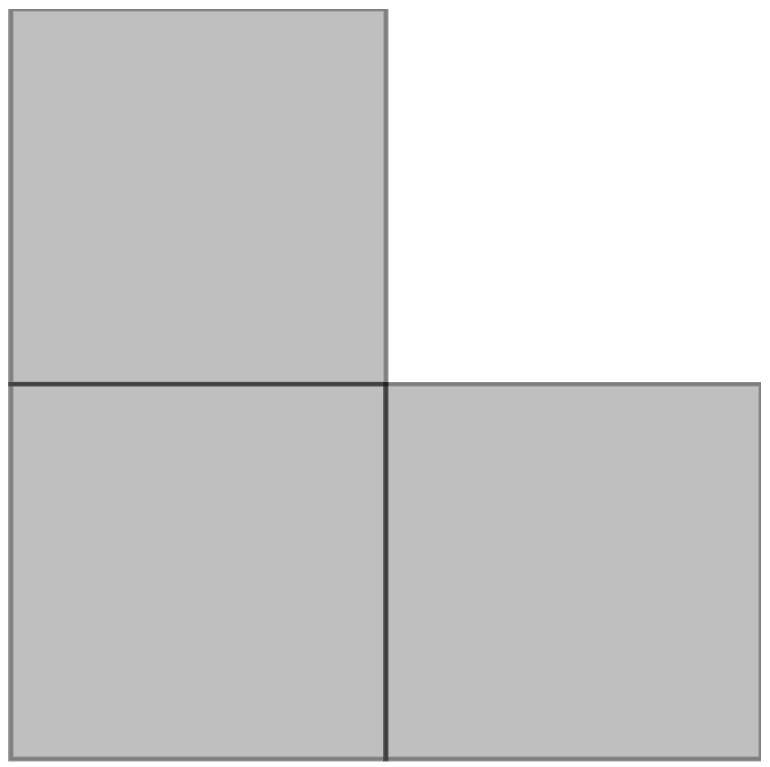}
\end{subfigure}
\begin{subfigure}{.24\textwidth}
\includegraphics[width=0.9\textwidth]{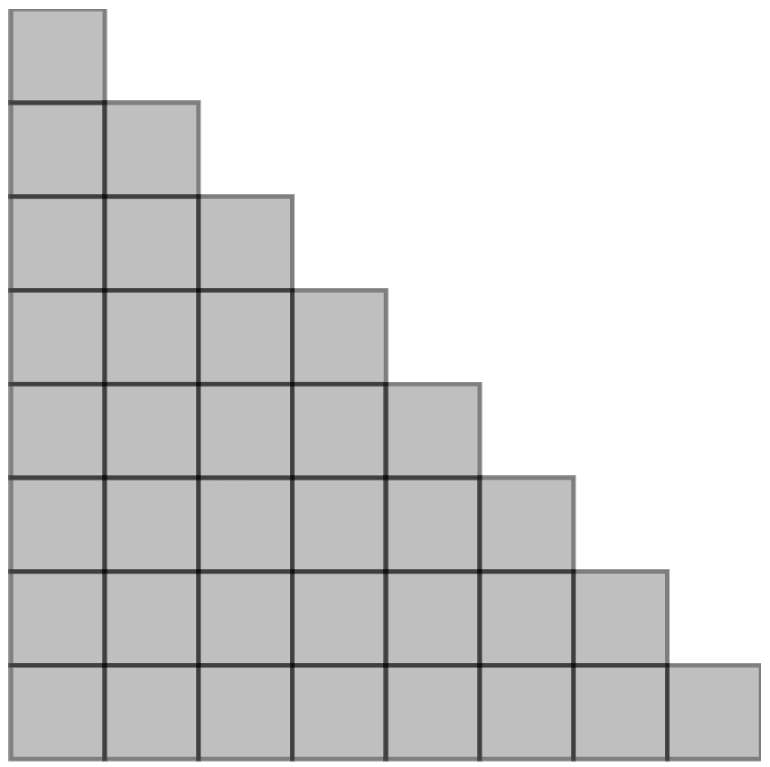}
\end{subfigure}
\begin{subfigure}{.24\textwidth}
\includegraphics[width=0.9\textwidth]{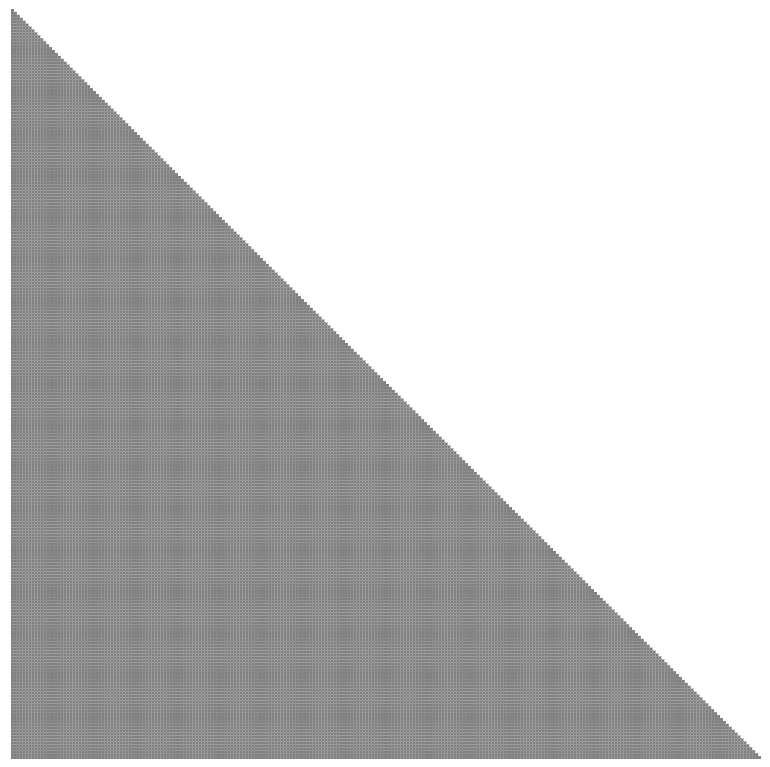}
\end{subfigure}
\caption{Approximations of the triangle}
\label{fig:triangle}
\end{figure}

\subsection{Drawing Fractals}
As a more interesting application, let us look at a procedure to generate certified drawings of fractals.
We consider simple self-similar fractals generated by iterated
function systems (IFS) without rotation. 
For simplicity we only consider fractals that are contained in the unit cube $[-1,1]^m$.
For a finite set of points $D \equivto (d_1, d_2, \dots, d_k) \subseteq [-1,1]^m$ we define a classical subset $\mathcal{I}(D) \subseteq \dR^m$ as the unique compact subset such that

\begin{enumerate}
\item $D \subseteq \mathcal{I}(D)$, and
\item  $\mathcal{I}(D) = \bigcup_{d \in D} \{ \frac{x+d}{2} \,|\, x \in  \mathcal{I}(D)\}$
\end{enumerate}
We show that any set defined in this way is totally bounded.
To do so, we need to define a sequence of coverings of the set.
We start with the first covering $L_0$ being the list containing only the unit cube.
As each $d \in D$ is contained in the unit disc, the intersection and covering properties hold.
We then recursively define the covering $L_{n+1}$ from the previous covering $L_n$ by following the construction rule, i.e.,  by making for each $d \in D$ a copy of the previous covering where the center of each ball is moved halfway towards $d$ and has half the radius.

Obviously, each of the balls in $L_{n+1}$ has radius $2^{-(n+1)}$. 
To show that this procedure preserves the intersection property,
let $b \in L_{n+1}$. 
Then there is some $b' \in L_n$ and $d \in D$ such that $x \in b$ iff there is an $x' \in b'$ with $x = \frac{x'+d}{2}$.
As $L_n$ has the intersection property, there is some $x' \in b'$ with $x' \in \mathcal{I}(D)$.
By definition of $\mathcal{I}(D)$, then $\frac{x'+d}{2} \in \mathcal{I}(D)$, i.e., $b \cap \mathcal{I}(D) \neq \emptyset$.
Similarly, to show that the covering property is preserved, 
assume $x \in \mathcal{I}(D)$.
Then by definition either $x = d$ for some $d \in D$ or there is some $x' \in \mathcal{I}(D)$ and some $d \in D$ such that $x = \frac{x'+d}{2}$. 
Any $d \in D$ is contained in $L_{n+1}$ as $d \in L_{n}$ and $d = \frac{d+d}{2}$.
In the other case, there is some $b \in L_n$ such that $x' \in b$ as $L_n$ is a covering.
But then by definition of the procedure, there is some $b' \in L_{n+1}$ that contains $x$.

A more elegant way to prove that $\mathcal{I}(D)$ is totally bounded is to use the limit operation.
We can define a sequence of compact-overt sets $(T_i)_{i \in \IN}$ by letting $T_0$ be the unit disc and then applying the iteration given by the IFS.
It is not hard to show that each $T_i$ has Hausdorff distance at most $2^{-i}$ to the fractal and thus applying the limit operation suffices. 
However the extracted program is slightly less efficient than the direct encoding as the limit uses $T_{n+1}$ to get the $n$-th approximations and increases the radius leading to coverings where balls overlap.



\subsection{Examples and Evaluation}
In the Coq formalization, we have proved total boundedness of the
following sets:
\begin{enumerate}
\item the right triangle $T$ with vertices $(0,0)$, $(1,0)$, $(0,1)$
  (\coqref{T\_tbounded}{/formalization/Hyperspace/SimpleTriangle.v}{413});
\item the Sierpinski triangle with $\mathcal{D} = \{(0,0), (0,1), (1,0)\}$
  (\coqref{STR\_tbounded}{/formalization/Hyperspace/SierpinskiTriangle.v}{766});
\item the equilateral Sierpinski triangle with
  $\mathcal{D} = \{(-1,-1), (1,-1), (0, \sqrt{3}-1)\}$
  (\coqref{STE\_tbounded}{/formalization/Hyperspace/SierpinskiTriangle.v}{1039});
\item the equilateral Sierpinski triangle with an additional central map,

  $\mathcal{D} = \{(-1,-1), (1,-1), (0, \sqrt{3}-1), (0, \sqrt{3}/3-1)\}$
  (\coqref{STE4\_tbounded}{/formalization/Hyperspace/SierpinskiTriangle.v}{1101}).
\end{enumerate}
All four proofs follow the construction outlined above. For the Sierpinski
triangle of Item~(2) we additionally provide a second proof, in which total
boundedness is established indirectly through the limit operation
(\coqref{STRLim\_tbounded}{/formalization/Hyperspace/SierpinskiTriangleLimit.v}{602}). 

The file \texttt{Extract.v} in the repository contains the extraction
directives for the listed lemmas; the resulting Haskell files, together with
the post-processing described in the repository's \texttt{README}, can be found
in the \texttt{src/} directory. 

We evaluated the extracted programs on a MacBook Pro 16-inch (2021; Apple M1
Pro, 16\,GB). Each extracted function takes two parameters $n$ and $p$ and
returns a list of two-dimensional balls, each with a dyadic radius and a center
given as a $2^{-p}$-approximation of a real number, whose union approximates
the set within Hausdorff distance $2^{-n}$. For each function we increased $n$
in steps of one, recording the number of balls and the runtime in seconds.
Each runtime measurement was repeated three times, and the reported value is the average of the three runs. 
We performed these measurements at two center precisions, $p = 53$ (double precision) and $p = 106$ (double-double precision), in order to measure the secondary effect of $p$. 
We increased $n$ until a single run exceeded $300$ seconds.

The results are plotted in \autoref{fig:result}; \autoref{fig:sierpinski-triangle} shows the shapes drawn by the extracted programs. 
Both the evaluator and the raw data are available in the repository, so the results can be reproduced. 
As $n$ increases the number of balls in the covering grows exponentially, and so does the runtime. 
Furthermore, as \autoref{fig:result}(b) shows, the runtime is essentially linear in the number of balls.
Moreover, most of the runtime is spent constructing the balls, so the precision used to compute their centers has little effect on the overall runtime (\autoref{fig:result}(c)).


\definecolor{cTn}{RGB}{31,119,180}
\definecolor{cSTR}{RGB}{255,127,14}
\definecolor{cSTE}{RGB}{44,160,44}
\definecolor{cSTE4}{RGB}{214,39,40}
\definecolor{cLim}{RGB}{148,103,189}

\pgfplotsset{
  styTn/.style  ={cTn,   mark=*},
  stySTR/.style ={cSTR,  mark=square*},
  stySTE/.style ={cSTE,  mark=triangle*},
  stySTE4/.style={cSTE4, mark=diamond*},
  styLim/.style ={cLim,  mark=pentagon*},
}

\pgfplotsset{
  discard if not/.style 2 args={
    x filter/.append code={
      \edef\tempa{\thisrow{#1}}%
      \edef\tempb{#2}%
      \ifx\tempa\tempb\else\def\pgfmathresult{nan}\fi
    }
  }
}

\begin{figure}
\centering
\begin{tikzpicture}
\begin{groupplot}[
    group style={group size=2 by 2, horizontal sep=3cm, vertical sep=2cm},
    width=0.4\textwidth, height=0.4\textwidth,
    every axis plot/.append style={mark size=1.5pt, line width=0.7pt},
    grid=both, grid style={gray!25, very thin},
    tick label style={font=\scriptsize},
    label style={font=\small}, title style={font=\small},
  ]

  \nextgroupplot[
      ymode=log, xmin=0.5, xmax=15.5, xtick={1,2,4,...,14},
      xlabel={precision $n$}, ylabel={runtime (s)},
      title={(a)},
      legend cell align=left, legend columns=3, legend to name=timinglegend,
      legend style={font=\scriptsize, draw=none,
                    /tikz/every even column/.append style={column sep=5pt}}]
    \addplot[styTn]   table[col sep=comma, x=n, y=secs,
        discard if not={example}{Tn},     discard if not={bits}{53}]{timing_plot.csv};
  \addlegendentry{triangle $T$}
    \addplot[stySTR]  table[col sep=comma, x=n, y=secs,
        discard if not={example}{STRn},   discard if not={bits}{53}]{timing_plot.csv};
      \addlegendentry{Sierpinski (right)}
    \addplot[stySTE]  table[col sep=comma, x=n, y=secs,
        discard if not={example}{STEn},   discard if not={bits}{53}]{timing_plot.csv};
      \addlegendentry{Sierpinski (equil.)}
    \addplot[stySTE4] table[col sep=comma, x=n, y=secs,
        discard if not={example}{STE4n},  discard if not={bits}{53}]{timing_plot.csv};
      \addlegendentry{Sierpinski (equil., 4-map)}
    \addplot[styLim]  table[col sep=comma, x=n, y=secs,
        discard if not={example}{STRLim}, discard if not={bits}{53}]{timing_plot.csv};
      \addlegendentry{Sierpinski (right, limit)}

  \nextgroupplot[
      xmode=log, ymode=log,
      xlabel={number of balls}, ylabel={runtime (s)},
      title={(b)}]
    \addplot[styTn]   table[col sep=comma, x=balls, y=secs,
        discard if not={example}{Tn},     discard if not={bits}{53}]{timing_plot.csv};
    \addplot[stySTR]  table[col sep=comma, x=balls, y=secs,
        discard if not={example}{STRn},   discard if not={bits}{53}]{timing_plot.csv};
    \addplot[stySTE]  table[col sep=comma, x=balls, y=secs,
        discard if not={example}{STEn},   discard if not={bits}{53}]{timing_plot.csv};
    \addplot[stySTE4] table[col sep=comma, x=balls, y=secs,
        discard if not={example}{STE4n},  discard if not={bits}{53}]{timing_plot.csv};
    \addplot[styLim]  table[col sep=comma, x=balls, y=secs,
        discard if not={example}{STRLim}, discard if not={bits}{53}]{timing_plot.csv};
    \addplot[dashed, gray, no marks] coordinates {(5, 9.055e-5) (1.5e7, 271.6)};
    \node[font=\scriptsize, gray, anchor=west] at (axis cs:2e4, 8e-2) {slope $1$};

  \nextgroupplot[
      xmin=0.5, xmax=15.5, ymin=0.5, ymax=1.50, xtick={1,2,4,...,14},
      xlabel={precision $n$}, ylabel={ratio},
      title={(c)}]
    \addplot[dashed, black, no marks] coordinates {(5.5,1) (15.5,1)};
    \addplot[styTn]   table[col sep=comma, x=n, y=ratio,
        discard if not={example}{Tn}]{ratios.csv};
    \addplot[stySTR]  table[col sep=comma, x=n, y=ratio,
        discard if not={example}{STRn}]{ratios.csv};
    \addplot[stySTE]  table[col sep=comma, x=n, y=ratio,
        discard if not={example}{STEn}]{ratios.csv};
    \addplot[stySTE4] table[col sep=comma, x=n, y=ratio,
        discard if not={example}{STE4n}]{ratios.csv};
    \addplot[styLim]  table[col sep=comma, x=n, y=ratio,
        discard if not={example}{STRLim}]{ratios.csv};

\end{groupplot}
\end{tikzpicture}\\[2pt]
\ref{timinglegend}
\caption{Runtime of the extracted programs, averaged over three runs. \textbf{(a)}~wall-clock time against the precision parameter $n$, with real centers computed to double precision. \textbf{(b)}~the same data against the number of balls in the covering. \textbf{(c)}~ratio $t_{106}/t_{53}$ of runtime with double-double centers $t_{106}$ to double centers $t_{53}$.}
\label{fig:result}
\end{figure}

\begin{figure}
\centering
\begin{subfigure}{\textwidth}
  \centering
  \includegraphics[width=0.23\textwidth]{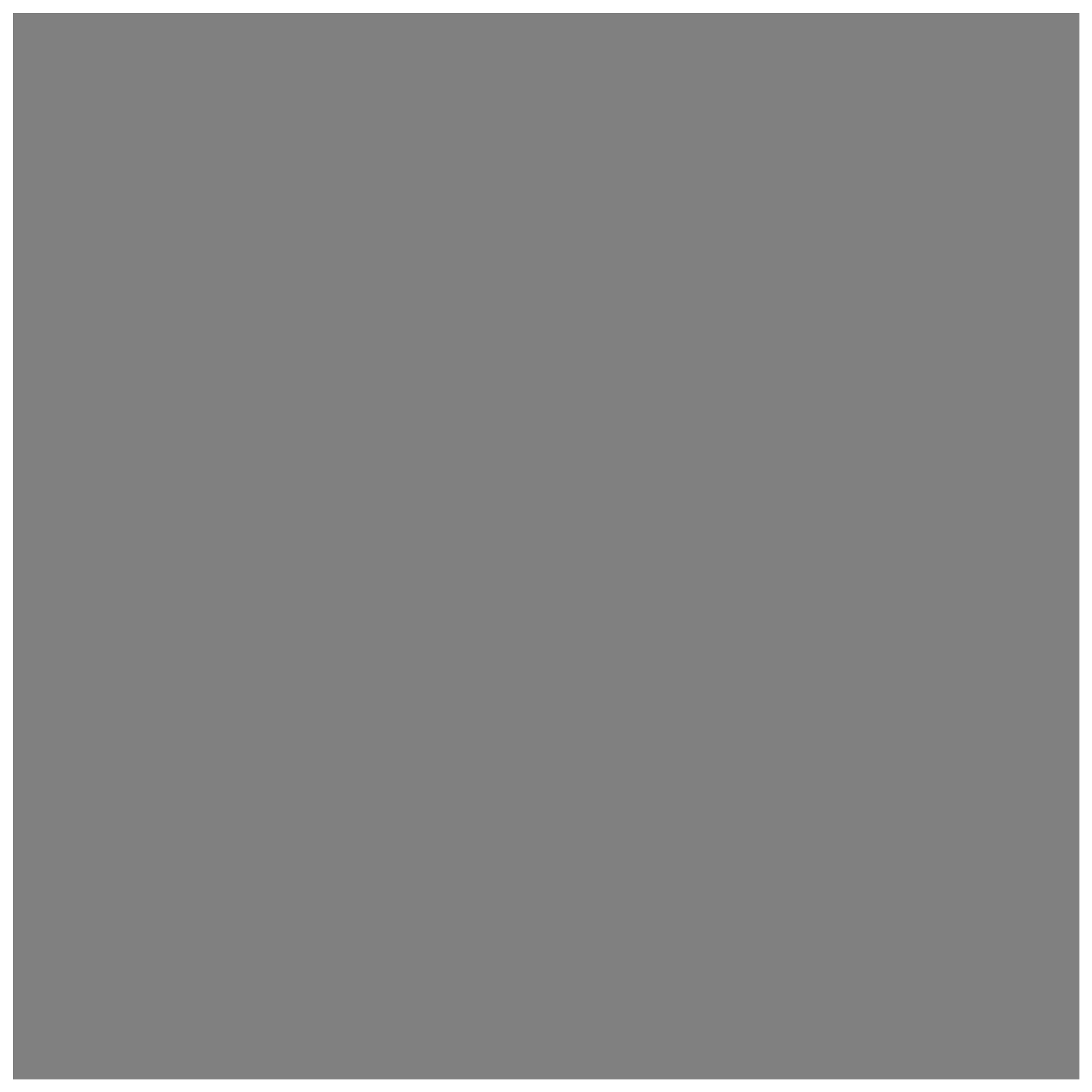}\hfill
  \includegraphics[width=0.23\textwidth]{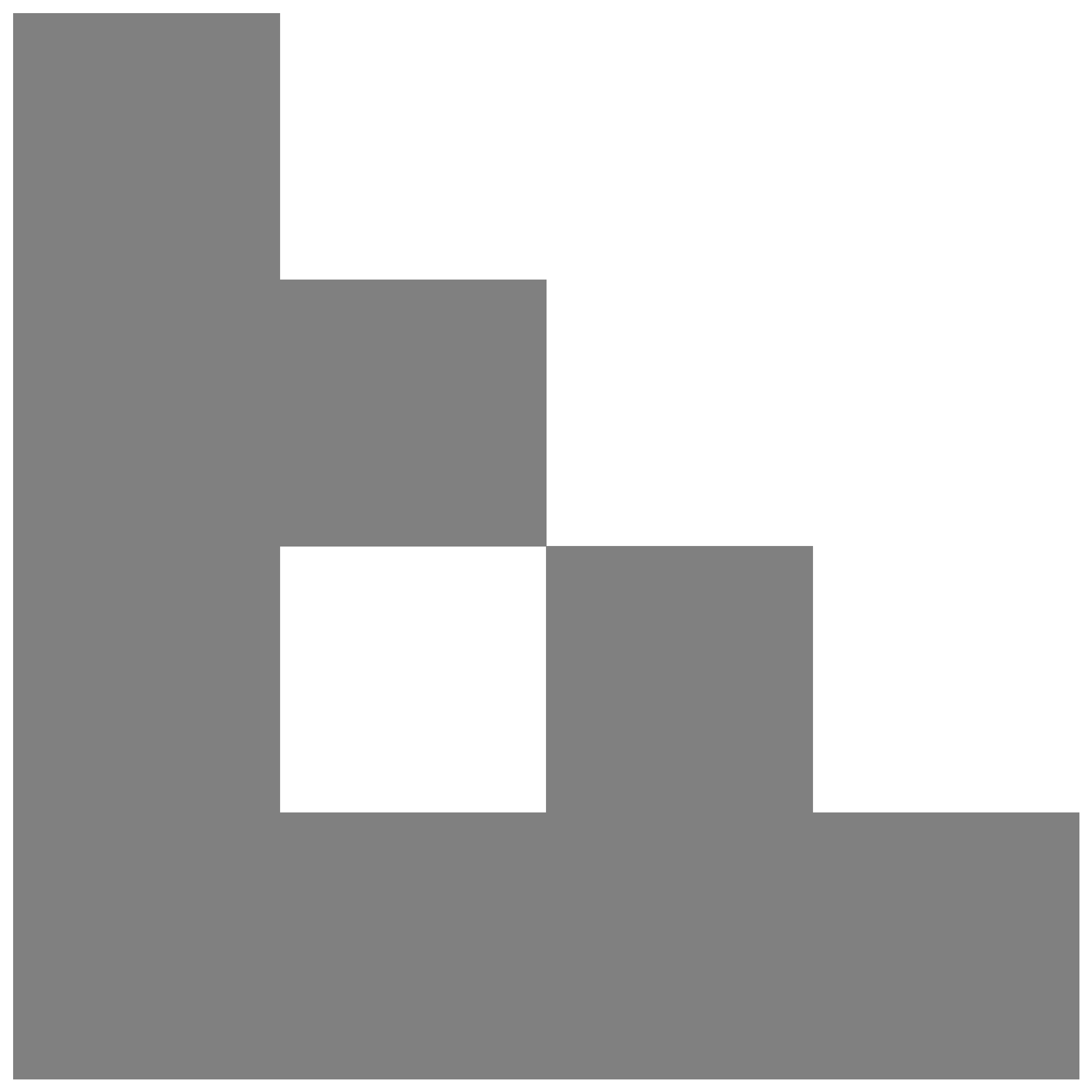}\hfill
  \includegraphics[width=0.23\textwidth]{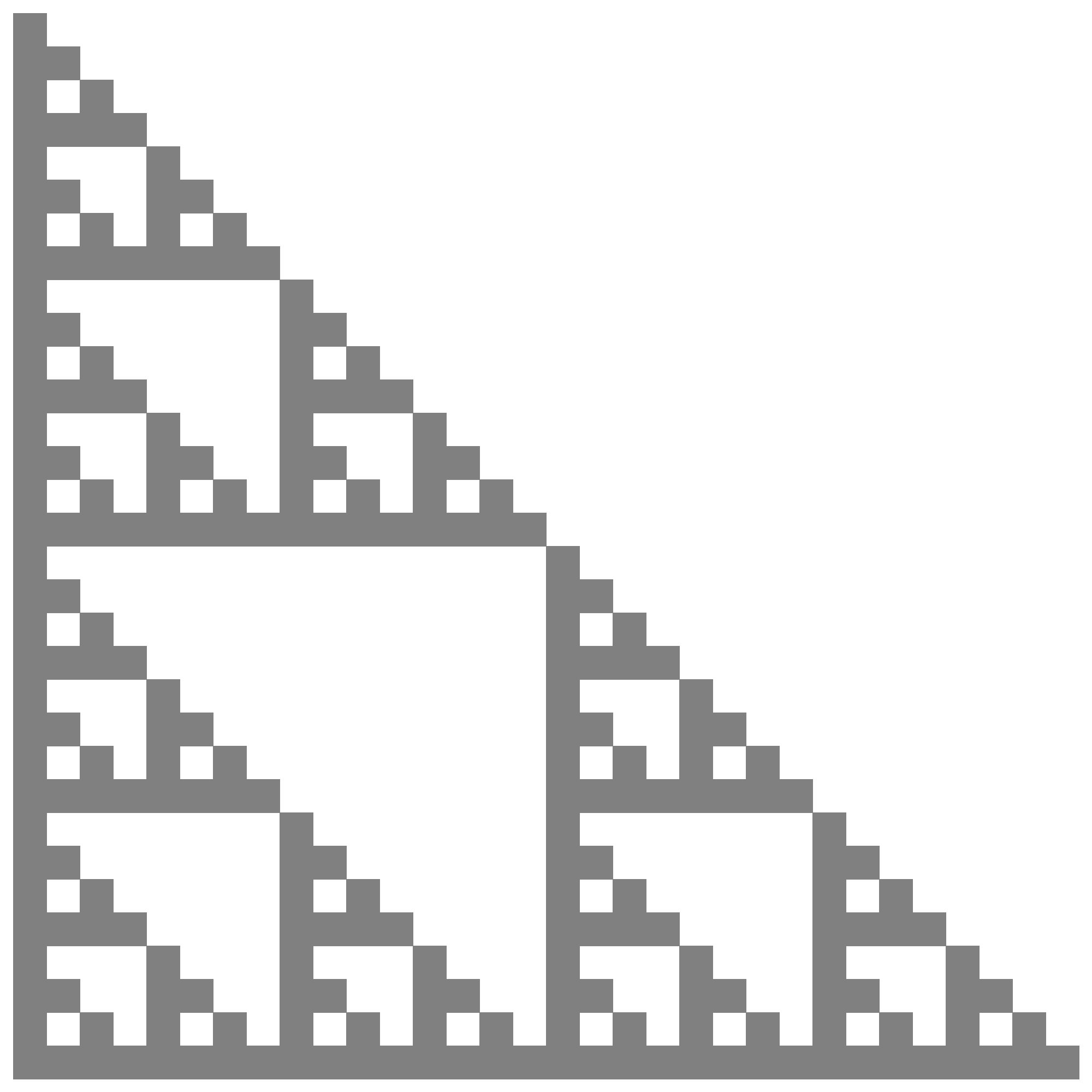}\hfill
  \includegraphics[width=0.23\textwidth]{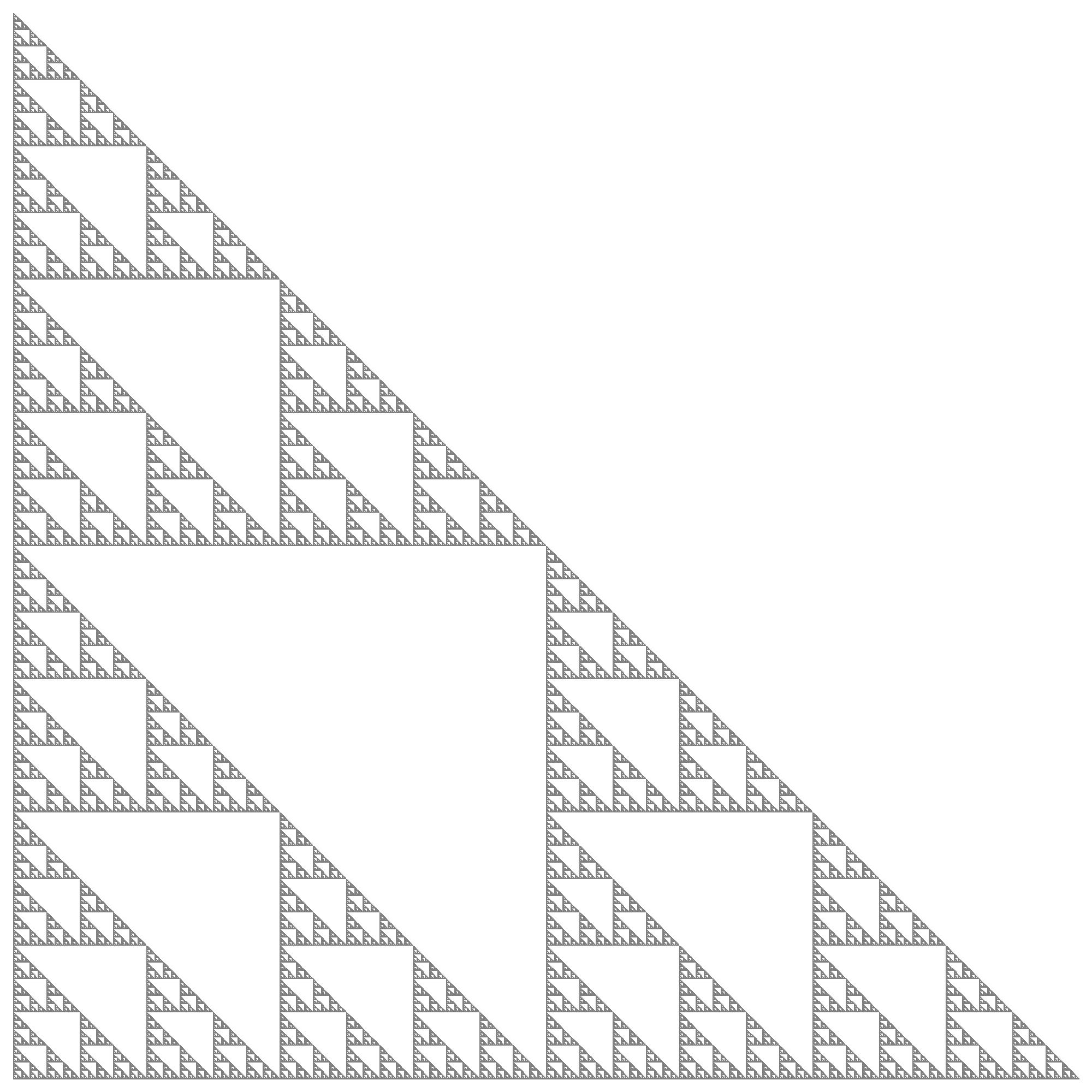}
  \caption{Right-angled Sierpinski triangle.}
\end{subfigure}

\begin{subfigure}{\textwidth}
  \centering
  \includegraphics[width=0.23\textwidth]{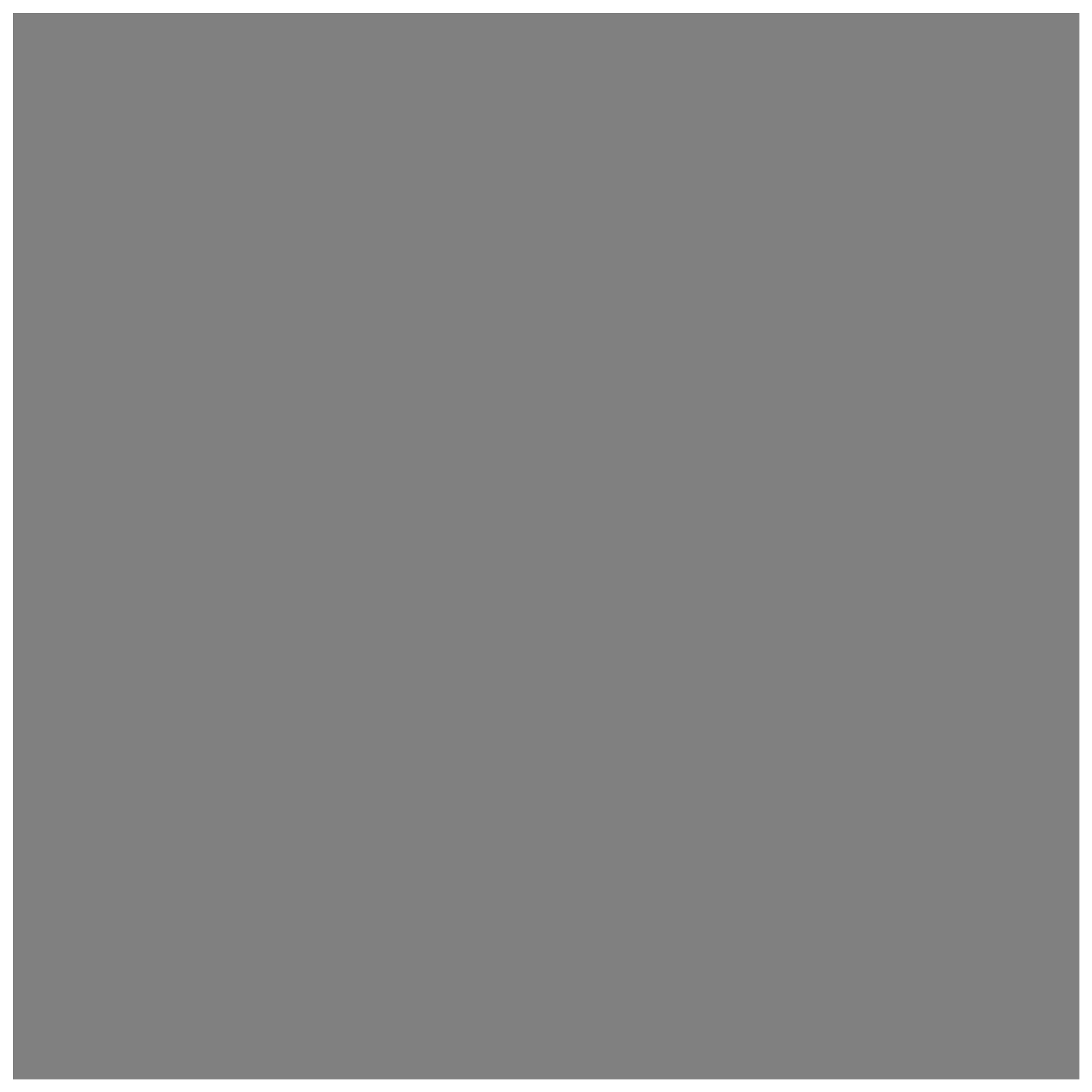}\hfill
  \includegraphics[width=0.23\textwidth]{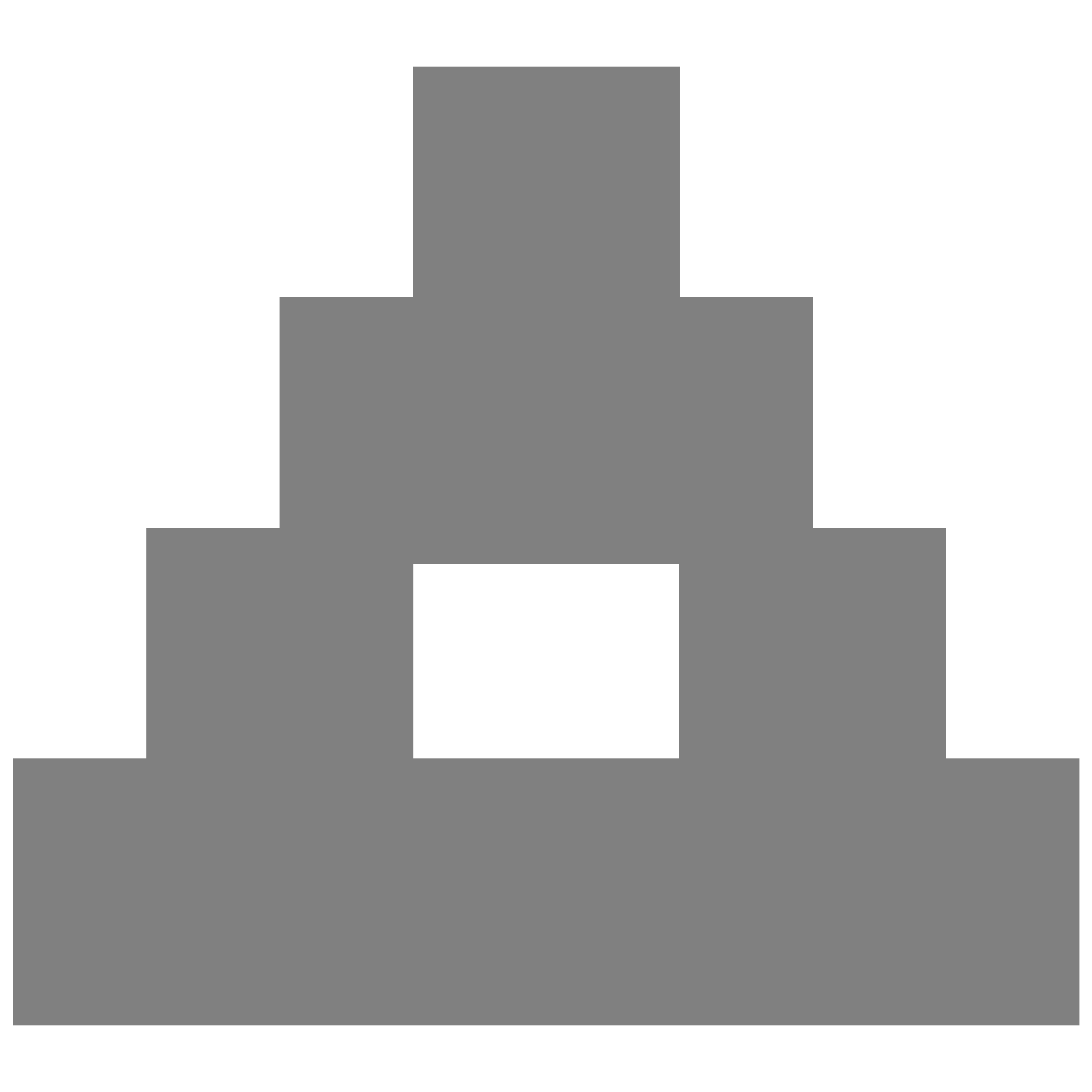}\hfill
  \includegraphics[width=0.23\textwidth]{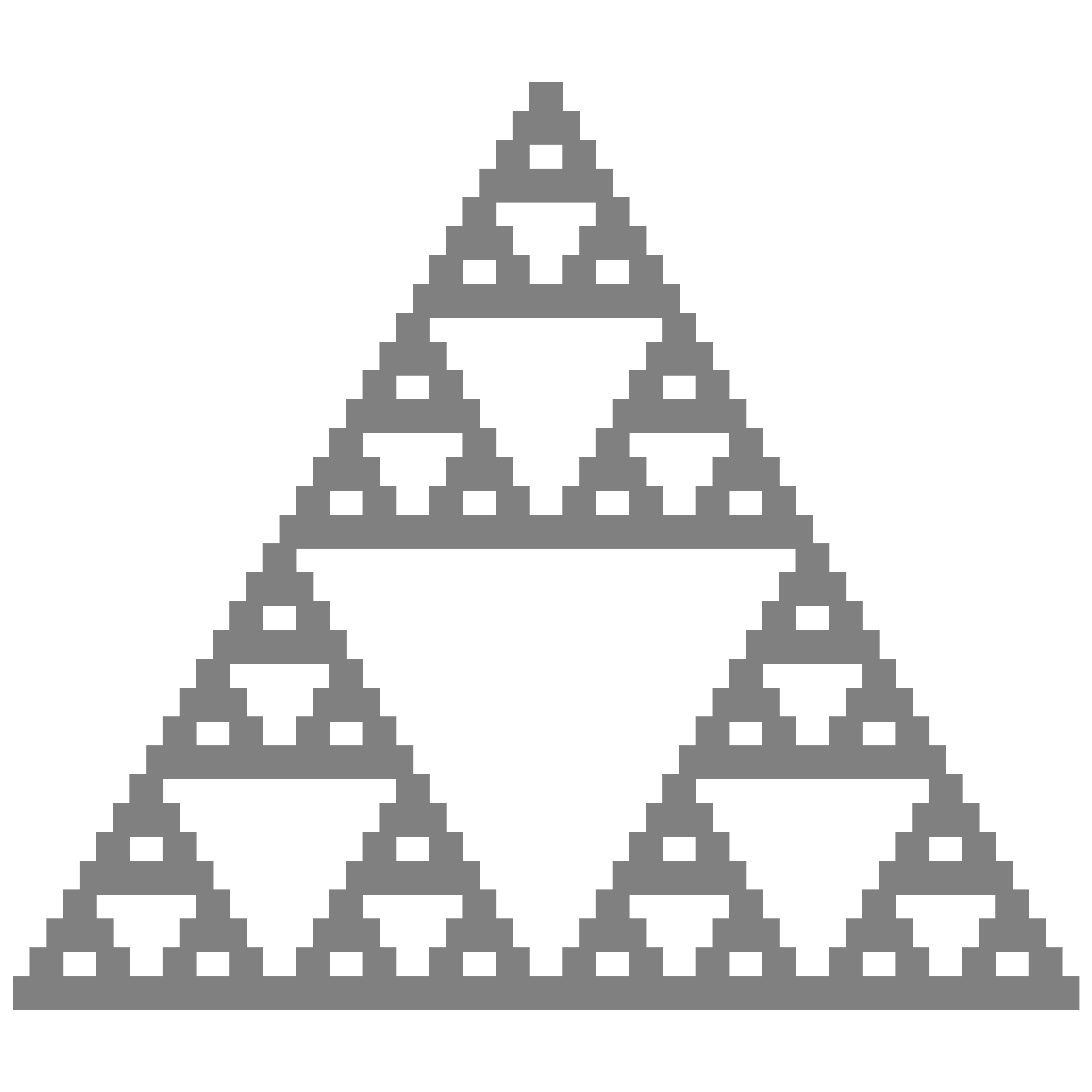}\hfill
  \includegraphics[width=0.23\textwidth]{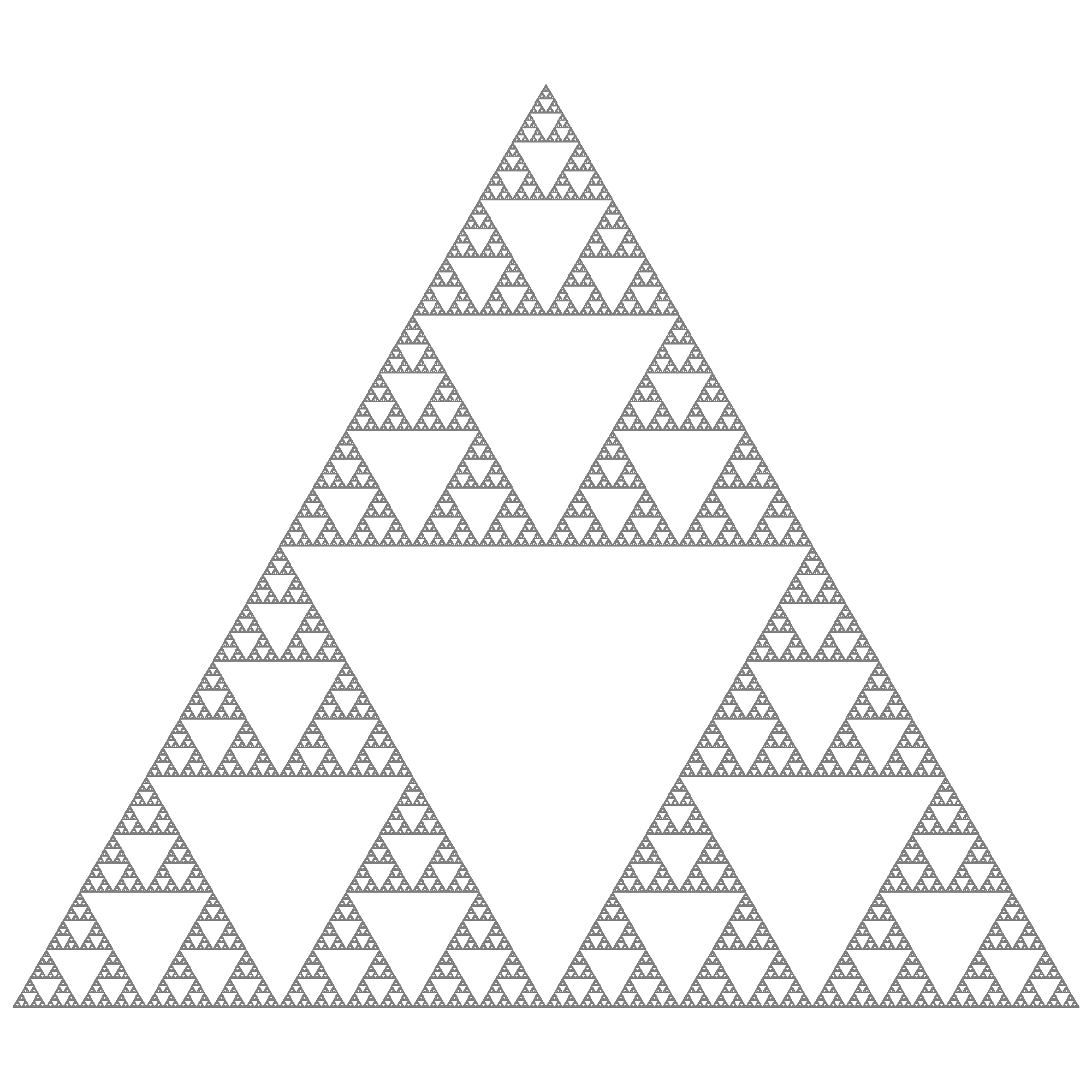}
  \caption{Equilateral Sierpinski triangle.}
\end{subfigure}

\begin{subfigure}{\textwidth}
  \centering
  \includegraphics[width=0.23\textwidth]{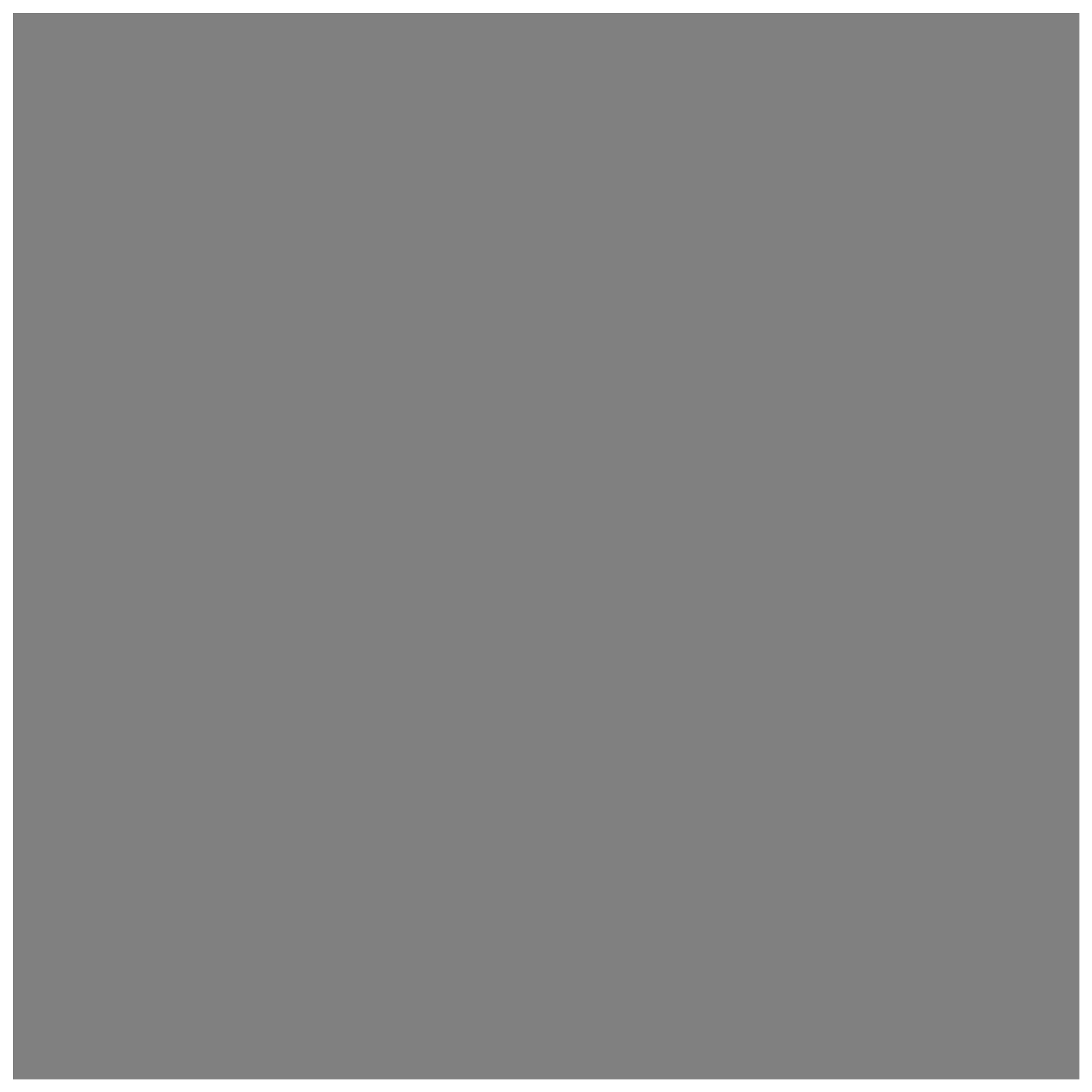}\hfill
  \includegraphics[width=0.23\textwidth]{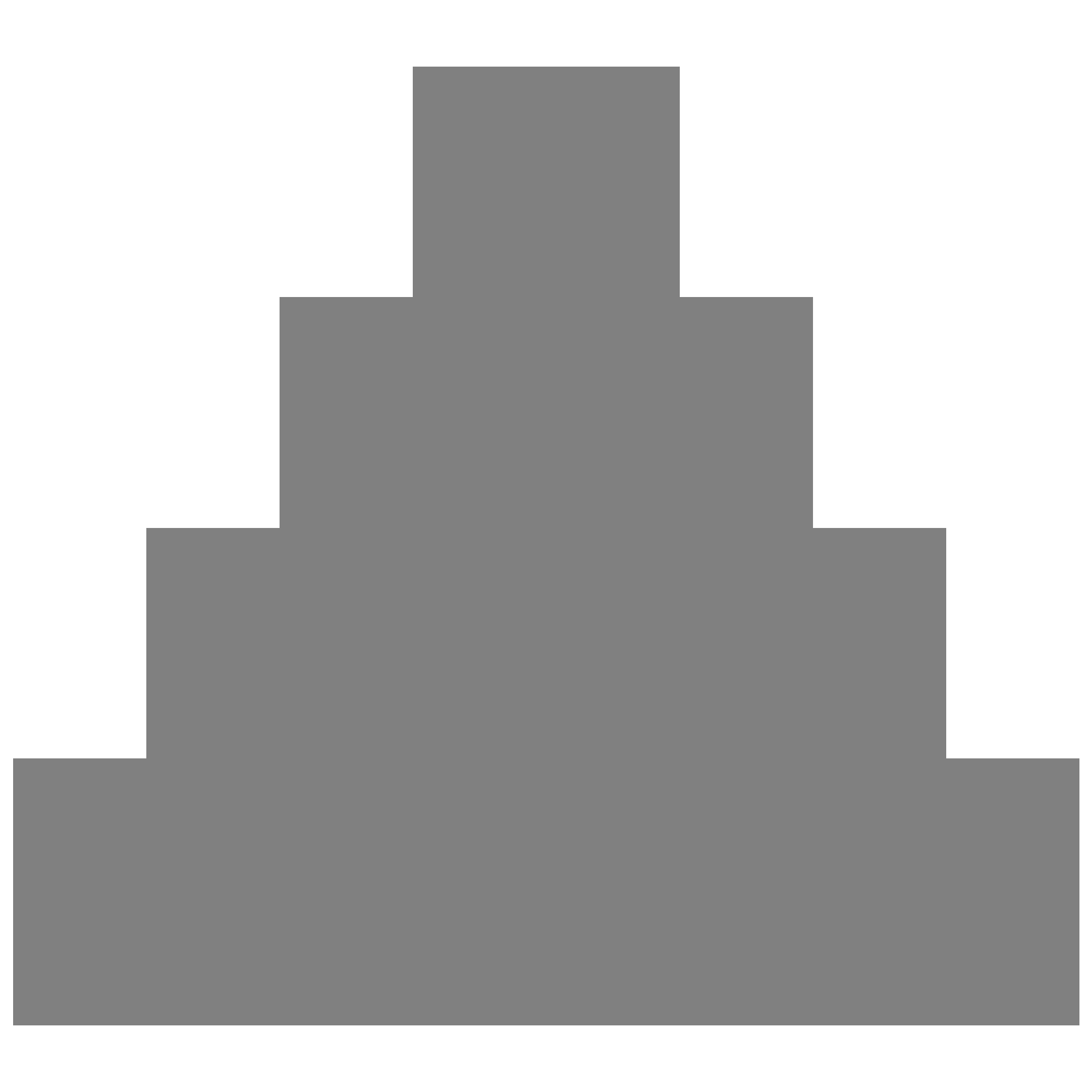}\hfill
  \includegraphics[width=0.23\textwidth]{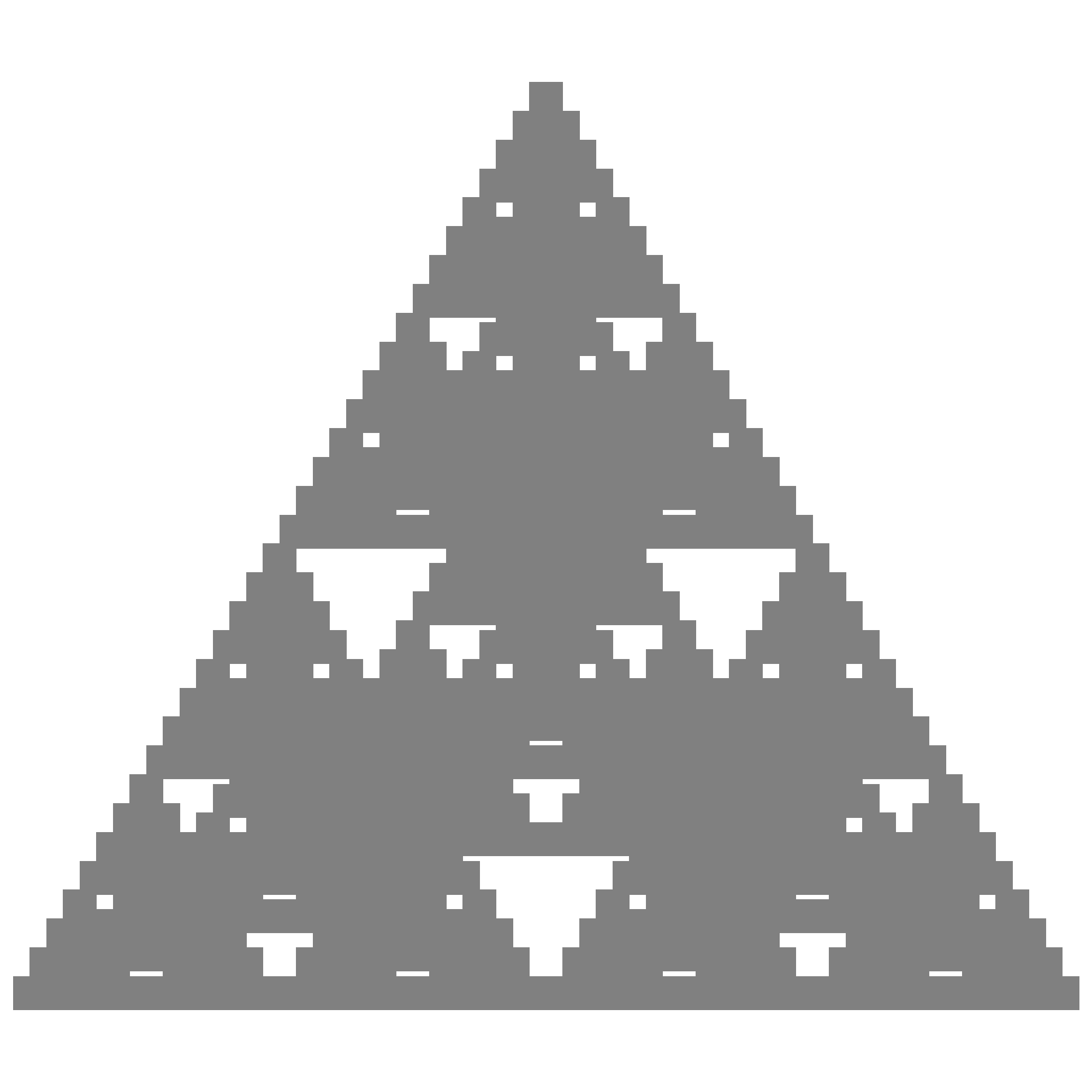}\hfill
  \includegraphics[width=0.23\textwidth]{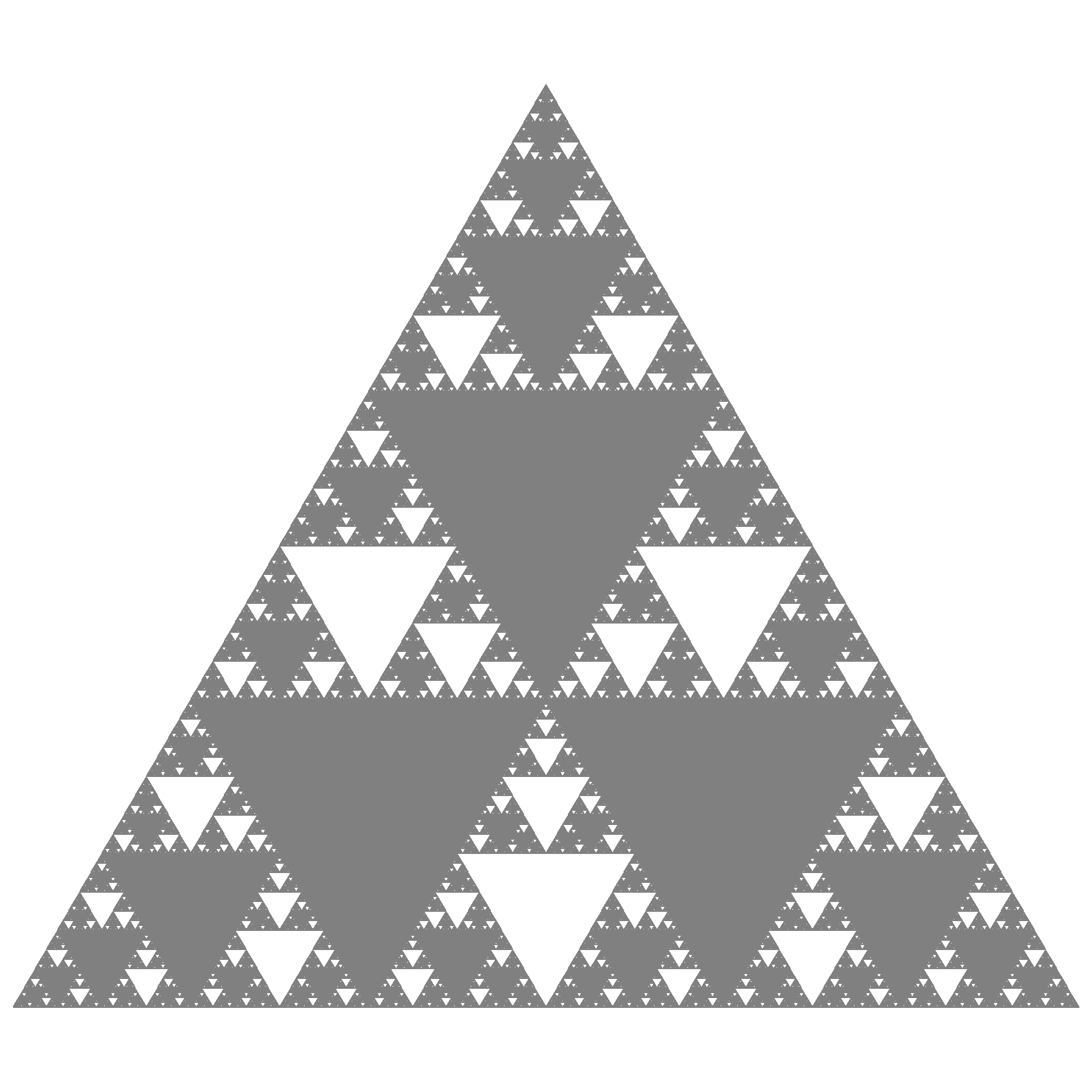}
  \caption{Equilateral Sierpinski triangle with a fourth map.}
\end{subfigure}
\caption{Approximations of three Sierpinski triangles for $n = 0, 1, 5, 10$,
  drawn with balls of radius $2^{-n-1}$ whose centers are approximated to
  within $2^{-53}$.}
  \label{fig:sierpinski-triangle}
\end{figure}

\section{Conclusion and Future Work}
We considered procedures to work with different classes of subsets in a computational setting.
The main focus of this work was on subsets of Polish spaces, although we also provide a framework for subsets of generic represented spaces.
Formal proofs in our system allow us to extract programs to compute several operations on subsets.
We consider simple operations on subsets to build new subsets and extract programs for certified drawings of compact-overt subsets up to any desired resolution using the fact that those subsets are totally bounded.

Our main example in this work is Euclidean space, for which we show how to draw simple fractals with any desired precision.
However, the general framework of Polish spaces allows for several other applications such as computation over function spaces, Hilbert spaces and so on, which we plan to address in future work.
This would also allow for extensions to more advanced operations like integration and solution operators for ordinary differential equations \cite{kawamura2018parameterized,nedialkov2006interval,thies2026ode}.

Another direction for future work is to address more interesting operations on Euclidean space by computing images of subsets under some class of functions.
While in the current work, we already showed how to compute images under arbitrary functions, the process itself is not very efficient and  thus unlikely to be used for actual computation.
Thus, considering more efficient algorithms for computing images of, e.g.,   polynomials would provide many more interesting applications.
One possibility that comes to mind is using polynomial models such as Taylor models \cite{makino2003taylor} to represent subsets of Euclidean space.
In follow-up work \cite{DBLP:conf/itp/0001T24} we already provide a simple implementation of Taylor models in our framework, albeit with another application in mind.
In future work we plan to combine the approach with the work presented in this paper.

\section*{Acknowledgment}
  \noindent 
  The authors thank the anonymous reviewers for their helpful comments and suggestions. They also thank Ulrich Berger, Pieter Collins, Arno Pauly and Hideki Tsuiki for interesting discussions.
  
  \noindent 
Claude Opus 4.8 was used to assist in preparing the Python script used for the experimental evaluation. The authors reviewed and verified all generated code. No AI tools were used in the development of the Coq formalization.
%
\bibliographystyle{alphaurl}
\bibliography{refs}

\end{document}